\newcounter{MYtempeqncnt}
\newcommand{\expect}[1]{{\mathbb{E}\left[{#1}\right]}}
\newcommand{\cexpect}[2]{{\mathbb{E}_{#2}\left[{#1}\right]}}
\newcommand{\pr}{{\mathbb{P}}}
\newcommand{\ppr}{{\mathbb{P}^o}}
\newcommand{\mgf}{\mathcal{L}}
\newcommand{\loadpmf}{\mathsf{K}}
\newcommand{\varpl}{\xi^2}
\newcommand{\sdpl}{\xi}
\newcommand{\meen}{m}
\newcommand{\pcf}{\epsilon}
\newcommand{\ple}{\alpha}
\newcommand{\rple}{\alpha}
\newcommand{\power}{\mathrm{P}}
\newcommand{\res}{\mathrm{B}}
\newcommand{\fsldb}{\beta}
\newcommand{\noisepower}{\sigma^2_{N}}
\newcommand{\dnsty}{{\lambda}}
\newcommand{\wfrac}{\omega}
\newcommand{\loadfrac}{\kappa}
\newcommand{\dnstyr}{\Lambda}
\newcommand{\ndnstyr}{\mathrm{M}}
\newcommand{\userdnsty}{\lambda_u}
\newcommand{\SINRthresh}{\tau} 
\newcommand{\tmin}{\tau_\mathrm{min}}
\newcommand{\Q}{\mathrm{Q}}
\newcommand{\RATEthresh}{\rho}
\newcommand{\nRATEthresh}{\hat{\rho}}
\newcommand{\uRATEthresh}{v}
\newcommand{\real}[1]{\mathbb{R}^{#1}}
\newcommand{\uth}[1]{#1^\text{th}}
\newcommand{\utmoment}{\bar{\zeta} }
\newcommand{\ltmoment}{\underline{\zeta} }
\newcommand{\SINR}{\mathtt{SINR}}
\newcommand{\pl}{L}
\newcommand{\plrnd}{\chi}
\newcommand{\SIR}{\mathtt{SIR}}
\newcommand{\SNR}{\mathtt{SNR}}
\newcommand{\INR}{\mathtt{INR}}
\newcommand{\totI}{I_t}
\newcommand{\intrfrnc}{I}
\newcommand{\IF}{J}
\newcommand{\assocr}{\mathcal{C}}
\newcommand{\rate}{\mathtt{Rate}}
\newcommand{\PPP}{\Phi}
\newcommand{\BSP}{\Phi}
\newcommand{\WBSP}{\Phi_w}
\newcommand{\PropPP}{\mathcal{N}}
\newcommand{\load}[1]{{N}_{#1}}
\newcommand{\avload}[1]{\bar{N}_{#1}} 
\newcommand{\chanl}{H}
\newcommand{\sha}{S}
\newcommand{\indic}{\mathbbm{1}}
\newcommand{\sbs}{X^*}
\newcommand{\swbs}{Y^*}
\newcommand{\pcov}{\mathcal{P}}
\newcommand{\scov}{\mathcal{S}}
\newcommand{\rcov}{\mathcal{R}}
\newcommand{\passoc}{\mathcal{A}}
\newcommand{\plos}{\mathrm{C}} 
\newcommand{\af}{\eta}
\newcommand{\dlos}{\mathrm{D}}
\newcommand{\again}{\psi}
\newcommand{\antgain}{\mathrm{G}} 
\newcommand{\bwidth}{\theta_b} 
\newcommand{\antgainmax}{\mathrm{G_{max}}}
\newtheorem{thm}{{\bf Theorem}}
\newtheorem{cor}{Corollary}
\newtheorem{rem}{Remark}
\newtheorem{lem}{Lemma}
\newtheorem{prop}{Proposition}
\theoremstyle{definition}
\theoremstyle{thm}
\newtheorem{asmptn}{Assumption}
\begin{document}

\title{Tractable Model for Rate  in Self-Backhauled Millimeter Wave Cellular Networks}

\author{\IEEEauthorblockN{Sarabjot Singh, Mandar N. Kulkarni, Amitava Ghosh, and  Jeffrey G. Andrews }

\thanks{This work has been supported by   Nokia.   A part of this paper is accepted for presentation at IEEE Asilomar, November 2014 in Pacific Grove, CA, USA \cite{SinKulAndConf14}.

 M. N. Kulkarni ({\tt mandar.kulkarni@utexas.edu})  and J. G. Andrews ({\tt jandrews@ece.utexas.edu}) are with Wireless Networking and Communications Group (WNCG), The University of Texas at Austin, USA. S. Singh ({\tt sarabjot.singh@nokia.com}) and A. Ghosh ({\tt amitava.ghosh@nsn.com}) are with Nokia at Berkeley and at Arlington Heights respectively. S. Singh was also with WNCG. 
}}
\maketitle

\begin{abstract}
Millimeter wave (mmWave) cellular systems will require high gain directional antennas  and dense base station (BS) deployments to overcome high near field path loss and poor diffraction. As a desirable side effect, high gain antennas offer interference isolation, providing an opportunity to incorporate \textit{self-backhauling}--BSs backhauling among themselves in a mesh architecture without significant loss in throughput--to enable the requisite large BS densities.  The use of directional antennas  and resource sharing between access and backhaul links leads to  coverage and rate trends  that differ significantly from  conventional ultra high frequency (UHF)  cellular systems.  In this paper, we propose a general and tractable mmWave cellular model capturing these key trends and characterize the associated rate distribution. The developed model and analysis is validated using actual building locations from  dense urban settings and empirically-derived path loss models. The analysis shows that   in sharp contrast to the interference-limited nature of UHF cellular networks, the \emph{spectral efficiency}   of mmWave networks  (besides total rate) also increases with BS density particularly at the cell edge.  Increasing the system bandwidth, although boosting median and peak rates, does not significantly influence the cell edge rate. With self-backhauling, different combinations of the wired backhaul fraction (i.e. the fraction of BSs with a wired connection) and  BS density are shown to guarantee the same median rate (QoS).
\end{abstract}
\begin{IEEEkeywords}
Millimeter wave networks, backhaul, self-backhauling, heterogeneous networks, stochastic geometry. 
\end{IEEEkeywords}

\section{Introduction}
The scarcity of ``beachfront" UHF ($300$ MHz- $3$ GHz) spectrum   and surging wireless traffic demands has made going higher in  frequency for terrestrial communications inevitable.  The capacity boost provided by   increased Long Term Evolution (LTE) deployments  and  aggressive small cell, particularly Wi-Fi, offloading  has, so far, been able to cater to the increasing traffic demands,  but to meet  the projected \cite{cisco} traffic needs of $2020$ (and beyond)  availability of large amounts of  new spectrum would be indispensable. The only place where a significant amount of unused or lightly used  spectrum is available is in the millimeter wave (mmWave) bands ($20-100$ GHz).  With many GHz of spectrum to offer, mmWave bands are becoming increasingly attractive as one of the front runners  for the next generation (a.k.a. ``$5$G")  wireless cellular networks \cite{BocHea14,Rap13,And5G14}. 

\subsection{Background and recent work}
\textbf{Feasibility of mmWave cellular.} Although mmWave based indoor and personal area networks  have already received considerable  traction \cite{baykas11,Dan60},  such frequencies have long  been  deemed unattractive  for cellular communications primarily due to the large near-field loss and poor penetration  (blocking) through  concrete, water, foliage, and other common material.   Recent  research efforts \cite{PiKha, Rap13,RohMag14,RapTransAnt13,RanRapErk14,AkdRanRap13,Larew13,GhoshJSAC} have, however,  seriously challenged  this widespread perception. In principle,  the  smaller wavelengths associated with mmWave allow  placing many  more miniaturized antennas in the same physical area, thus compensating for the  near-field path loss \cite{PiKha,RohMag14}. Communication ranges of 150-200m  have   been shown to be  feasible  in dense urban scenarios  with the use of such high gain directional antennas \cite{Rap13,RapTransAnt13,RohMag14}.  Although mmWave signals do indeed penetrate and diffract  poorly through urban clutter, dense urban environments offer rich multipath (at least for outdoor) with strong reflections; making non-line-of-sight (NLOS) communication   feasible with  familiar path loss exponents in the range of  $3$-$4$ \cite{Rap13,RohMag14}. 
Dense and directional mmWave networks  have been shown to exhibit a  similar spectral efficiency to   $4$G (LTE) networks (of the same density) \cite{RanRapErk14,AkdRanRap13}, and hence can achieve an order of magnitude gain in  throughput  due to the increased  bandwidth.

\textbf{Coverage trends in mmWave cellular.} With  high gain  directional antennas and newfound sensitivity to blocking, mmWave coverage trends will be quite different from previous cellular networks. Investigations via detailed system level simulations   \cite{RanRapErk14,AkdRanRap13, AboChar13,Larew13,GhoshJSAC,KulGC14} have shown large bandwidth mmWave  networks  in   urban settings\footnote{Note that  capacity crunch is also most severe in such dense urban scenarios.}  tend to be noise limited--i.e.  thermal noise dominates interference--in contrast to $4$G cellular networks, which are usually strongly interference limited.  As a result, mmWave outages are mostly due to a  low signal-to-noise-ratio ($\SNR$) instead of low signal-to-interference-ratio ($\SIR$).  This insight was also highlighted in  an earlier work  \cite{Madhow11} for   directional   mmWave ad hoc networks.  Because cell edge users    experience low  $\SNR$ and are power limited, increased  bandwidth leads to little or no gain in their rates as compared to the median or peak rates \cite{AkdRanRap13}. Note that rates were compared with a $4$G network in \cite{AkdRanRap13}, however, in this paper we also investigate the effect of  bandwidth on rate in mmWave regime. 

\textbf{Density and backhaul.} As highlighted in   \cite{Larew13,RanRapErk14,AkdRanRap13,PiKha,AboChar13,GhoshJSAC}, dense BS deployments are essential for  mmWave networks to achieve acceptable coverage and rate. This  poses a particular challenge for the backhaul network, especially given the huge rates stemming from mmWave bandwidths on the order of GHz.   However,   the interference isolation provided by narrow directional beams    provides a  unique opportunity    for    scalable  backhaul architectures \cite{PiKha,IDBackhaul,BackhaulSS14}.  Specifically, \textit{self-backhauling} is a natural and scalable solution \cite{IDBackhaul,BackhaulSS14,kim2014system},   where BSs with   wired backhaul provide for the backhaul of BSs without it using a mmWave link. This architecture is quite different from the mmWave based point-to-point backhaul \cite{Eric_CoMag} or the relaying architecture \cite{steven2009relay} already in use, as \begin{inparaenum}[(a)]  \item the BS with wired backhaul serves multiple BSs,  and \item access and backhaul links share the total pool of available resources at each BS\end{inparaenum}. This results in a multihop network, but one in which the hops need not interfere, which is what largely doomed previous attempts at mesh networking.  However,  both the load on the backhaul  and access link impact the eventual user rate, and
 a general and tractable model that integrates the backhauling architecture into the analysis of a mmWave cellular network seems important to develop.  The main objective  of this work is to address this. As we show,   the very notion of a coverage/association cell  is strongly questionable  due to the sensitivity of mmWave to blocking in dense urban scenarios. Characterizing the load and   rate  in such networks, therefore, is non-trivial   due to the formation of irregular and ``chaotic" association cells (see Fig. \ref{fig:ManAssoc}).

\textbf{Relevant models.} Recent work in developing models for the analysis of mmWave  cellular networks (ignoring  backhaul) includes   \cite{AkoHea,BaiHea14,BaiHeaSIP}, where the downlink $\SINR$ distribution is characterized assuming BSs to be spatially distributed according to a Poisson point process (PPP).   No blockages were assumed  in \cite{AkoHea}, while  \cite{BaiHea14} proposed a framework to derive $\SINR$ distribution with an isotropic blockage model, and derived the expressions for  a line of sight (LOS) ball based blockage model  in which all nearby BSs were assumed LOS and all BSs beyond a certain distance from the user were ignored.   This LOS ball blockage model can be interpreted as a step function approximation of the  exponential blockage model proposed in  \cite{BaiVazHea} and used in \cite{BaiHeaSIP}. The randomness in the distance-based path loss (shown  to be quite significant in empirical studies \cite{GhoshJSAC}), was however ignored in prior analytical works.  Coverage was shown \cite{BaiHea14}  to  exhibit a non-monotonic trend with BS density. In this work, however, we show that if the finite user population is taken into account (ignored in \cite{BaiHea14}), $\SINR$ coverage  may increase monotonically  with density.     Although characterizing $\SINR$ is important, rate is the key metric, and can follow quite different trends \cite{SinDhiAnd13,AndLoadCommag13} than $\SINR$ because the user load is essentially a pre-log factor  whereas $\SINR$ is inside the log in the Shannon capacity formula. 
\subsection{Contributions}
The major contributions of this paper can be categorized broadly as follows:\\
\textbf{Tractable mmWave  cellular model}. A tractable and general model is proposed in Sec. \ref{sec:sysmodel} for characterizing   coverage and   rate distribution in self-backhauled mmWave  cellular networks. The proposed blockage model allows for an adaptive fraction of area around each user to be LOS.  Assuming the BSs are distributed according to   a PPP, the analysis, developed in Sec. \ref{sec:analysis},   accounts for   different path losses (both mean and variance) of LOS/NLOS links  for both access and backhaul--consistent with empirical studies \cite{Rap13,GhoshJSAC}. We identify and characterize two types of association cells in self-backhauled networks: \begin{inparaenum}[(a)] \item  \textit{user association area} of a BS which impacts the load on the access link, and \item \textit{BS association area} of a BS with wired backhaul required for quantifying  the load on the backhaul link. \end{inparaenum}  The rate distribution  across the entire network, accounting for the random backhaul and  access link capacity,   is then characterized  in Sec \ref{sec:analysis}. Further, the analysis is extended  to derive  the rate distribution with offloading to and from a co-existing UHF macrocellular network.\\
\textbf{Validation of model and analysis}. In Sec. \ref{sec:validation}, the analytical rate distribution  derived from  the proposed model is compared with that obtained from simulations employing actual building locations in dense urban regions of    New York  and Chicago \cite{KulGC14}, and empirically measured path loss models \cite{GhoshJSAC}. The demonstrated  close match between  the analysis and simulation  validates the proposed blockage model and our analytical approximation of  the irregular association areas and load.\\
\textbf{Performance insights.} Using the developed framework, it is demonstrated in Sec. \ref{sec:results} that: 
\begin{itemize}
\item MmWave  networks in dense urban scenarios employing high gain narrow beam antennas tend to be noise-limited for  ``moderate"   BS densities. Consequently,  densification of the network improves the  $\SINR$ coverage, especially for uplink.  
Incorporating the impact of finite user density, $\SINR$ coverage  can possibly  increase with density even in the very large density regime.
\item Cell edge users experience poor $\SNR$ and hence are particularly power limited. Increasing the air interface bandwidth, as a result, does not significantly improve the cell edge rate, in contrast to the  cell median or peak rates. Improving the density, however, improves the cell edge rate drastically.  Assuming all users to be mmWave capable, cell edge rates are also shown to improve by  reverting users to the UHF network whenever reliable mmWave communication  is unfeasible.
\item Self-backhauling is  attractive  due to  the diminished  effect of interference in  such networks. Increasing the fraction of BSs with wired backhaul, obviously, improves the peak rates in the network. Increasing the density of BSs while keeping the density of wired backhaul BSs constant in the network, however, leads to  saturation of user rate coverage. We characterize the corresponding \textit{saturation density} as the BS density beyond which marginal improvement in rate coverage would be observed without further wired backhaul provisioning.  The saturation density  is shown to be proportional to the density of BSs with wired backhaul.
\item  The same  rate coverage/median rate is shown to be achievable with  various combinations of (i) the fraction of wired backhaul BSs and (ii) the density of BSs.  A rate-density-backhaul contour is characterized, which  shows, for example, that  the same median rate can be achieved  through a higher fraction of wired backhaul BSs in sparse networks or a  lower fraction of  wired backhaul BSs in dense deployments.
\end{itemize}

\section{System Model}\label{sec:sysmodel}
\subsection{Spatial locations}
The  BSs  in the network are assumed to be distributed uniformly in $\real{2}$ as a  homogeneous  PPP  of density (intensity) $\dnsty_t$.  
 The PPP assumption is adopted  for  tractability, however other spatial models can be expected to exhibit similar trends due to the nearly constant  $\SINR$ gap over that of the  PPP \cite{ADG_Tcom}. 
A fraction $\frac{\mu}{\dnsty_t}$ and $\frac{\dnsty}{\dnsty_t}$ (assigned by independent marking, with $\mu+ \dnsty=\dnsty_t$) of the BSs are assumed to form the UHF macrocellular and mmWave  network respectively, and thus the  corresponding (independent) PPPs are:  $\PPP_\mu$ with density $\mu$ and $\PPP$  with density $\lambda$ respectively.
The users are also assumed to be uniformly distributed  as a  PPP $\PPP_u$ of density (intensity) $\userdnsty$ in $\real{2}$. 
 A  fraction  $\wfrac$ of the mmWave BSs (called anchored BS or A-BS  henceforth) have  wired backhaul and the rest of  mmWave BSs backhaul wirelessly to   A-BSs.  So,  the A-BSs      \textit{serve} the rest of the BSs in the network resulting in   two-hop links to the users associated with the BSs.  Independent \textit{marking}  assigns  wired backhaul (or not) to each mmWave BS and hence the resulting independent  point process of A-BSs $\PPP_w$ is   also a PPP   with density $\dnsty\wfrac$. 

 Notation   is summarized in Table \ref{tbl:param}. Capital roman font is used for parameters and italics for random variables.
 
\subsection{Propagation assumptions}
For   mmWave transmission, the power received at $y \in \real{2}$ from  a transmitter at $x \in\real{2} $ transmitting with power $\power(x)$ is given by $\power(x)\again(x,y)\pl(x,y)^{-1}$, where  $\again$ is the combined antenna gain  of the receiver and transmitter and  $\pl$ (dB)$  = \fsldb + 10\rple\log_{10}\|x-y\|+ \plrnd$ is the associated path loss in dB, where $\plrnd \sim \mathcal{N}(0,\varpl)$. Different strategies  can be adopted  for formulating  the path loss model from field measurements. If $\fsldb$ is constrained to be the path loss at a close-in reference distance, then $\rple$ is  physically interpreted as  the path loss exponent. But if these parameters  are obtained by a best linear fit,  then $\fsldb$ is the intercept and $\rple$ is the slope of the fit, and  no physical interpretation may be ascribed. The  deviation in fitting (in dB scale) is modeled as a zero mean Gaussian (Lognormal in linear scale) random variable  $\plrnd$ with variance   $\varpl$.     
Motivated by the studies in \cite{GhoshJSAC,Rap13}, which point to different LOS and NLOS path loss parameters for access (BS-user) and backhaul (BS-A-BS) links, the analytical model in this paper accommodates distinct $\fsldb$,  $\ple$, and $\varpl$ for each.    Each mmWave BS and user is assumed to  transmit  with power    $\power_b$ and $\power_u$, respectively,  over a bandwidth $\res$. The transmit power and bandwidth for  UHF BS  is denoted  by $\power_\mu$ and  $\res_\mu$ respectively.

All mmWave BSs are assumed to be equipped with directional antennas with a sectorized gain pattern. Antenna gain pattern for a BS as a function of angle $\theta$ about the steering  angle is given by 
\begin{equation*}
\antgain_b(\theta)= 
\begin{cases} \antgain_{\mathrm{max}} & \text{if $ |\theta| \leq \bwidth$}
\\\antgain_{\mathrm{min}} &\text{otherwise.}
\end{cases},
\end{equation*}
where $\bwidth$ is the beam-width or main lobe width. Similar abstractions have been used in  the prior study of directional ad hoc networks \cite{YiPei03}  and recently   mmWave networks \cite{AkoHea,BaiHea14}. The user antenna gain pattern $G_u(\theta)$ can be modeled in the same manner; however, in this paper we assume omnidirectional antennas for the users. The  beams of all non-intended links are assumed to be  randomly oriented with respect to each other  and hence the effective antenna gains (denoted by $\again$) on the interfering links are random.   The antennas beams  of the intended access  and backhaul link  are assumed to be aligned, i.e., the effective gain on the desired access link is $\antgain_{\mathrm{max}}$ and on the desired backhaul link is $\antgain_{\mathrm{max}}^2$. Analyzing the impact of alignment errors on the desired link is beyond the scope of the current work, but can be done on the lines of the recent work  \cite{WildmanNLW13}. It is worth pointing out here that   since our analysis is restricted to $2$-D, the directivity of the antennas is modeled only in the azimuthal plane, whereas in practice due to the $3$-D antenna gain pattern \cite{GhoshJSAC,RohMag14}, the RF isolation to the unintended receivers would also be provided by differences in elevation angles. 

\begin{table}
\caption{Notation and simulation parameters}
	\label{tbl:param}
  \begin{tabulary}{\columnwidth}{ |C | L | L|}
    \hline
  \textbf{ Notation} & \textbf{Parameter} & \textbf{Value (if applicable) }\\\hline
 $\PPP$, $\dnsty$ & mmWave BS PPP and density & \\\hline
 $\wfrac$ & Anchor BS (A-BS) fraction  & \\\hline
 $\PPP_u$, $\userdnsty$ & user PPP and density & $\userdnsty =1000$  per sq. km \\\hline
  $\PPP_\mu$, $\mu$ & UHF BS PPP and density & $\mu  =5$  per sq. km \\\hline
$\res$ &  mmWave  bandwidth & $2$ GHz \\  \hline
$\res_\mu$ &  UHF bandwidth & $20$ MHz \\\hline
$\power_{b}$ &  mmWave  BS transmit power & $30$ dBm \\\hline 
 $\power_{u}$ & user transmit power & $20$ dBm \\\hline
 $\sdpl$ & standard deviation of  path loss &  Access: LOS $ = 5.2$, NLOS = $7.6$\\
& &  Backhaul:  LOS  = $4.2$,  NLOS  = $7.9$\\\hline
 $\ple$ & path loss exponent &   Access: LOS $  = 2.0$, NLOS  $ = 3.3$\\
 & & Backhaul: LOS $ = 2.0$, NLOS $= 3.5$ \cite{GhoshJSAC} \\ \hline
  $\nu$ & mmWave carrier frequency &  $73 $ GHz  \\ \hline  
 $\fsldb$ &   path loss at $1$ m & $70$ dB\\ \hline
$\antgain_{\mathrm{max}}$,   $\antgain_{\mathrm{min}}$, $\bwidth$ &  main lobe gain, side lobe gain, beam-width & $\antgain_{\mathrm{max}} =18$ dB,   $\antgain_{\mathrm{min}}=-2$ dB, $\bwidth =10^o$ \\ \hline
$\plos, \dlos $ & fractional LOS  area $\plos$  in corresponding ball of  radius $\dlos$  & $0.11$, $200$ m  \\ \hline
$\sigma^2_N$ & noise power & $-174$ dBm/Hz + $10\log_{10}(\res)$ +  noise figure of  $10$ dB \\ \hline
\end{tabulary}
\end{table}

\begin{figure*}
\begin{equation}\label{eq:rate}
\rate = \begin{cases} \frac{\res}{\load{u,w}+\loadfrac\load{b}}\log(1+\SINR_{a}) & \hspace{-2in}\text{if associated with an A-BS,}\\
\frac{\res}{\load{u}}\min\left(\left(1-\frac{\loadfrac}{\loadfrac\load{b}+\load{u,w}}\right)\log(1+\SINR_{a}),\frac{\loadfrac}{\loadfrac\load{b}+\load{u,w}}\log(1+\SINR_{b})\right)
&\text{otherwise,}
\end{cases}
\end{equation}
\hrulefill
\vspace*{4pt}
\end{figure*}

\subsection{Blockage model}
 Each access  link of separation $d$ is assumed to be LOS with probability  $\plos$ if $d \leq \dlos$
and  $0$ otherwise\footnote{A fix LOS probability beyond distance $\dlos$ can also be handled as shown in Appendix \ref{sec:proofpldist}.}.  The parameter $\plos$ should be physically interpreted as the average fraction of LOS area in a circular ball of radius $\dlos$ around  the point under consideration. The proposed approach is simple yet flexible enough to capture  blockage statistics of real settings  as shown  in Sec. \ref{sec:validation}. The insights presented in this paper  corroborate those  from other blockage models too \cite{AkdRanRap13,GhoshJSAC,BaiHea14}.
The parameters ($\plos, \dlos$) are geography and deployment dependent (low for dense urban, high for semi-urban).  The analysis in this paper allows for different ($\plos$, $\dlos$) pairs for access and backhaul links.

\subsection{Association rule}  
Users are assumed to be associated (or served) by the BS offering the minimum path loss. Therefore, the BS serving  the user at origin is  $\sbs(0)\triangleq \arg\min_{X\in \PPP} \pl_a(X,0)$, where `a' (`b') is for access (backhaul).

 The index  $0$ is dropped henceforth wherever implicit.  The analysis in this paper is done for the user located at  the origin     referred to as the  \textit{typical} user\footnote{Notion of typicality is enabled by Slivnyak's theorem.} and its serving BS is the \textit{tagged} BS.   Further, each  BS (with no wired backhaul) is assumed to be backhauled over the air  to the A-BS offering the lowest path loss to it.  Thus, the A-BS (tagged A-BS) serving the tagged BS at $\sbs$ (if  not an A-BS itself)   is $ \swbs(\sbs) \triangleq  \arg\min_{Y\in \PPP_w} \pl_b(Y,\sbs)$, with $\sbs \notin \PPP_w$. This two-hop  setup is demonstrated in Fig. \ref{fig:TwoHopNet_Fig}. As a result, the access (downlink and  uplink), and backhaul link  $\SINR$ are
\begin{align*}
\SINR_d  &  = \frac{\power_b  \antgainmax  \pl_a(\sbs)^{-1}}{\intrfrnc_d +\noisepower}, \,\, \SINR_u   = \frac{\power_u  \antgainmax \pl_a(\sbs)^{-1}}{\intrfrnc_u +\noisepower},\\
\SINR_b  & = \frac{\power_b  \antgain^2_\mathrm{max} \pl_b(\sbs,\swbs)^{-1}}{\intrfrnc_b +\noisepower},
\end{align*}
respectively, 
where $\noisepower \triangleq \mathrm{N_0}\res$ is the thermal noise power and $\intrfrnc_{(.)}$ is the corresponding interference.

\begin{figure}
  \centering
\subfloat{\label{}\includegraphics[width= 0.8\columnwidth]{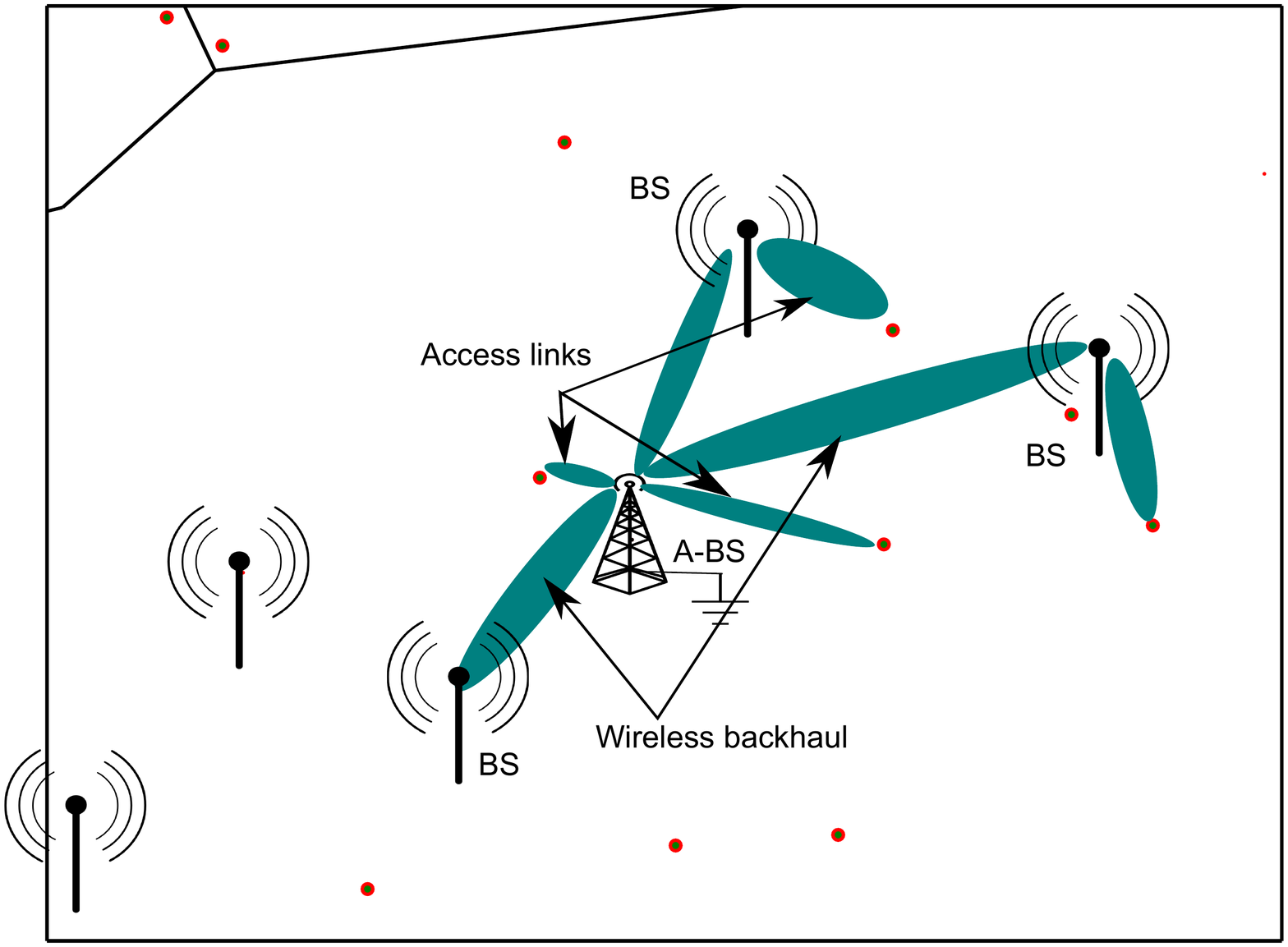}}
\caption{ Self-backhauled  network with the  A-BS providing the wireless backhaul to the associated BSs and access link to the associated  users  (denoted by   circles). }
 \label{fig:TwoHopNet_Fig}
\end{figure}

\subsection{Validation methodology}\label{sec:vmodel}
The analytical model and results   presented in this paper are validated using Monte Carlo simulations employing actual building topology     of two major metropolitan areas, Manhattan  and Chicago \cite{KulGC14}.   The polygons representing the buildings in the corresponding regions are  shown in Fig.~\ref{fig:urbanareas}. These regions represent  dense urban  settings, where mmWave networks are most attractive. 
In each simulation trial,  users and BSs are dropped  randomly in these geographical areas  as per the corresponding densities.   Users are dropped only in the outdoor regions, whereas the BSs landing inside a building polygon are assumed to be  NLOS to all users. A BS-user link is assumed to be NLOS if a building blocks the line segment joining the two, and LOS otherwise. The  association and propagation  rules are assumed as described in the earlier sections. The specific path loss parameters used are listed in Table \ref{tbl:param} and  are  from empirical measurements \cite{GhoshJSAC}. The association cells formed by two different  placements of mmWave BSs in downtown Manhattan  with this methodology  are shown in Fig. \ref{fig:ManAssoc}. 

\subsection{Access and backhaul load}
Access and backhaul links are assumed to share (through orthogonal division) the same pool of radio resources and hence the user rate depends on the user load at BSs and BS load at A-BSs. Let $\load{b}$, $\load{u,w}$, and $\load{u}$  denote the number of BSs associated with the tagged A-BS, number of users served by the tagged A-BS, and the number of users associated with the tagged BS respectively. By definition, when the typical user associates with an A-BS, $\load{u,w} = \load{u}$. 
Since an A-BS serves both users and BSs, the resources allocated to the associated BSs (which further serve their associated  users)  are assumed to be proportional to their average user load. Let the average number of users per BS be denoted by $\loadfrac \triangleq \userdnsty/\dnsty$, and  then the fraction of resources $\af_b$  available for all the associated BSs at an A-BS are  $\frac{\loadfrac \load{b}}{\loadfrac\load{b} + \load{u,w}}$, and those for the access link with the associated users are then $\af_{a,w} = 1-\af_b= \frac{\load{u,w}}{\loadfrac\load{b} + \load{u,w}}$. The fraction of resources reserved for the associated BSs at an A-BS are assumed to be shared equally among the BSs and hence the fraction of resources available to the tagged BS from the tagged A-BS are   $\af_b/\load{b}$, which is equivalent to the resource fraction  used for    backhaul   by the corresponding BS.   The access and backhaul capacity at each BS is assumed to be shared equally among the associated users. Furthermore, the rate of a user is assumed to equal the minimum of the access link rate and backhaul link rate.   

With the above described resource allocation model the  rate/throughput  of a     user is given by (\ref{eq:rate}) (at  top of the page), 
where $\SINR_a$ corresponds to the $\SINR$ of the access link: $a \equiv d$ for downlink and $a \equiv u$ for uplink.

\begin{figure*}
\begin{multline} \label{eq:intnsty}
\dnstyr_a((0,t])  = \dnsty\pi \plos\bigg\{\dlos^2\left[\Q\left(\frac{\ln(\dlos^{\ple_l}/t)-\meen_l}{\sigma_l}\right)-\Q\left(\frac{\ln(\dlos^{\ple_n}/t)-\meen_n}{\sigma_n}\right)\right]  +  t^{2/{\ple_{l}}}\exp\left(2\frac{\sigma_{l}^2}{\ple_{l}^2}+2 \frac{m_{l}}{\ple_{l}} \right)\\\times \Q\left(\frac{\sigma_{l}^2 (2/\ple_{l})-\ln(\dlos^{\ple_l}/t)+\meen_l}{\sigma_l}\right)
+   t^{2/{\ple_{n}}}\exp\left(2\frac{\sigma_{n}^2}{\ple_{n}^2}+2 \frac{m_{n}}{\ple_{n}} \right)\left[\frac{1}{\plos}-\Q\left(\frac{\sigma_{n}^2 (2/\ple_{n})-\ln(\dlos^{\ple_n}/t)+\meen_n}{\sigma_n}\right)\right]\bigg\} 
\end{multline}
\hrulefill
\vspace*{4pt}
\end{figure*}

\subsection{Hybrid networks}\label{sec:hybrid}
Co-existence with conventional UHF based $3$G and  $4$G  networks  could play a  key role in providing  wide coverage,  particularly in sparse deployment of  mmWave  networks, and reliable control channels. In this paper, a \textit{simple} offloading technique is adopted wherein a user is offloaded to the UHF network  if it's $\SINR$ on the mmWave network drops below a threshold $\tmin$. Since it was shown  in \cite{SinDhiAnd13} that the aggressiveness of offloading (or the offloading bias) is proportional to the bandwidth of the orthogonal band of small cells, the proposed $\SINR$-based association technique is arguably reasonable for  large bandwidth mmWave networks. A similar technique was also used in  \cite{LiIntel} for energy efficiency analysis.

\section{Rate Distribution: Downlink and Uplink}\label{sec:analysis}
This is the main technical section of the paper, which characterizes the   user rate distribution  across the network  in a self-backhauled mmWave network co-existing with  a  UHF macrocellular network.

\subsection{$\SNR$  distribution}
For characterizing the downlink  $\SNR$ distribution, the point process formed by the path loss of each BS to the typical user at origin defined as $\PropPP_a \coloneqq \left\{\pl_a(X)=\frac{\|X\|^{\ple}}{\sha}\right\}_{X \in \PPP} $, where $\sha \triangleq 10^{-(\plrnd+\fsldb)/10}$, on $\real{}$  is considered. Using the displacement theorem,  $\PropPP_a$ is a Poisson process and let the corresponding intensity measure be denoted by $\dnstyr_a(.)$.
\begin{lem}\label{lem:pldist}
The distribution of the path loss from the user to the tagged base station is such that $\pr(\pl_a(\sbs) > t) = \exp\left(-\dnstyr_a((0,t])\right)$, where the intensity measure is given by (\ref{eq:intnsty}) (at top of the page), 
where $m_{j} = -0.1\fsldb_j\ln10$, $\sigma_{j}=0.1 \sdpl_j\ln10$, with $j\equiv l$ for LOS and $j\equiv n$ for NLOS,  and $\Q(.)$ is the Q-function (Standard  Gaussian CCDF). 
\end{lem}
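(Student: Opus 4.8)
The plan is to exhibit $\PropPP_a$ as the image of the BS process under a random displacement, which then delivers both the void probability and an integral formula for the intensity. First I would attach to each BS $X\in\PPP$ an independent mark (with location-dependent law) consisting of its blockage state $B_X\in\{l,n\}$ --- equal to $l$ (LOS) with probability $\plos$ when $\|X\|\le\dlos$ and $n$ (NLOS) otherwise --- together with its shadowing coefficient $\sha_X=10^{-(\plrnd_X+\fsldb_{B_X})/10}$, where $\plrnd_X\sim\mathcal{N}(0,\sdpl_{B_X}^2)$ is drawn independently across $X$. By the marking theorem the collection $\{(X,B_X,\sha_X)\}$ is a PPP on $\real{2}\times\{l,n\}\times(0,\infty)$, and $\PropPP_a$ is its image under the (measurable) map $(X,B_X,\sha_X)\mapsto\pl_a(X)=\|X\|^{\ple_{B_X}}/\sha_X$; by the displacement/mapping theorem $\PropPP_a$ is then a Poisson process on $(0,\infty)$ with some intensity measure $\dnstyr_a$, provided $\dnstyr_a$ is locally finite (which will fall out of the explicit evaluation below --- the lognormal shadowing, or just $\ple_n>2$, makes the radial integral converge at infinity).

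Once $\PropPP_a$ is known to be Poisson, the tagged-BS path loss is the smallest point of $\PropPP_a$, so $\pr(\pl_a(\sbs)>t)=\pr\big(\PropPP_a\cap(0,t]=\emptyset\big)=\exp(-\dnstyr_a((0,t]))$ is exactly the void probability, and it only remains to compute $\dnstyr_a((0,t])$. By Campbell's formula,
\begin{equation*}
\dnstyr_a((0,t])=\expect{\textstyle\sum_{X\in\PPP}\indic\{\pl_a(X)\le t\}}=\dnsty\int_{\real{2}}\pr(\pl_a(x)\le t)\,dx .
\end{equation*}
Conditioning on the state and passing to polar coordinates: for a type-$j$ link, $\{\pl_a(x)\le t\}=\{\sha\ge\|x\|^{\ple_j}/t\}=\{\ln\sha\ge\ple_j\ln\|x\|-\ln t\}$ with $\ln\sha\sim\mathcal{N}(m_j,\sigma_j^2)$, hence $\pr(\pl_a(x)\le t\mid\|x\|=r,\,B=j)=\Q\big((\ln(r^{\ple_j}/t)-m_j)/\sigma_j\big)$. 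Splitting the radial integral at $\dlos$ --- weight $\plos$ on the LOS integrand and $1-\plos$ on the NLOS integrand for $r\le\dlos$, weight $1$ on the NLOS integrand for $r>\dlos$ --- and regrouping the NLOS pieces via $(1-\plos)\int_0^{\dlos}+\int_{\dlos}^{\infty}=\int_0^{\infty}-\plos\int_0^{\dlos}$ already produces the overall prefactor $\dnsty\pi\plos$ and the $1/\plos$ that sits inside the last bracket of \eqref{eq:intnsty}.

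What is left is to evaluate, for $a\in\{\dlos^2,\infty\}$, the one-dimensional integral $\int_0^{a}\Q\big((\ln(u^{\ple_j/2}/t)-m_j)/\sigma_j\big)\,du$ obtained after the substitution $u=\|x\|^2$ (so that $2\pi r\,dr=\pi\,du$). I would integrate by parts, differentiating the $\Q$, which converts the integrand into $u$ times a Gaussian density in $\ln u$; completing the square in the exponent produces the factor $t^{2/\ple_j}\exp(2\sigma_j^2/\ple_j^2+2m_j/\ple_j)$ multiplying a $\Q$-term, while the boundary term contributes $\dlos^2\,\Q(\cdot)$ at $u=\dlos^2$ and vanishes at $u\to0$ and $u\to\infty$ (by the super-polynomial decay of $\Q$, which is also what makes $\dnstyr_a$ locally finite). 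Assembling the LOS contribution on $[0,\dlos^2]$, minus the NLOS contribution on $[0,\dlos^2]$, plus the full-line NLOS integral --- which equals $t^{2/\ple_n}\exp(2\sigma_n^2/\ple_n^2+2m_n/\ple_n)$ since its boundary term dies and the surviving $\Q$-term becomes $\Q(-\infty)=1$ --- reproduces \eqref{eq:intnsty} verbatim. I expect this last step to be the main obstacle: the closed form requires careful execution of the integration by parts together with the Gaussian completion of the square, and careful bookkeeping of the $\plos$ weights so the answer emerges in the stated factored form; checking that the boundary terms vanish is also what closes the local-finiteness gap left open when invoking the displacement theorem.
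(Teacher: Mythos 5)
Your proposal is correct and follows essentially the same route as the paper: the displacement/mapping theorem makes the path-loss process $\PropPP_a$ Poisson, the void probability gives $\pr(\pl_a(\sbs)>t)=\exp(-\dnstyr_a((0,t]))$, and Campbell's formula with the LOS/NLOS split at $\dlos$ yields the intensity measure. The only difference is mechanical in the last step: the paper integrates over the radius first conditionally on the lognormal shadowing and then applies truncated lognormal moment identities, whereas you condition on the radius and evaluate the resulting $\Q$-function integral by parts with a completion of the square---two orders of the same Fubini computation arriving at the identical expression (\ref{eq:intnsty}).
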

\begin{proof}
See Appendix \ref{sec:proofpldist}.
\end{proof}
The above lemma  simplifies to the scenario considered in  \cite{BlaKarKee12} with uniform path loss exponents (i.e. no blockage) and uniform shadowing variance. 

The path loss distribution for a typical backhaul link can be similarly obtained by considering the propagation process \cite{BlaKarKee12}  $\PropPP_b$  from A-BSs to the BS  at the origin. The corresponding  intensity measure $\dnstyr_b$ is then obtained by replacing $\dnsty$ by $\dnsty\wfrac$ and replacing the access link parameters with that of  backhaul link in (\ref{eq:intnsty}).

Under the assumptions of  stationary PPP for both users and BSs, considering  the typical link for analysis allows characterization  of the corresponding network-wide  performance metric.  Therefore, the $\SNR$ coverage defined as the distribution of $\SNR$ for the typical link $\scov_{(.)}(\SINRthresh)\triangleq \ppr_{\PPP_u}(\SNR_{(.)} > \SINRthresh)$ \footnote{$\ppr_\PPP$ is the Palm probability associated with the corresponding PPP $\PPP$. This notation is omitted henceforth with the implicit understanding that when considering the typical link, Palm probability is being referred to.} is also  
the complementary cumulative distribution function  (CCDF) of $\SNR$ across the entire network. The same holds for $\SINR$ and $\rate$ coverage. 

Lemma \ref{lem:pldist}  enables the characterization  of   $\SNR$ distribution in a closed form in the following theorem.
\begin{thm}\label{thm:pcov}
The  $\SNR$ distribution for the typical downlink, uplink, and backhaul link  are respectively
\begin{align*}
\scov_d(\SINRthresh) \triangleq \pr(\SNR_d > \SINRthresh)  & = 1 - \exp\left(- \dnsty \ndnstyr_a\left(\frac{\power_b\antgainmax}{\SINRthresh\noisepower}\right)\right)\\
\scov_u(\SINRthresh) \triangleq \pr(\SNR_u > \SINRthresh)  & = 1 - \exp\left(- \dnsty \ndnstyr_a\left(\frac{\power_u\antgainmax}{\SINRthresh\noisepower}\right)\right)  \\
\scov_b(\SINRthresh) \triangleq \pr(\SNR_b > \SINRthresh)  & = 1 - \exp\left(- \dnsty\wfrac\ndnstyr_b\left(\frac{\power_b\antgain^2_\mathrm{max}}{\SINRthresh\noisepower}\right)\right),
\end{align*}
where $\ndnstyr_a(t) \triangleq \frac{\dnstyr_a((0,t])}{\dnsty}$ and $\ndnstyr_b(t) \triangleq \frac{\dnstyr_b((0,t])}{\dnsty\wfrac}$.
\end{thm}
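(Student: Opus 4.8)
The plan is to derive each $\SNR$ CCDF directly from Lemma~\ref{lem:pldist}, since the $\SNR$ of every link type is a deterministic decreasing function of the single random variable $\pl_a(\sbs)$ (or $\pl_b(\sbs,\swbs)$ for backhaul). First I would write, for the downlink, $\SNR_d = \power_b\antgainmax \pl_a(\sbs)^{-1}/\noisepower$, so that the event $\{\SNR_d > \SINRthresh\}$ is identical to the event $\{\pl_a(\sbs) < \power_b\antgainmax/(\SINRthresh\noisepower)\}$. Hence $\pr(\SNR_d>\SINRthresh) = \pr\!\left(\pl_a(\sbs) < \tfrac{\power_b\antgainmax}{\SINRthresh\noisepower}\right) = 1 - \pr\!\left(\pl_a(\sbs) \ge \tfrac{\power_b\antgainmax}{\SINRthresh\noisepower}\right)$, and then I would invoke Lemma~\ref{lem:pldist} with $t = \power_b\antgainmax/(\SINRthresh\noisepower)$ to get $\pr(\pl_a(\sbs) > t) = \exp(-\dnstyr_a((0,t]))$. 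A brief remark that the path loss has a continuous distribution (so that $\ge$ versus $>$ is immaterial) closes this step. Substituting the normalized intensity $\ndnstyr_a(t) = \dnstyr_a((0,t])/\dnsty$ gives the stated form $1-\exp(-\dnsty\,\ndnstyr_a(\cdot))$.

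Second, the uplink case is verbatim the same argument with $\power_b$ replaced by $\power_u$, since $\SINR_u$ has the identical structure with the same access path loss $\pl_a(\sbs)$ and the same antenna gain $\antgainmax$ on the desired link; no new computation is needed, so I would simply state that it follows mutatis mutandis.

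Third, for the backhaul link I would repeat the reduction using $\SNR_b = \power_b\antgain^2_\mathrm{max}\,\pl_b(\sbs,\swbs)^{-1}/\noisepower$, so $\{\SNR_b>\SINRthresh\} = \{\pl_b(\sbs,\swbs) < \power_b\antgain^2_\mathrm{max}/(\SINRthresh\noisepower)\}$. The only substantive point here is that the distribution of $\pl_b(\sbs,\swbs)$ is obtained exactly as in Lemma~\ref{lem:pldist} but for the propagation process $\PropPP_b$ from the A-BSs to the BS at the origin: by the displacement theorem $\PropPP_b$ is Poisson with intensity measure $\dnstyr_b$, which (as noted right after Lemma~\ref{lem:pldist}) is $\dnstyr_a$ with $\dnsty\mapsto\dnsty\wfrac$ and the access parameters replaced by backhaul parameters. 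The implicit conditioning $\sbs\notin\PPP_w$ does not affect the marginal law of the nearest A-BS by independent marking and Slivnyak's theorem, so $\pr(\pl_b(\sbs,\swbs)>t) = \exp(-\dnstyr_b((0,t]))$; writing $\ndnstyr_b(t) = \dnstyr_b((0,t])/(\dnsty\wfrac)$ yields the last line.

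The only genuine obstacle is a subtle one: asserting that $\SNR$ coverage for the \emph{typical} link equals the network-wide CCDF requires the Palm-calculus observation already flagged in the text (stationarity of the PPPs plus Slivnyak), and for the backhaul link one must be slightly careful that ``typical BS at the origin'' is the right reference point — but this is exactly the setup established before Lemma~\ref{lem:pldist}, so in the proof I would just cite it. Everything else is a one-line monotone change of variables, so the proof is short; I would write it as three displayed reductions followed by an appeal to Lemma~\ref{lem:pldist} and the $\PropPP_b$ remark.
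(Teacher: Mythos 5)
Your proposal is correct and follows essentially the same route as the paper: the paper's proof also reduces $\{\SNR > \SINRthresh\}$ to a threshold event on the serving path loss and applies Lemma~\ref{lem:pldist} with $t=\power_b\antgainmax/(\SINRthresh\noisepower)$, noting that uplink and backhaul follow analogously (with the backhaul intensity measure $\dnstyr_b$ introduced right after the lemma). Your added remarks on continuity of the path loss law and the Palm/typicality point are consistent with the setup the paper establishes before the theorem, so there is no gap.
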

\begin{proof}
For the downlink case,
\begin{align*}
\pr(\SNR_d > \SINRthresh) &  = \pr \left(\frac{\power_b  \antgainmax  \pl_a(\sbs)^{-1}}{\noisepower} > \SINRthresh\right) \\&
 = 1 - \exp\left(-\dnsty \ndnstyr_a\left(\frac{\power_b \antgainmax}{\SINRthresh\noisepower}\right)\right),
\end{align*} where the last equality follows from Lemma \ref{lem:pldist}. Uplink and backhaul link coverage follow similarly.
\end{proof}
Noting the dependence of $\dnstyr_{(.)}(t)$ on $t$ and $\dnsty$, the $\SNR$ coverage (both access and backhaul) are directly proportional to the densities, power, and antenna gain of the respective links.  

As it can be noted,  users are assumed to be transmitting with maximum power in the uplink (without power control) in the above derivation.  This is arguably reasonable as the uplink $\SNR$ is already   problematic   in mmWave networks, even with max power transmission. However, the uplink $\SNR$ derivation  above can be extended to incorporate uplink fractional power control employed in   LTE networks, as shown in Appendix \ref{sec:proofulsinr}.

\subsection{Interference in mmWave networks}\label{sec:int}
This section provides an analytical treatment of interference in mmWave  networks. In particular, the focus of this section is to   upper bound  the interference-to-noise ($\INR$) distribution  (hence  provide  more insight into an earlier comment of noise-limited nature ($\SNR \approx \SINR$)  of  mmWave  networks), and quantify the impact of key design parameters on this upper bound.  Without any loss of generality, each BS is assumed to be an  A-BS (i.e. $\wfrac=1$) in this section and hence the subscript `a' for access is dropped. 

\begin{figure*}
\setcounter{MYtempeqncnt}{\value{equation}}
\setcounter{equation}{3}
\small
\begin{multline}\label{eq:mdt}
\ndnstyr'(t) \triangleq \frac{\mathrm{d}\ndnstyr(t)}{\mathrm{d}t}= \pi\plos\Bigg\{\frac{\dlos^2}{\sqrt{2\pi} t} \left[\frac{1}{\sigma_l}\exp\left(-\left(\frac{\ln (\dlos^{\ple_l}/t)-m_l}{\sqrt{2\sigma_{l}^2}}\right)^2\right)-\frac{1}{\sigma_n}\exp\left(-\left(\frac{\ln (\dlos^{\ple_n}/t)-\meen_n}{\sqrt{2\sigma_{n}^2}}\right)^2\right)\right]+\\
\exp\left(2\frac{\sigma_{l}^2}{\ple_{l}^2}+2 \frac{m_{l}}{\ple_{l}} \right) t^{\frac{2}{\ple_{l}}-1}\left[\frac{2}{\ple_l}\Q\left(\frac{\sigma_{l}^2 (2/\ple_{l})-\ln(\dlos^{\ple_l}/t)+\meen_l}{\sigma_l}\right) -\frac{1}{\sqrt{2\pi\sigma_l^2}}\exp\left(-\left(\frac{\sigma_{l}^2 (2/\ple_{l}) - \ln (\dlos^{\ple_l}/t)+m_l}{\sqrt{2\sigma_{l}^2}}\right)^2\right)\right]+\\
\exp\left(2\frac{\sigma_{n}^2}{\ple_{n}^2}+2 \frac{m_{n}}{\ple_{n}} \right) t^{\frac{2}{\ple_{n}}-1}\left[\frac{2}{\plos\ple_n}- \frac{2}{\ple_n} \Q\left(\frac{\sigma_{n}^2 (2/\ple_{n})-\ln(\dlos^{\ple_n}/t)+\meen_n}{\sigma_n}\right)+ \frac{1}{\sqrt{2\pi\sigma_n^2}}\exp\left(-\left(\frac{\sigma_{n}^2 (2/\ple_{n}) - \ln (\dlos^{\ple_n}/t)+\meen_n}{\sqrt{2\sigma_{n}^2}}\right)^2\right)\right]\Bigg\}.
\end{multline}\normalsize
\setcounter{equation}{\value{MYtempeqncnt}}
\hrulefill
\vspace*{4pt}
\end{figure*}
Consider the sum   over the earlier defined PPP $\PropPP$ 
\begin{equation}\label{eq:totI}
\totI \triangleq \sum_{Y \in \PropPP } Y ^{-1} K_Y,
\end{equation} 
where  $K_Y$ are  i.i.d. marks associated with  $Y \in \PropPP$. 
For example,  if $K_Y = \power_b \again_Y$ with $\again_Y$ being the random antenna gain on the link from $Y$,  then  $\totI$ denotes the  total received power from all BSs at the typical user.  The following proposition provides an upper bound to downlink  $\INR$  in mmWave networks.
\begin{prop}
The CCDF of  $\INR$   is upper bounded as 
\begin{equation*}
\pr(\INR>y) \leq  \frac{2e^{a\noisepower y}}{\pi}\int_{0}^\infty\mathrm{Re}(\bar{\mgf_{\totI}}(a+iu))\cos(
u\noisepower y) \mathrm{d} u, 
\end{equation*}
where $\bar{\mgf_{\totI}}(z) = 1/z- \mgf_{\totI}(z)/z$ with 
\begin{align*}
\mgf_{\totI}(z)= \exp\left(-\dnsty\cexpect{zK\int_{u>0}\frac{1-\exp(-u)}{u^2}\ndnstyr'\left(\frac{zK}{u}\right)\mathrm{d}u}{}\right)
\end{align*}
and $\ndnstyr'$ is given by (\ref{eq:mdt}) (at top of the page).  \addtocounter{equation}{1}
\end{prop}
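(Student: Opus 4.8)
The plan is to bound $\INR$ from above by discarding the contribution of the serving base station, to evaluate the Laplace transform of the resulting \emph{all--BS} aggregate received power $\totI$ in closed form via the Laplace functional of $\PropPP$, and then to recover the tail of $\totI$ by Laplace inversion along a vertical contour. With the mark $K_Y=\power_b\again_Y$ relevant here, $\totI=\sum_{Y\in\PropPP}Y^{-1}K_Y$ from (\ref{eq:totI}) is the power collected at the typical user from \emph{all} BSs, whereas the interference $\intrfrnc$ omits the tagged BS; hence $\intrfrnc\le\totI$ and $\pr(\INR>y)=\pr(\intrfrnc>\noisepower y)\le\pr(\totI>\noisepower y)$. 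The inequality stated in the proposition comes solely from this domination; everything downstream is an exact identity.

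\emph{Laplace transform of $\totI$.} Since $\PropPP$ is a PPP on $\R$ with intensity measure $\dnstyr$ (Lemma~\ref{lem:pldist}) and the marks $K_Y$ are i.i.d.\ and independent of $\PropPP$, the Laplace functional yields
\[
\mgf_{\totI}(z)=\expect{e^{-z\totI}}=\exp\Big(-\!\int_0^\infty\cexpect{1-e^{-zK/y}}{K}\,\dnstyr(\mathrm{d}y)\Big).
\]
By Lemma~\ref{lem:pldist} the intensity measure is absolutely continuous, $\dnstyr(\mathrm{d}y)=\dnsty\,\ndnstyr'(y)\,\mathrm{d}y$ with $\ndnstyr'$ as in (\ref{eq:mdt}). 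For fixed $K$ the substitution $u=zK/y$ (so $y=zK/u$, $\mathrm{d}y=-zK\,u^{-2}\,\mathrm{d}u$, which reverses the limits) turns $\int_0^\infty(1-e^{-zK/y})\,\ndnstyr'(y)\,\mathrm{d}y$ into $zK\int_0^\infty\frac{1-e^{-u}}{u^2}\,\ndnstyr'(zK/u)\,\mathrm{d}u$; pulling $\cexpect{\cdot}{K}$ back outside (legitimate by Tonelli, the integrand being nonnegative) gives precisely the asserted expression for $\mgf_{\totI}(z)$. Because the NLOS path-loss exponent exceeds $2$ (and the LOS region has finite radius $\dlos$), $\ndnstyr(t)$ grows sub-linearly in $t$, the integrals above converge, $\totI<\infty$ a.s., and $\mgf_{\totI}$ is bounded by $1$, analytic on $\{\mathrm{Re}\,z>0\}$, and continuous up to the imaginary axis.

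\emph{Inversion.} Let $F$ be the CDF of $\totI$ and $\bar F=1-F$. Integration by parts gives, for $\mathrm{Re}\,s>0$, $\int_0^\infty e^{-st}\bar F(t)\,\mathrm{d}t=\frac{1-\mgf_{\totI}(s)}{s}=\bar{\mgf_{\totI}}(s)$, i.e.\ $\bar{\mgf_{\totI}}$ is the Laplace transform of the one-sided function $t\mapsto\bar F(t)\indic(t>0)$. Writing the Bromwich inversion on the line $\mathrm{Re}\,s=a$ for arbitrary $a>0$ and using that the inverted function is real-valued and vanishes on $(-\infty,0)$ — so that the real part of its Fourier transform is even and the cosine half of the Fourier-inversion integral already reconstructs the whole function (the sine half contributing the same amount) — one obtains, for every $t>0$,
\[
\bar F(t)=\frac{2e^{at}}{\pi}\int_0^\infty\mathrm{Re}\big(\bar{\mgf_{\totI}}(a+iu)\big)\cos(ut)\,\mathrm{d}u .
\]
Evaluating at $t=\noisepower y$ and combining with $\pr(\INR>y)\le\bar F(\noisepower y)$ finishes the proof.

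The algebra of the first two steps is routine once the substitution $u=zK/y$ is spotted. The technical crux — and the main obstacle — is the inversion step: one must justify shifting the Bromwich contour off the imaginary axis to $\mathrm{Re}\,s=a>0$ and discarding the sine (imaginary) part. This requires the decay of $\mathrm{Re}(\bar{\mgf_{\totI}}(a+iu))$ in $u$ (so the oscillatory integral converges) together with the one-sided-support/Kramers--Kronig argument that makes the displayed formula an \emph{equality}; consequently the only slack in the proposition is the replacement of $\intrfrnc$ by $\totI$, which is why the bound is expected to be tight whenever the network is noise-limited.
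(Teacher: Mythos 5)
Your proof is correct and follows essentially the same route as the paper's: bounding the interference by the total received shot noise $\totI$, computing $\mgf_{\totI}$ via the Laplace functional of the propagation process $\PropPP$, and recovering the CCDF through the Abate--Whitt (Euler) inversion of $\bar{\mgf_{\totI}}(z)=(1-\mgf_{\totI}(z))/z$. The only difference is that you spell out the substitution $u=zK/y$ and the identification of $\bar{\mgf_{\totI}}$ as the transform of the CCDF, details the paper leaves implicit.
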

\begin{IEEEproof}
 The downlink interference $\intrfrnc_d = \totI - K_{\sbs}/\sbs$ is clearly upper bounded by $\totI$  and hence $\INR \triangleq \intrfrnc_d /\noisepower$ has the property: $\pr(\INR>y) \leq \pr(\totI>\noisepower y)$.
The sum in (\ref{eq:totI}) is  the shot noise associated with   $\PropPP$ and the corresponding  Laplace transform is  represented as the Laplace functional of the shot noise of $\PropPP$,  $\mgf_{\totI}(z)\triangleq \expect{\exp(-z\totI)}$ 
\begin{equation*}
= \exp\left(-\cexpect{\int_{y>0}\left\{ 1-\exp(-zK/y)\right\}\dnstyr(\mathrm{d}y)}{K}\right),
\end{equation*} 
and the  Laplace transform associated with the CCDF of the shot noise is  $\bar{\mgf_{\totI}}(z) = 1/z- \mgf_{\totI}(z)/z$. 
The CCDF of the shot noise can then be obtained from the corresponding Laplace transform using the Euler characterization \cite{AbaWar95}
\begin{equation*}
\bar{F}_{\totI}(y)\triangleq \pr(\totI > y) =
\frac{2e^{ay}}{\pi}\int_{0}^\infty\mathrm{Re}(\bar{\mgf_{\totI}}(a+iu))\cos
uy \mathrm{d} u.
\end{equation*}
\end{IEEEproof}
The interference on the uplink   is generated by users transmitting on the same radio resource as the typical user. Assuming each  BS gives orthogonal resources to users associated with it,  one user per BS would interfere with the uplink transmission of the typical user. The point process of the interfering  users, for the analysis in this section, is assumed to be a PPP $\PPP_{u,b}$ of intensity same as that of BSs, i.e.,  $\dnsty$.  
In the same vein as the above discussion, the   propagation process $\PropPP_u \coloneqq \{\pl_a(X)\}_{X \in \PPP_{u,b}}$   captures the propagation loss from users to the BS under consideration at origin.  The  shot noise $\totI \triangleq \sum_{U \in \PropPP_{u} } U ^{-1} K_U$   then upper bounds the uplink interference with $K_U=  \power_u \again_U$.

 \begin{figure*}
  \centering
\subfloat[Manhattan]{\includegraphics[width= \columnwidth]{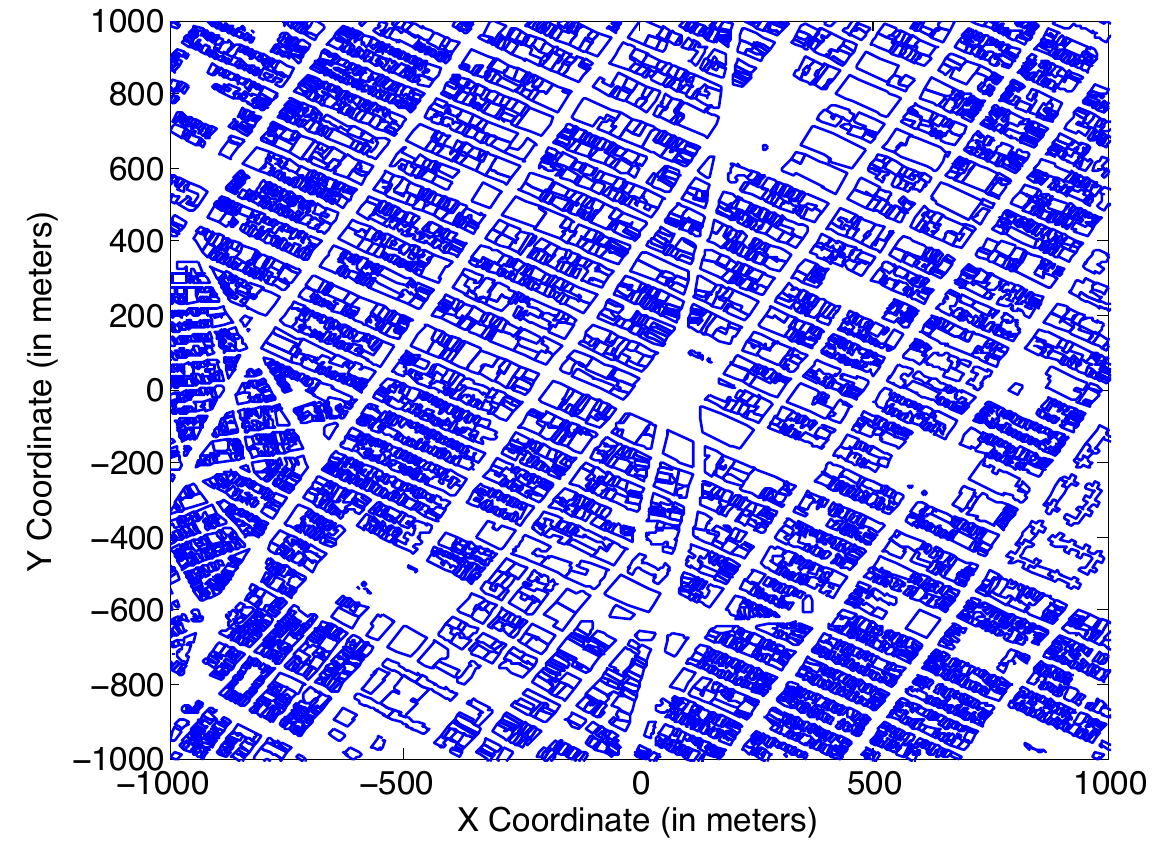}}
\subfloat[ Chicago]{\includegraphics[width= \columnwidth]{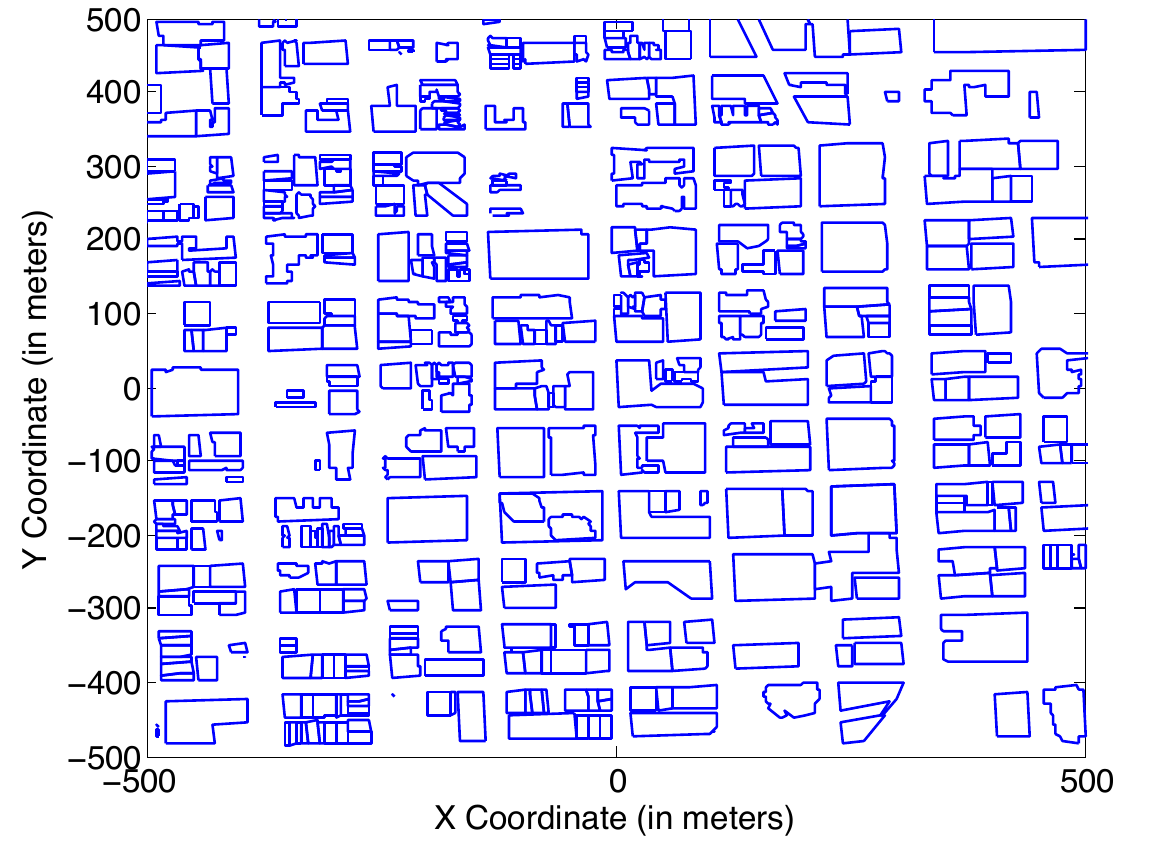}}
\caption{Building topology of    Manhattan   and Chicago  used for validation.}
 \label{fig:urbanareas}
\end{figure*} 

 \begin{figure*}
  \centering
\subfloat[]
{\label{ }\includegraphics[width= \columnwidth]{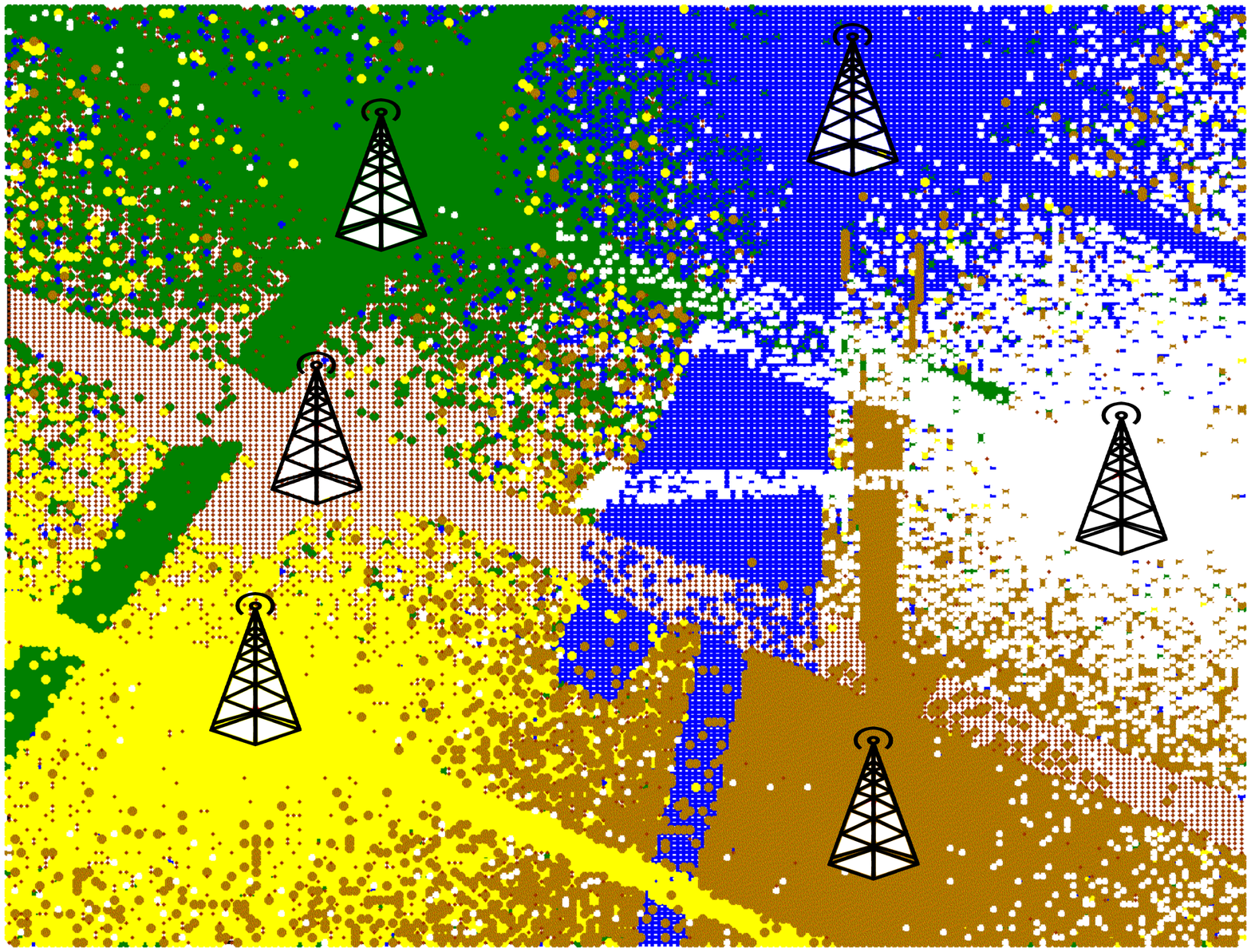}}
\subfloat[]{\label{ }\includegraphics[width =\columnwidth]{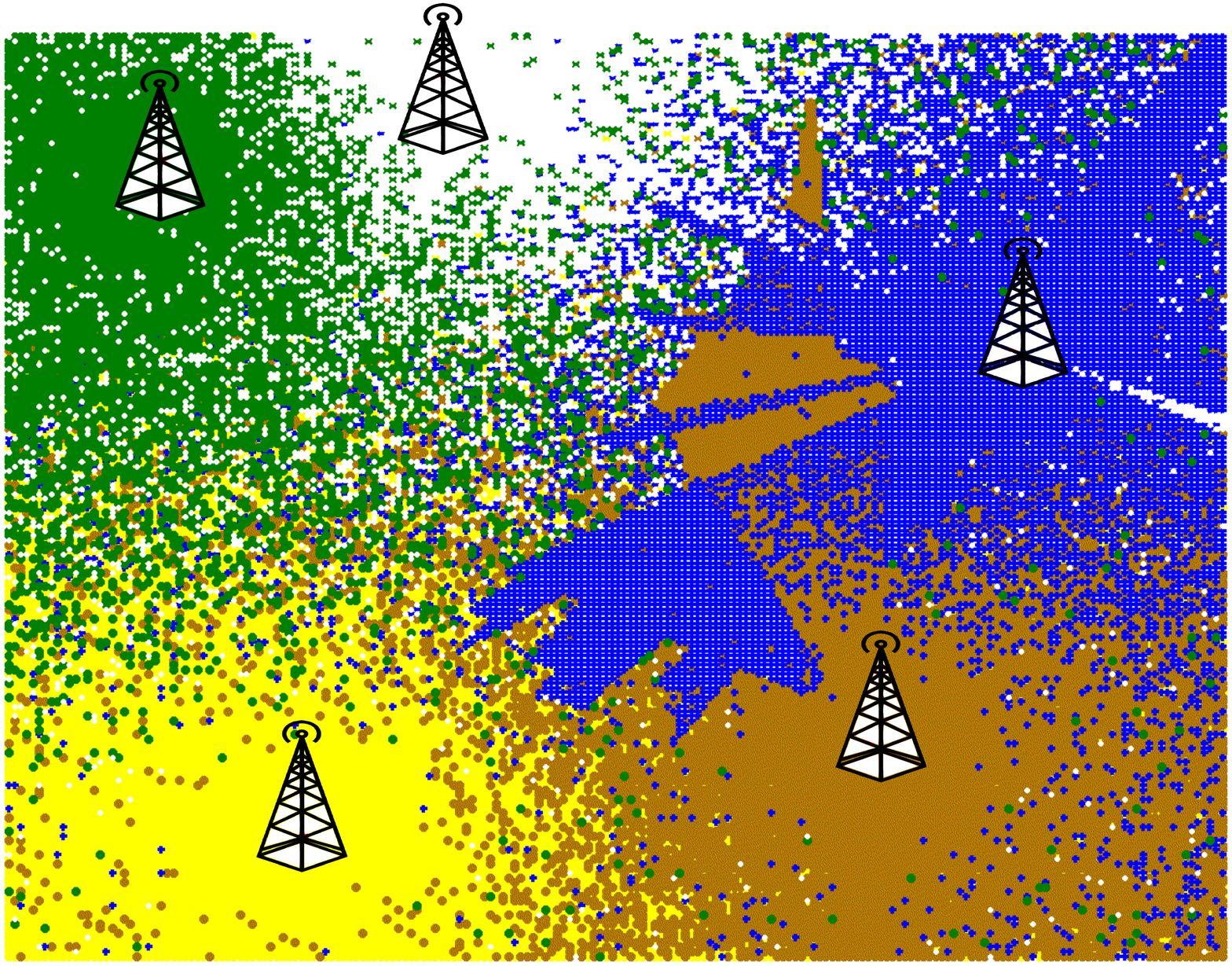}}
\caption{{Association cells  in different shades and colors for two different BS placement in Manhattan region. Noticeable discontinuity and irregularity of the cells show the     sensitivity of path loss to  blockages and the dense building topology (shown in Fig. \ref{fig:urbanareas}a).}}
 \label{fig:ManAssoc}
\end{figure*}

\begin{figure*}
  \centering
\subfloat[]
{\label{fig:INR}\includegraphics[width= \columnwidth]{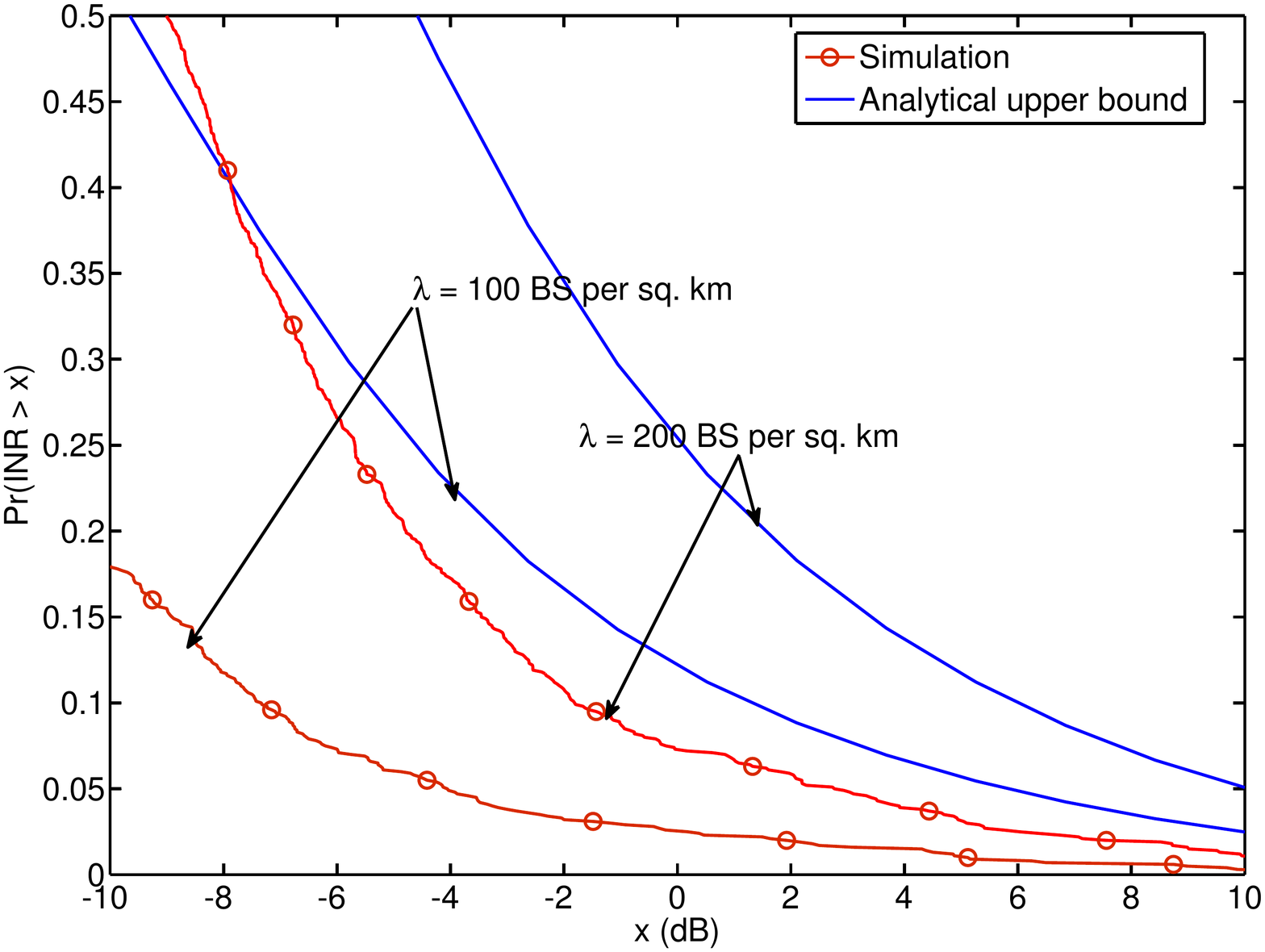}}
\subfloat[]{\label{fig:DensINRCov}\includegraphics[width= \columnwidth]{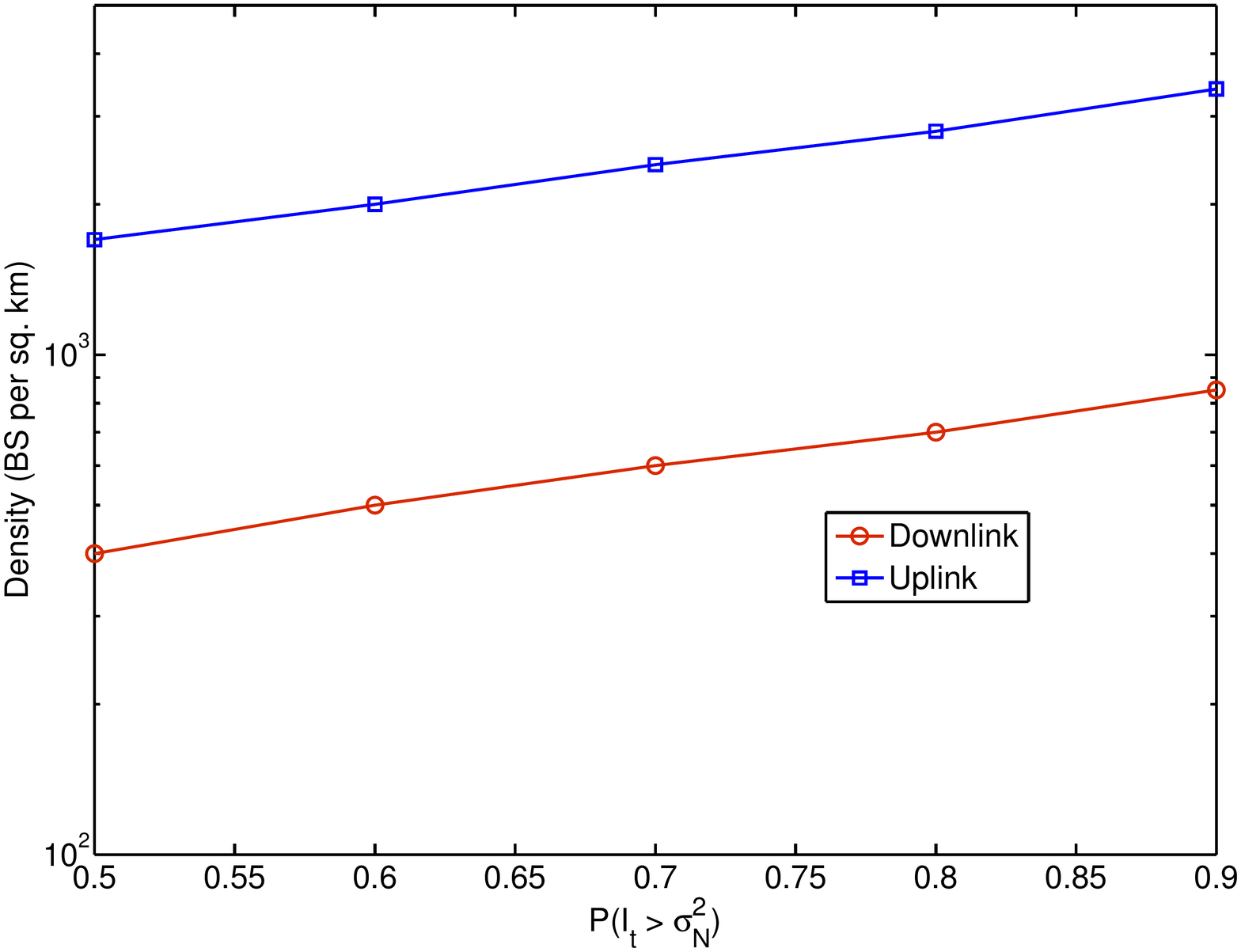}}
\caption{(a) Total power to noise ratio and $\INR$ for the proposed model, and (b) the variation of the   density required for the total power to exceed noise with a given probability.}
 \label{}
\end{figure*} 
The analytical total power to noise ratio bound for  the downlink with  the parameters of Table \ref{tbl:param} is shown in Fig. \ref{fig:INR}.  The Matlab code for computing the upper bound is available online \cite{Mcode}.  Also shown is the corresponding $\INR$ obtained through simulations. 
As can be observed from the analytical upper bounds and simulation, the interference  power does not dominate noise power for the large bandwidth and narrow beam-width network considered here. In fact, $\INR  > 0$ dB is observed in less than $20\%$ of the cases  even at  high base station densities of about $200$ per sq. km. As a consequence of  the   stochastic dominance,   the distribution of the total power (derived above) can be used to lower bound the density required for interference to dominate  noise. The minimum BS density required for achieving a given $\pr(\totI> \noisepower)$ for   uplink and downlink is shown in Fig. \ref{fig:DensINRCov}. As can be seen, a density of at least $500$ and $2000$ BS per sq. km is required  for guaranteeing downlink and uplink interference to exceed noise power with $0.7$ probability, respectively.  In general, the  $\INR$ distribution depends on the   bandwidth, antenna directivity (beam-width), carrier frequency, and density. The following corollary quantifies this effect. 
\begin{cor}\label{cor:equiv}\textbf{Density-Directivity-Bandwidth-Frequency Equivalence.}
In the case of  uniform path loss exponent ($\ple_l=\ple_n= \ple$)  and shadowing variance  for all links, the upper bound to the $\INR$ is proportional  to 
$\frac{\dnsty \power^{2/\ple}\expect{\again^{2/\ple}}}{\nu^{4/\ple}\res^{2/\ple}}.$
\end{cor}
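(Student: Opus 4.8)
The plan is to show that, once LOS and NLOS links share a common path loss exponent and shadowing variance, the shot noise $\totI$ collapses to a one--sided $2/\ple$--stable random variable whose Laplace exponent is an explicit monomial in the quantities of interest; the upper bound of the preceding proposition then depends on the system parameters only through that monomial.

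First I would specialize the intensity measure (\ref{eq:intnsty}). Setting $\ple_l=\ple_n=\ple$, $\sigma_l=\sigma_n=\sigma$, and $m_l=m_n=m$, the two $\Q$--terms inside the $\dlos^2[\,\cdot\,]$ bracket become equal and cancel, while the surviving $\Q$--term of the LOS contribution cancels the one in the NLOS bracket; what remains is $\dnstyr_a((0,t])=\dnsty\,\pi\,e^{2\sigma^2/\ple^2+2m/\ple}\,t^{2/\ple}$, i.e.\ $\ndnstyr(t)=C\,t^{\delta}$ with $\delta\triangleq 2/\ple$ and $C\triangleq \pi e^{2\sigma^2/\ple^2+2m/\ple}$. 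Note that $\dlos$ and $\plos$ drop out entirely, which is intuitive: when LOS and NLOS links propagate identically, blockage no longer matters. Differentiating gives $\ndnstyr'(t)=C\delta\,t^{\delta-1}$.

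Next I would substitute this into the Laplace--transform formula of the proposition. Pulling the constants through the inner integral, $zK\int_{u>0}\frac{1-e^{-u}}{u^{2}}\,\ndnstyr'\!\left(\frac{zK}{u}\right)\mathrm{d}u = C\delta\,(zK)^{\delta}\int_{0}^{\infty}\frac{1-e^{-u}}{u^{1+\delta}}\,\mathrm{d}u$, and one integration by parts gives $\int_{0}^{\infty}\frac{1-e^{-u}}{u^{1+\delta}}\,\mathrm{d}u=\Gamma(1-\delta)/\delta$, finite exactly when $0<\delta<1$, i.e.\ $\ple>2$. Hence $\mgf_{\totI}(z)=\exp\!\big(-\dnsty\,C\,\Gamma(1-\delta)\,\expect{K^{\delta}}\,z^{\delta}\big)$, the Laplace transform of a totally skewed $\delta$--stable law. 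Replacing $z$ by $z/\noisepower$ (since $\INR=\totI/\noisepower$), I would then observe that the entire Euler--inversion bound $\pr(\INR>y)$ is a fixed function of $y$ and of the single scalar $\zeta\triangleq \dnsty\,C\,\Gamma(1-\delta)\,\expect{K^{\delta}}/\noisepower^{\delta}$ --- all the design parameters enter only through $\zeta$.

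Finally I would trace the dependence of $\zeta$ on the design quantities. With $K=\power\again$ (the interferer transmit power is $\power_b$ on the downlink and $\power_u$ on the uplink), $\expect{K^{\delta}}=\power^{\delta}\expect{\again^{\delta}}$; with $\noisepower=\mathrm{N_0}\res$, $\noisepower^{\delta}\propto\res^{\delta}$; and since $m=-0.1\fsldb\ln 10$, the factor $e^{2m/\ple}$ equals $\fsldb_{\mathrm{lin}}^{-\delta}$ in linear units, where by the free--space (Friis) relation $\fsldb_{\mathrm{lin}}=(4\pi/\wl)^2\propto\nu^{2}$, so $C\propto\nu^{-2\delta}$. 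Collecting the three factors, $\zeta\propto \dnsty\,\power^{\delta}\expect{\again^{\delta}}\,\nu^{-2\delta}\res^{-\delta}=\dnsty\,\power^{2/\ple}\expect{\again^{2/\ple}}\big/\big(\nu^{4/\ple}\res^{2/\ple}\big)$, which is the claim. I expect the one genuinely delicate point to be this last step: $\nu$ does not appear in (\ref{eq:intnsty}) explicitly but only through the $1$\,m intercept $\fsldb$, so exposing the $\nu^{-4/\ple}$ scaling requires invoking the free--space relation; the stable--law representation additionally forces $\ple>2$, since at $\ple=2$ the Laplace exponent diverges and the argument degenerates.
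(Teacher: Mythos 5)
Your proposal is correct and follows essentially the same route as the paper: specialize the intensity measure to $\ndnstyr(t)=\pi\expect{\sha^{2/\ple}}t^{2/\ple}$, evaluate the shot-noise Laplace transform to get the stable-law exponent $\propto \dnsty\,z^{2/\ple}\power^{2/\ple}\expect{\sha^{2/\ple}}\expect{\again^{2/\ple}}$, then rescale by $\noisepower\propto\res$ and use the dependence of the intercept $\fsldb$ on carrier frequency to expose the $\nu^{-4/\ple}$ factor. Your explicit verification of the $\Q$-term cancellation, the gamma-integral identity (equivalent to the paper's $\Gamma(-2/\ple)$ via $\Gamma(-\delta)=-\Gamma(1-\delta)/\delta$), and the remark that $\ple>2$ is needed for convergence are details the paper leaves implicit, but the argument is the same.
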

\begin{proof}
 For the special case of uniform path loss exponent and shadowing variance  for all links,  
$\ndnstyr(u) = \pi\expect{\sha^{2/\ple}} u^{2/\ple}$ and $\ndnstyr'(u) = \frac{2\pi}{\ple}\expect{\sha^{2/\ple}}u^{2/\ple-1}$,   the Laplace transform  of $\totI$ is 
\begin{align*}
\mgf_{\totI}(z) &= \exp\left(2\pi\frac{\dnsty }{\ple}z^{2/\ple}\power^{2/\ple}\expect{\sha^{2/\ple}}\expect{\again^{2/\ple}}\Gamma\left(\frac{-2}{\ple}\right)\right),
\end{align*}
and the Laplace transform of $\totI/\noisepower$  is 
\[\exp\left(2\pi\frac{\dnsty }{\ple}\left(\frac{z}{\noisepower}\right)^{2/\ple}\power^{2/\ple}\expect{\sha^{2/\ple}}\expect{\again^{2/\ple}}\Gamma\left(\frac{-2}{\ple}\right)\right).\]
Noting the dependence of  thermal noise power $\noisepower$ on bandwidth and that of $\expect{\sha^{2/\ple}}$ on free space path loss (and thus on the carrier frequency) leads to the final result. 
\end{proof}
From the above corollary, it can be noted that the upper bound on the $\INR$ distribution is invariant with   increase in BS density or beam-width if the bandwidth and/or carrier frequency also scale appropriately.

The $\SINR$ distribution of the typical link defined as $\pcov_{(.)}(\SINRthresh)\triangleq \ppr_{\PPP_u}(\SINR_{(.)} > \SINRthresh)$ can be derived using the intensity measure of Lemma \ref{lem:pldist}  and is delegated to  Appendix \ref{sec:sinrproof}.  However, as   shown in this section, $\SNR$ provides a   good  approximation to   $\SINR$  for directional large bandwidth mmWave networks in densely blocked settings (typical for urban  settings), 
and hence the  following analysis   will,  \textit{deliberately}, ignore interference (i.e. $\pcov \approx \scov$).  However, the corresponding simulation results   include interference, thereby   validating this assumption.   For an  interference-limited setting, the analytical rate distribution results can  be obtained by replacing $\scov$ with $\pcov$.

\subsection{Load characterization}
As mentioned earlier,  throughput on  access and backhaul link depends on the number of users sharing the access link and the  number of BSs backhauling to the same A-BS respectively.  Hence there are two types of association cells in the network: \begin{inparaenum} \item user association cell of a BS -- the region in which all users are served by the corresponding BS, and \item BS  association cell of an A-BS -- the region in which all BSs are served by that A-BS. \end{inparaenum}
Formally, the user association cell of a BS (or an A-BS) located at $X \in \real{2}$ is 
\begin{equation*}
\assocr_X \triangleq  \left\{Y \in \real{2} : \pl_a(X,Y) < \pl_a(T,Y)\,\, \forall\,T \in \BSP\right\}
\end{equation*}
and the BS association cell of an A-BS located at $Z \in \real{2}$
\begin{equation*}
\assocr_{w,Z} \triangleq \left\{Y \in \real{2} :  \pl_b(Z,Y) < \pl_b(T,Y)\,\, \forall\,T \in  \WBSP\right\}. 
\end{equation*}
Due to the complex associations cells  in such networks, the resulting distribution of the association areas  (required for characterizing load distribution)  is  highly non-trivial to characterize exactly. The corresponding means, however, are characterized exactly  by the following remark.
\begin{rem}\textbf{Mean Association Areas.}
Under the modeling assumptions of  Sec. \ref{sec:sysmodel}, the minimum path loss association rule    corresponds to a \textit{stationary} (translation invariant) association \cite{SinBacAnd13}, and consequently the mean  user association area of a typical BS  equals the inverse of the corresponding density, i.e., $\mathbb{E}^o_{\BSP}\left[|\assocr_0|\right] = \frac{1}{\dnsty}$,  and  the mean BS association area  of a typical A-BS equals $\mathbb{E}^o_{\WBSP}\left[|\assocr_{w,0}|\right]=\frac{1}{\dnsty\wfrac}$.  Furthermore,   the area distribution of the tagged BS and A-BS follow an    area biased distribution as compared to that of the corresponding  typical areas resulting in the corresponding means to be $\dnsty \mathbb{E}^o_{\BSP}\left[|\assocr_0|^2\right]$ and $\dnsty\wfrac \mathbb{E}^o_{\WBSP}\left[|\assocr_{w,0}|^2\right]$ respectively.
\end{rem}
The above remark highlights that, although association regions are structurally very different from a distance-based Poisson-Voronoi (PV), they have the same mean areas as that of the PV with regards to the typical cell. This  leads to the next approximation.
\begin{asmptn} {\bf Association area distribution.}   The association area distribution of a typical BS and that of a typical A-BS is assumed to be same as that of the area distribution  of a typical PV with the same mean area (i.e. same density). 
\end{asmptn}
 The above approximation was proposed in \cite{SinDhiAnd13} for approximating area distribution of weighted PV and was verified through simulations. This approximation is validated in subsequent sections  using simulations in the context of rate distribution in mmWave networks. 
The probability mass function (PMF) of  the resulting loads based on the above discussion are stated below.  The proofs follow   along the  similar lines  of  \cite{SinDhiAnd13,YuKim13} and are thus omitted.
\begin{prop}\label{prop:loadprop} 
\begin{enumerate}
\item
The PMF of the number of users $\load{u}$ associated with the tagged BS  is 
\begin{equation*}
\loadpmf_t(\userdnsty,\dnsty,n) =\pr\left(\load{u}=n\right), 
\,\,\, \\n\geq 1,
\end{equation*}
where \[\loadpmf_t(c,d,n)= \frac{3.5^{3.5}}{(n-1)!}\frac{\Gamma(n+3.5)}{\Gamma(3.5)}\left(\frac{c}{d}\right)^{n-1}\left(3.5 + \frac{c}{d}\right)^{-(n+3.5)} ,\] and  $\Gamma(x)=\int_{0}^\infty \exp(-t)t^{x-1}\mathrm{d}t$ is the gamma function. The corresponding mean is $\avload{u}\triangleq \expect{\load{u}}=1+1.28\frac{\userdnsty}{\dnsty}$ \cite{SinDhiAnd13}.
 When the user associates with an A-BS $\load{u,w} = \load{u}$. Otherwise, the number of users $\load{u,w}$  served by the tagged A-BS   follow the same distribution as those in a typical BS given by 
\begin{equation*} 
\loadpmf(\userdnsty,\dnsty,n) =\pr\left(\load{u,w}=n\right),\,\,\,  \\n\geq 0,
\end{equation*}
where \[\loadpmf(c,d,n)= \frac{3.5^{3.5}}{n!}\frac{\Gamma(n+3.5)}{\Gamma(3.5)}\left(\frac{c}{d}\right)^{n}\left(3.5 + \frac{c}{d}\right)^{-(n+3.5)}. \] The corresponding mean is $\avload{u,w}\triangleq \expect{\load{u,w}}=\frac{\userdnsty}{\dnsty}$.
\item The number of BSs $\load{b}$ served by the tagged A-BS, when the typical user is served by the A-BS, has the same distribution as  the number of BSs associated with  a typical A-BS and hence
\begin{equation*} 
\loadpmf(\dnsty(1-\wfrac),\dnsty\wfrac,n) =\pr\left(\load{b}=n\right),\,\, \\n\geq 0.
\end{equation*} The corresponding mean is $\avload{b}\triangleq \expect{\load{b}}=\frac{1-\wfrac}{\wfrac}$.
 In the scenario where the typical user associates with a BS, the number of BSs $\load{b}$ associated with the tagged A-BS  is  given by 
\begin{equation*} 
\loadpmf_t(\dnsty(1-\wfrac),\wfrac\dnsty,n)=\pr\left(\load{b}=n\right),\,\,  \\n\geq 1. \nonumber
\end{equation*}  The corresponding mean is $\avload{b}=1+1.28\frac{1-\wfrac}{\wfrac}$.
\end{enumerate}
\end{prop}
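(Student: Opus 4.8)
The plan is to reduce every load PMF to an elementary Poisson--Gamma mixture built from two facts. First, the users $\PPP_u$ form a homogeneous PPP of intensity $\userdnsty$ that is independent of $\PPP$ and of the access shadowing marks determining the association cells; hence, conditioned on the area of any such cell being $a$, the number of users it contains is exactly $\mathrm{Poisson}(\userdnsty a)$, and by Slivnyak's theorem this persists after conditioning on a user at the origin, with $\userdnsty a$ then counting the \emph{other} users. Second, by Assumption~1 the $\dnsty$-scaled area of a \emph{typical} BS cell has the Gamma-fitted Poisson--Voronoi density $g(y)=\tfrac{3.5^{3.5}}{\Gamma(3.5)}y^{2.5}e^{-3.5y}$ (which has mean $1$), and likewise the $\dnsty\wfrac$-scaled area of a typical A-BS cell has this density. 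Averaging a Poisson PMF against (a possibly area-biased version of) $g$ then yields each required PMF; for the backhaul loads, $\userdnsty$ is replaced by the density $\dnsty(1-\wfrac)$ of non-anchor BSs and $\PPP$ by $\WBSP$.

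For $\load{u}$ the relevant cell is the one \emph{containing} the typical user, namely the cell $\assocr_{\sbs}$ of the tagged BS, whose area is --- by Remark~1 --- the area-biased version of the typical cell area, so its $\dnsty$-scaled area has density $y\,g(y)$. Writing $\loadfrac=\userdnsty/\dnsty$ and using $\load{u}=1+(\text{number of other users in }\assocr_{\sbs})$,
\[
\pr(\load{u}=n)=\frac{\loadfrac^{\,n-1}}{(n-1)!}\int_{0}^{\infty}y^{n-1}e^{-\loadfrac y}\,y\,g(y)\,\mathrm{d}y=\frac{3.5^{3.5}\loadfrac^{\,n-1}}{(n-1)!\,\Gamma(3.5)}\,\frac{\Gamma(n+3.5)}{(\loadfrac+3.5)^{n+3.5}},
\]
which is $\loadpmf_t(\userdnsty,\dnsty,n)$; since $y\,g(y)$ is exactly the $\mathrm{Gamma}(4.5,3.5)$ density, $\expect{\load{u}}=1+(4.5/3.5)\loadfrac=1+1.28\,\userdnsty/\dnsty$.

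The remaining PMFs are the same computation, with the area-bias present or absent according to whether the pertinent cell is conditioned to contain the origin. When the typical user is served by an A-BS, $\load{u,w}=\load{u}$; otherwise the serving A-BS is reached via the access link, so its \emph{user} cell is a \emph{typical} cell and mixing $\mathrm{Poisson}(\userdnsty a)$ against $g$ gives $\loadpmf(\userdnsty,\dnsty,n)$ with mean $\userdnsty/\dnsty$. For $\load{b}$ one repeats this for the BS association cells, formed by $\WBSP$ (intensity $\dnsty\wfrac$) under the backhaul rule, with the non-anchor BSs (intensity $\dnsty(1-\wfrac)$) as the counted points: if the typical user lies on an A-BS, that A-BS's BS cell is typical, giving $\loadpmf(\dnsty(1-\wfrac),\dnsty\wfrac,n)$ with mean $(1-\wfrac)/\wfrac$; if the typical user lies on a non-anchor BS, that BS is a typical point of the non-anchor process, so its serving A-BS's BS cell is area-biased, giving $\loadpmf_t(\dnsty(1-\wfrac),\wfrac\dnsty,n)$ with mean $1+1.28(1-\wfrac)/\wfrac$.

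The main obstacle is the bias bookkeeping rather than any hard analysis: for each of $\load{u}$, $\load{u,w}$, $\load{b}$ one must correctly identify whether the relevant cell is the one that contains the origin (hence area-biased, via Remark~1) or a typical cell, and one must argue that the access-side size-bias on a tagged A-BS does not carry over to its backhaul-side BS-association cell. That decoupling is precisely the approximate independence underlying Assumption~1, whereby the user-association and BS-association cells are treated as the area statistics of two independent Poisson--Voronoi tessellations; granting this, each PMF reduces to the Poisson--Gamma integral displayed above, which is why the detailed derivation parallels \cite{SinDhiAnd13,YuKim13}.
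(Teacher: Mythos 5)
Your proof is correct and is essentially the argument the paper has in mind: the paper omits the proof, deferring to the load-characterization technique of \cite{SinDhiAnd13,YuKim13}, which is exactly your Poisson--Gamma mixture (independent user/non-anchor PPP counts conditioned on cell area, the 3.5-Gamma fit of the typical Poisson--Voronoi area per Assumption~1, and area-biasing via Remark~1 for the cell containing the conditioning point). Your bias bookkeeping for each of $\load{u}$, $\load{u,w}$, $\load{b}$, including the decoupling of the access-side size bias from the backhaul-side cell, matches the proposition's intended derivation, with only the harmless rounding $4.5/3.5\approx 1.28$ in the means.
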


\begin{figure*}
\begin{align}\label{eq:rcov}
& \rcov( \RATEthresh) \triangleq \pr(\rate > \RATEthresh)  = \wfrac\sum_{n\geq0,m\geq 1}\loadpmf(\dnsty(1-\wfrac),\dnsty\wfrac,n)\loadpmf_t(\userdnsty,\dnsty,m)\scov_d\left(\uRATEthresh\{\nRATEthresh(\loadfrac n +m)\}\right)\nonumber \\
& + (1-\wfrac)\sum_{l\geq1,n\geq1,m\geq0} \loadpmf_t(\userdnsty,\dnsty,l)\loadpmf_t(\dnsty(1-\wfrac),\wfrac\dnsty,n)\loadpmf(\userdnsty,\dnsty,m)\scov_b\left(\uRATEthresh\left\{\nRATEthresh l  (n+m/\loadfrac)\right\}\right)\scov_d\left(\uRATEthresh\left\{\nRATEthresh l \frac{n+m/\kappa}{n+m/\kappa-1}\right\}\right)
\end{align}
\hrulefill
\vspace*{4pt}
\end{figure*}

\begin{figure*}
\small
\begin{multline}\label{eq:rcovmean}
\bar{\rcov}( \RATEthresh)  = \wfrac\scov_d\left(\uRATEthresh\left\{\nRATEthresh\left(\frac{\userdnsty(1-\wfrac)}{\dnsty\wfrac} +1+1.28\frac{\userdnsty}{\dnsty}\right)\right\}\right)\\+ (1-\wfrac)\scov_b\left(\uRATEthresh\left\{\nRATEthresh\left(1+1.28\frac{\userdnsty}{\dnsty}\right)\left(2+1.28\frac{1-\wfrac}{\wfrac}\right)\right\}\right)\scov_d\left(\uRATEthresh\left\{\nRATEthresh \left(1+1.28\frac{\userdnsty}{\dnsty}\right) \frac{2+1.28(1-\wfrac)/\wfrac}{1+1.28(1-\wfrac)/\wfrac}\right\}\right)
\end{multline}\normalsize
\hrulefill
\vspace*{4pt}
\end{figure*}\normalsize

\subsection{Rate coverage}
As emphasized in the introduction, the rate distribution (capturing the impact of loads on access and backhaul links) is vital for assessing the   performance of self-backhauled mmWave networks. The   lemmas below characterize  the downlink rate distribution for a mmWave and a hybrid network employing the following approximations.  Corresponding results for the uplink are obtained by replacing $\scov_d$ with $\scov_u$.
\begin{asmptn}
The  number of users  $\load{u}$ served by the tagged BS  and the number of BSs $\load{b}$ served by the tagged A-BS       are assumed independent of each other and the corresponding link $\SINR$s/$\SNR$s. 
\end{asmptn}
\begin{asmptn}
The spectral efficiency of the tagged backhaul link is assumed to follow the same distribution as that of the typical backhaul link.
\end{asmptn}
\begin{lem}\label{lem:rcov}
The rate coverage of a typical user in a self backhauled mmWave  network, described in Sec. \ref{sec:sysmodel}, for a rate threshold $\RATEthresh$  is  given by (\ref{eq:rcov}) (at top of the page),  where $\nRATEthresh = \RATEthresh/\res$, $\uRATEthresh(x) =  2^x-1$, and $\scov_{(.)}$ are from Theorem \ref{thm:pcov}.
\end{lem}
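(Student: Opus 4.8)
The plan is to condition on the association event of the typical user (served by an A-BS with probability $\wfrac$, or by an ordinary BS with probability $1-\wfrac$), then condition further on the relevant loads, and finally use the independence approximations (Assumptions on load independence and on the tagged backhaul link) together with Theorem \ref{thm:pcov} to turn each conditional probability into a product of $\SNR$-coverage terms. First I would recall the rate expression (\ref{eq:rate}). In the A-BS case, the typical user sees $\rate = \frac{\res}{\load{u,w}+\loadfrac\load{b}}\log(1+\SINR_a)$; since $\pcov \approx \scov$ by the discussion in Sec.~\ref{sec:int}, I write $\pr(\rate>\RATEthresh \mid \text{A-BS}, \load{b}=n, \load{u,w}=m) = \scov_d\!\left(\uRATEthresh\{\nRATEthresh(\loadfrac n + m)\}\right)$, using that $\log(1+x)>y \iff x > 2^{y}-1 = \uRATEthresh(y)$ (natural log vs. the $2^x$ in $\uRATEthresh$ is absorbed into the convention, consistent with the statement) and $\nRATEthresh = \RATEthresh/\res$. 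Here $\load{u,w}=\load{u}$ by definition when the user attaches to an A-BS, so $m$ is drawn from $\loadpmf_t(\userdnsty,\dnsty,\cdot)$ (the area-biased / tagged PMF) and $n$ from $\loadpmf(\dnsty(1-\wfrac),\dnsty\wfrac,\cdot)$ (the typical A-BS PMF), by Proposition \ref{prop:loadprop}. Summing over $n\geq 0$, $m\geq 1$ against these PMFs, and invoking Assumption~3 to factor the loads from the $\SNR$, yields the first line of (\ref{eq:rcov}).

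Next I would handle the ordinary-BS case, which is the substantive part. Now the user's rate is the ``otherwise'' branch of (\ref{eq:rate}): $\rate = \frac{\res}{\load{u}}\min\!\left(\af_{a}\log(1+\SINR_a),\, \af_b\log(1+\SINR_b)/\load{b}\right)$ with the resource fractions written in terms of $\load{b}=n$, $\load{u,w}=m$ as $\af_b = \loadfrac n/(\loadfrac n + m)$ and $\af_a = 1-\af_b$. Conditioning on $\load{u}=l$, $\load{b}=n$, $\load{u,w}=m$, the event $\{\rate>\RATEthresh\}$ becomes the intersection of $\{\log(1+\SINR_a) > \nRATEthresh l/\af_a\}$ and $\{\log(1+\SINR_b) > \nRATEthresh l \,\load{b}/\af_b\}$; substituting $\af_a = m/(\loadfrac n + m)$ — wait, more carefully, with $\load{b}=n$ and writing the per-BS backhaul share I get $\af_b/\load{b} = \loadfrac/(\loadfrac n + m)$ after the $\load{b}$ cancels against the numerator $\loadfrac n$, so the backhaul constraint is $\log(1+\SINR_b) > \nRATEthresh l (\loadfrac n + m)/\loadfrac = \nRATEthresh l (n + m/\loadfrac)$, and the access constraint is $\log(1+\SINR_a) > \nRATEthresh l (\loadfrac n + m)/m = \nRATEthresh l (\loadfrac n + m)/m$; rewriting $(\loadfrac n+m)/m$ in the form $(n+m/\loadfrac)/(n+m/\loadfrac - 1)$ requires dividing through by $\loadfrac$, matching the argument that appears in (\ref{eq:rcov}). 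Here $\load{u}=l$ is the tagged-BS user count (PMF $\loadpmf_t(\userdnsty,\dnsty,\cdot)$), $\load{b}=n$ is the tagged-A-BS BS count given the user attaches to a BS (PMF $\loadpmf_t(\dnsty(1-\wfrac),\wfrac\dnsty,\cdot)$), and $\load{u,w}=m$ is the typical-A-BS user count (PMF $\loadpmf(\userdnsty,\dnsty,\cdot)$), all by Proposition \ref{prop:loadprop}. By Assumptions~3 and the independence of the access and backhaul channels (different links, independent shadowing/antenna orientations), the two constraint events factor, giving $\scov_b(\uRATEthresh\{\nRATEthresh l(n+m/\loadfrac)\})\,\scov_d(\uRATEthresh\{\nRATEthresh l (n+m/\kappa)/(n+m/\kappa-1)\})$ (using Assumption~4 to replace the tagged backhaul spectral efficiency by the typical one so that $\scov_b$ from Theorem \ref{thm:pcov} applies). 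Summing over $l\geq1$, $n\geq1$, $m\geq0$ and weighting by $(1-\wfrac)$ gives the second line of (\ref{eq:rcov}), completing the derivation by the law of total probability.

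The main obstacle I anticipate is purely bookkeeping rather than conceptual: correctly translating the resource-sharing fractions $\af_b/\load{b}$ and $\af_{a,w}$ into the rate thresholds, in particular verifying that the per-BS backhaul share causes the explicit $\load{b}$ in the numerator of $\frac{\res}{\load{u}}\cdot\frac{\af_b}{\load{b}}\log(1+\SINR_b)$ to cancel and that the resulting argument of $\scov_d$ can be massaged into the stated form $\nRATEthresh l (n+m/\kappa)/(n+m/\kappa-1)$, which is only defined (positive) for $n\geq 1$ — consistent with the summation range. A secondary subtlety is being explicit about which load PMF is area-biased (``tagged'', subscript $t$) versus typical: the typical user lands in a size-biased cell, so its own serving BS/A-BS counts use $\loadpmf_t$, while the \emph{further} quantities (users at the tagged A-BS in the relay case, BSs at a generic A-BS in the direct case) are typical and use $\loadpmf$; getting this dictionary right, as dictated by Proposition \ref{prop:loadprop}, is what makes (\ref{eq:rcov}) come out with the stated mix of $\loadpmf$ and $\loadpmf_t$ factors. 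Finally, the replacement $\pcov\approx\scov$ is not an approximation internal to this lemma but the standing modeling choice justified in Sec.~\ref{sec:int}, so I would simply cite it; an exact statement would carry $\pcov_d,\pcov_b$ in place of $\scov_d,\scov_b$.
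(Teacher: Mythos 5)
Your overall route is the paper's own: condition on the association event (A-BS with probability $\wfrac$, ordinary BS otherwise), invert the two branches of (\ref{eq:rate}) into threshold events on $\log(1+\SINR)$, invoke the load/$\SNR$ independence and the tagged-vs-typical backhaul assumption to factor the conditional probability into $\scov_d$ and $\scov_b$ terms, and then average against the PMFs of Proposition \ref{prop:loadprop}; your dictionary of which loads are area-biased ($\loadpmf_t$) and which are typical ($\loadpmf$), and the summation ranges $l\geq 1$, $n\geq 1$, $m\geq 0$, all agree with the paper. The backhaul constraint you derive, $\log(1+\SINR_b) > \nRATEthresh\, l\,(n+m/\loadfrac)$, is also correct.

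The genuine problem is in the access constraint of the non-anchored-BS branch, exactly the bookkeeping step you flagged as the crux. The fraction of resources the serving BS devotes to its access users is $1-\af_b/\load{b} = 1-\frac{\loadfrac}{\loadfrac n+m} = \frac{\loadfrac(n-1)+m}{\loadfrac n+m}$, not $\af_{a,w}=1-\af_b=\frac{m}{\loadfrac n+m}$, which is the A-BS's own access share; the two coincide only when $n=1$. Hence the correct access threshold is $\nRATEthresh\, l\,\frac{\loadfrac n+m}{\loadfrac(n-1)+m}=\nRATEthresh\, l\,\frac{n+m/\loadfrac}{n+m/\loadfrac-1}$, whereas your $\nRATEthresh\, l\,(\loadfrac n+m)/m$ cannot be ``rewritten by dividing through by $\loadfrac$'' into that form: $(\loadfrac n+m)/m = (n+m/\loadfrac)/(m/\loadfrac)$, so the asserted identity is false for $n\geq 2$. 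Once the correct prefactor $1-\af_b/\load{b}$ is used (as in the paper's proof, which directly obtains $\scov_d\left(\uRATEthresh\left\{\nRATEthresh \load{u}\frac{\load{b}+\load{u,w}/\loadfrac}{\load{b}+\load{u,w}/\loadfrac-1}\right\}\right)$), the remainder of your argument goes through unchanged and reproduces (\ref{eq:rcov}).
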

\begin{proof}
Let $\passoc_w$ denote the event of the typical user associating with an A-BS, i.e.,  $\pr(\passoc_w) = \wfrac$. Then, using (\ref{eq:rate}), the rate coverage  is  
\footnotesize
\begin{align*}
&\rcov(\RATEthresh) =  \wfrac\pr\left( \frac{ \af_{a,w}}{\load{u,w}}\log(1+\SINR_{d})> \nRATEthresh | \passoc_w\right)+ (1-\wfrac) \times \\&\pr\left(\frac{1}{\load{u}}\min\left(\left(1-\frac{\af_b}{\load{b}}\right)\log(1+\SINR_{d}),\frac{\af_b}{\load{b}}\log(1+\SINR_{b})\right) > \nRATEthresh | \bar{\passoc_w}\right)\\
& = \wfrac\expect{\scov_d(\uRATEthresh(\nRATEthresh\left\{\load{u,w}+\loadfrac\load{b}\right\}))} + (1-\wfrac)\times\\&  \expect{\scov_d\left(\uRATEthresh\left\{\nRATEthresh \load{u}\frac{\load{b}+\load{u,w}/\loadfrac}{\load{b}+\load{u,w}/\loadfrac-1}\right\}\right) \scov_b(\uRATEthresh\{\nRATEthresh\load{u}(\load{b}+\load{u,w}/\loadfrac)\})}.
\end{align*}
\normalsize
The rate coverage expression then follows by invoking the independence among various loads and $\SNR$s. 
\end{proof}
In case the  different loads in the above lemma are  approximated with their respective means, the rate coverage expression is simplified as in the following corollary.
\begin{cor}\label{cor:meanrcov}
The rate coverage with mean load approximation using Proposition \ref{prop:loadprop} is given by (\ref{eq:rcovmean}) (at top of the page). 
\end{cor}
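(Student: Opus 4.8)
The plan is to derive Corollary~\ref{cor:meanrcov} directly from Lemma~\ref{lem:rcov} by substituting the mean values of the various load random variables from Proposition~\ref{prop:loadprop} in place of the random loads inside the expectations in (\ref{eq:rcov}). The starting point is the expression in the proof of Lemma~\ref{lem:rcov}, namely
\begin{align*}
\rcov(\RATEthresh) &= \wfrac\,\expect{\scov_d\!\left(\uRATEthresh(\nRATEthresh\{\load{u,w}+\loadfrac\load{b}\})\right)} \\
&\quad + (1-\wfrac)\,\expect{\scov_d\!\left(\uRATEthresh\!\left\{\nRATEthresh \load{u}\tfrac{\load{b}+\load{u,w}/\loadfrac}{\load{b}+\load{u,w}/\loadfrac-1}\right\}\right)\scov_b\!\left(\uRATEthresh\{\nRATEthresh\load{u}(\load{b}+\load{u,w}/\loadfrac)\}\right)}.
\end{align*}
The mean-load approximation replaces each $\load{(\cdot)}$ appearing in these arguments by its expectation, which pulls the deterministic $\scov_d,\scov_b$ terms outside the (now vacuous) expectation.

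First I would handle the A-BS branch (the $\wfrac$ term). Here the typical user is served by an A-BS, so by Proposition~\ref{prop:loadprop} we have $\load{u,w}=\load{u}$ distributed as $\loadpmf_t(\userdnsty,\dnsty,\cdot)$ with mean $1+1.28\,\userdnsty/\dnsty$, while $\load{b}$ is distributed as $\loadpmf(\dnsty(1-\wfrac),\dnsty\wfrac,\cdot)$ with mean $(1-\wfrac)/\wfrac$; note $\loadfrac\load{b}$ then has mean $\loadfrac(1-\wfrac)/\wfrac = \userdnsty(1-\wfrac)/(\dnsty\wfrac)$ since $\loadfrac=\userdnsty/\dnsty$. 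Summing these gives the argument $\userdnsty(1-\wfrac)/(\dnsty\wfrac)+1+1.28\,\userdnsty/\dnsty$ inside $\scov_d$, matching the first term of (\ref{eq:rcovmean}).

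Next I would handle the BS branch (the $(1-\wfrac)$ term), where the typical user is served by an ordinary BS. By Proposition~\ref{prop:loadprop}, $\load{u}$ is size-biased with mean $1+1.28\,\userdnsty/\dnsty$, $\load{b}$ is the size-biased BS count $\loadpmf_t(\dnsty(1-\wfrac),\wfrac\dnsty,\cdot)$ with mean $1+1.28(1-\wfrac)/\wfrac$, and $\load{u,w}$ (the users at the tagged A-BS, which here is distinct from the serving BS) is the typical-cell count $\loadpmf(\userdnsty,\dnsty,\cdot)$ with mean $\userdnsty/\dnsty$. Substituting, $\load{b}+\load{u,w}/\loadfrac$ becomes $1+1.28(1-\wfrac)/\wfrac + (\userdnsty/\dnsty)/(\userdnsty/\dnsty) = 2+1.28(1-\wfrac)/\wfrac$; multiplying by $\load{u}\to 1+1.28\,\userdnsty/\dnsty$ and by $\nRATEthresh$ gives the argument of $\scov_b$ in (\ref{eq:rcovmean}), and the analogous substitution into $\load{u}\,(\load{b}+\load{u,w}/\loadfrac)/(\load{b}+\load{u,w}/\loadfrac-1)$ produces the argument of the second $\scov_d$ factor. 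Collecting the two branches yields (\ref{eq:rcovmean}) exactly.

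The only real subtlety — and the step I would flag as the main obstacle — is bookkeeping the correct load distribution and mean in each of the two conditioning events, since the same symbol $\load{b}$ (and $\load{u,w}$) refers to a size-biased quantity in one branch and an unbiased typical-cell quantity in the other, per the area-biasing discussion preceding Assumption~2. Once the right means from Proposition~\ref{prop:loadprop} are plugged in, the computation is purely algebraic. I would also note explicitly that this is an \emph{approximation} (replacing $\expect{f(\load{})}$ by $f(\expect{\load{}})$ is not exact for the nonlinear $\scov_d,\scov_b$), so the statement is a corollary of Lemma~\ref{lem:rcov} only under that stated mean-load approximation; no further justification of the approximation quality is attempted here, deferring that to the simulation validation.
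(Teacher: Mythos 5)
Your proposal is correct and matches the paper's (implicit) derivation: the paper offers no separate proof for Corollary~\ref{cor:meanrcov}, since it follows exactly as you describe, by substituting the means from Proposition~\ref{prop:loadprop} (with the correct size-biased vs.\ typical-cell distinction in each association branch) into the arguments of $\scov_d$ and $\scov_b$ in Lemma~\ref{lem:rcov}. Your bookkeeping of the two branches and your remark that this is only an approximation of $\expect{f(\load{})}$ by $f(\expect{\load{}})$ are both consistent with the paper.
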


As can be observed from the above corollary, increasing the fraction of A-BSs $\wfrac$ in the network increases the  probability of being served by  an A-BS (the weight of the first term). The rate from an A-BS ($\scov_d(.)$ in the first term)  also increases with  $\wfrac$, as user and BS load per A-BS decreases. Furthermore, increasing $\wfrac$ also increases the backhaul rate  ($\scov_b(.)$ in the second term) of a user associated with a BS. Further investigation into the interplay of $\dnsty$, $\wfrac$, and rate is deferred to Sec. \ref{sec:sb}.

\begin{rem} In practical communications systems, it might be unfeasible to transmit reliably with any   modulation and coding (MCS) below a certain $\SNR$: $\SINRthresh_0$ (say), and in that case $\rate=0$ for $\SNR< \SINRthresh_0$. Such a constraint can be incorporated in the above analysis by replacing   $\uRATEthresh  \to \max(\uRATEthresh,\SINRthresh_0)$.
\end{rem}

The following lemma  characterizes the rate distribution in a hybrid network with the association technique of Sec. \ref{sec:hybrid}.
\begin{lem}
The rate distribution in a hybrid  mmWave network (with $\wfrac=1$) co-existing with a  UHF macrocellular network, described in Sec. \ref{sec:hybrid}, is  
\begin{multline*}
 \rcov_H( \RATEthresh) = \rcov_{1}(\RATEthresh)  + (1-\scov_d(\tmin))\\\times \sum_{n\geq1}\loadpmf_t(\userdnsty-\lambda_{u,m},\mu,n)\pcov_\mu(\uRATEthresh\left\{\RATEthresh n/\res_\mu\right\})\nonumber,
 \end{multline*}
where  $\rcov_{1}(\RATEthresh)$ is obtained from Lemma \ref{lem:rcov} by replacing $\userdnsty\to\lambda_{u,m}\triangleq\userdnsty\scov_d(\tmin)$ (the  effective density of users associated with mmWave network) and $\uRATEthresh \to \uRATEthresh_1 \triangleq\max(\uRATEthresh, \tmin)$, $\pcov_\mu$ is the $\SINR$ coverage on UHF network, and $\loadpmf_t(\userdnsty-\lambda_{u,m},\mu,n)$ is the PMF of the number of users $\load{\mu}$ associated with the tagged UHF BS.
\end{lem}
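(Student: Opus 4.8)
The plan is to split $\rcov_H$ by conditioning on whether the typical user is retained on the mmWave network or offloaded to the UHF macrocell, and then apply the law of total probability. Under the $\SNR$-based association of Sec.~\ref{sec:hybrid}, and since interference is neglected on the mmWave side (Sec.~\ref{sec:int}), the typical user stays on mmWave exactly when $\SNR_d > \tmin$, an event of probability $\scov_d(\tmin)$ by Theorem~\ref{thm:pcov}, and otherwise attaches to a UHF BS. So I would write $\rcov_H(\RATEthresh) = \pr(\rate > \RATEthresh,\ \SNR_d > \tmin) + \pr(\rate > \RATEthresh,\ \SNR_d \le \tmin)$ and then identify the first summand with $\rcov_1(\RATEthresh)$ and the second with the UHF sum.

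For the mmWave summand: by independent marking the users retained on mmWave form a PPP of density $\lambda_{u,m} \triangleq \userdnsty\scov_d(\tmin)$, so the number $\load{u}$ of users on the tagged mmWave BS has the (area-biased) PMF $\loadpmf_t(\lambda_{u,m},\dnsty,\cdot)$ of Proposition~\ref{prop:loadprop}. With $\wfrac = 1$ there are no wirelessly backhauled BSs ($\load{b} = 0$), so (\ref{eq:rate}) reduces to $\rate = \res\log(1+\SNR_d)/\load{u}$. Conditioning on $\load{u} = m$ and using the independence of load and $\SNR$ (Assumption~2), $\pr(\rate > \RATEthresh,\ \SNR_d > \tmin \mid \load{u} = m) = \pr\big(\SNR_d > \uRATEthresh(\nRATEthresh m)\ \text{and}\ \SNR_d > \tmin\big) = \scov_d\!\big(\max(\uRATEthresh(\nRATEthresh m),\tmin)\big)$; summing over $m \ge 1$ is precisely Lemma~\ref{lem:rcov} (with $\wfrac = 1$, so only its first term survives) after replacing $\userdnsty$ by $\lambda_{u,m}$ in the load PMFs and $\uRATEthresh$ by $\uRATEthresh_1 = \max(\uRATEthresh,\tmin)$, i.e.\ $\rcov_1(\RATEthresh)$.

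For the UHF summand: on the complementary event, of probability $1-\scov_d(\tmin)$, the typical user attaches to a UHF BS; the offloaded users form a PPP of density $\userdnsty - \lambda_{u,m} = \userdnsty(1-\scov_d(\tmin))$, so the load $\load{\mu}$ of the tagged UHF BS has the area-biased PMF $\loadpmf_t(\userdnsty-\lambda_{u,m},\mu,\cdot)$ supported on $n \ge 1$. Since the UHF rate is $\res_\mu\log(1+\SINR_\mu)/\load{\mu}$, conditioning on $\load{\mu} = n$ gives $\pr(\rate > \RATEthresh \mid \load{\mu} = n) = \pr(\SINR_\mu > \uRATEthresh(\RATEthresh n/\res_\mu)) = \pcov_\mu(\uRATEthresh(\RATEthresh n/\res_\mu))$, and summing over $n \ge 1$ weighted by $1-\scov_d(\tmin)$ produces the stated sum. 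Adding the two summands yields the claim.

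The main obstacle is not computational but a modeling justification: treating the retained- and offloaded-user point processes as independent PPPs of the matched densities $\lambda_{u,m}$ and $\userdnsty-\lambda_{u,m}$, and treating the typical user's offloading decision as independent of the loads it then experiences on either tier. The per-user $\SNR_d$'s are in fact correlated through the common BS process $\PPP$, so this is a dependent thinning; the argument leans on the same approximations used throughout the paper (Assumptions~1--2 and the Poisson--Voronoi area approximation) so that the thinned processes are again taken as Poisson with the right intensity. A secondary point to get right is that the serving BS carries the \emph{tagged}/area-biased load $\loadpmf_t$ rather than the typical-cell load, and that conditioning the mmWave user on $\SNR_d > \tmin$ is exactly what upgrades the rate-to-$\SNR$ map $\uRATEthresh$ to $\uRATEthresh_1 = \max(\uRATEthresh,\tmin)$.
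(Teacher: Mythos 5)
Your proposal is correct and follows essentially the same route as the paper: decompose on the offloading event $\{\SNR_d \gtrless \tmin\}$, identify the retained-user term with $\rcov_1$ via the thinned user density $\lambda_{u,m}$ and the upgraded map $\uRATEthresh_1=\max(\uRATEthresh,\tmin)$, and condition on the tagged UHF load for the offloaded term. You are in fact more explicit than the paper about the dependent-thinning/independence approximations it implicitly invokes, which is a fair and accurate caveat.
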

\begin{proof}
Under the association method of Sec. \ref{sec:hybrid}, the rate coverage in the hybrid setting is
\begin{align*}
\pr(\rate > \RATEthresh)  & = \pr(\rate  > \rho \cap \SINR_d > \tmin) \\& + \pr(\rate  > \rho \cap \SINR_d < \tmin )\\
&= \rcov_1(\RATEthresh) + (1-\scov_d(\tmin)) \expect{\pcov_\mu(\uRATEthresh\left\{\RATEthresh/\res_\mu\load{\mu}\right\})},
\end{align*}
where the first term on the RHS  is the rate coverage when associated with the mmWave network and hence  $\rcov_1$ follows from the previous Lemma \ref{lem:rcov}  by incorporating the offloading $\SINR$ threshold and reducing the user density to account for the users offloaded to the macrocellular   network (fraction $1-\scov_d(\tmin)$). The second term is the rate coverage when associated with the UHF network and $\load{\mu}$ is the load on the tagged UHF BS, whose distribution can be expressed as in \cite{SinDhiAnd13}  noting the mean association cell area of a UHF BS is $\frac{1-\scov_d(\tmin)}{\mu}$. The UHF network's $\SINR$ coverage $\pcov_\mu$ can be derived as in earlier work \cite{BlaKarKee12,andganbac11}.
\end{proof}

\subsection{Validation}\label{sec:validation}

\begin{figure}
 	\centering
 		\includegraphics[width=\columnwidth]{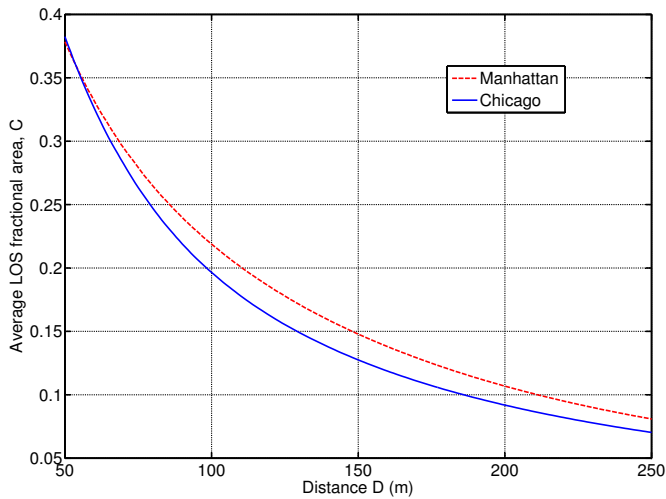}
 		\caption{Average LOS fractional area as a function of radius $\dlos$  averaged over the respective geographical regions.}
 	\label{fig:alos}
 \end{figure}

In the proposed model, the primary geography dependent parameters are  $\plos$ and $\dlos$.    As mentioned earlier, for a given $\dlos$,  the parameter $\plos$   is the average LOS fractional area in a disk of radius  $\dlos$. In order to fit the proposed model to a particular geographical region, the following methodology is adopted. Using Monte Carlo simulations in the setup of Sec. \ref{sec:vmodel}, the average fraction of LOS area  in a disk of radius $\dlos$ around randomly dropped  users   is obtained as a function of the radius $\dlos$. Fig.~\ref{fig:alos} shows the empirical $\plos$  obtained by averaging over  the  Manhattan and Chicago regions of Fig. \ref{fig:urbanareas}.  
 \begin{table}
 \begin{center}
      \caption{Values of  $\dlos$ and $ \plos$}
     \begin{tabular}{ | c | c | c |}
     \hline
     Urban area  & $\mathrm{D}$  (m) &  $\plos$\\\hline
     Chicago &  250 & 0.07 \\ \hline
     Manhattan & 200 & 0.11 \\ \hline
     \end{tabular}
     \label{tab:fitpara}
 \end{center}
 \end{table}
\begin{figure*}
  \centering
\subfloat[Manhattan]
{\includegraphics[width=\columnwidth]{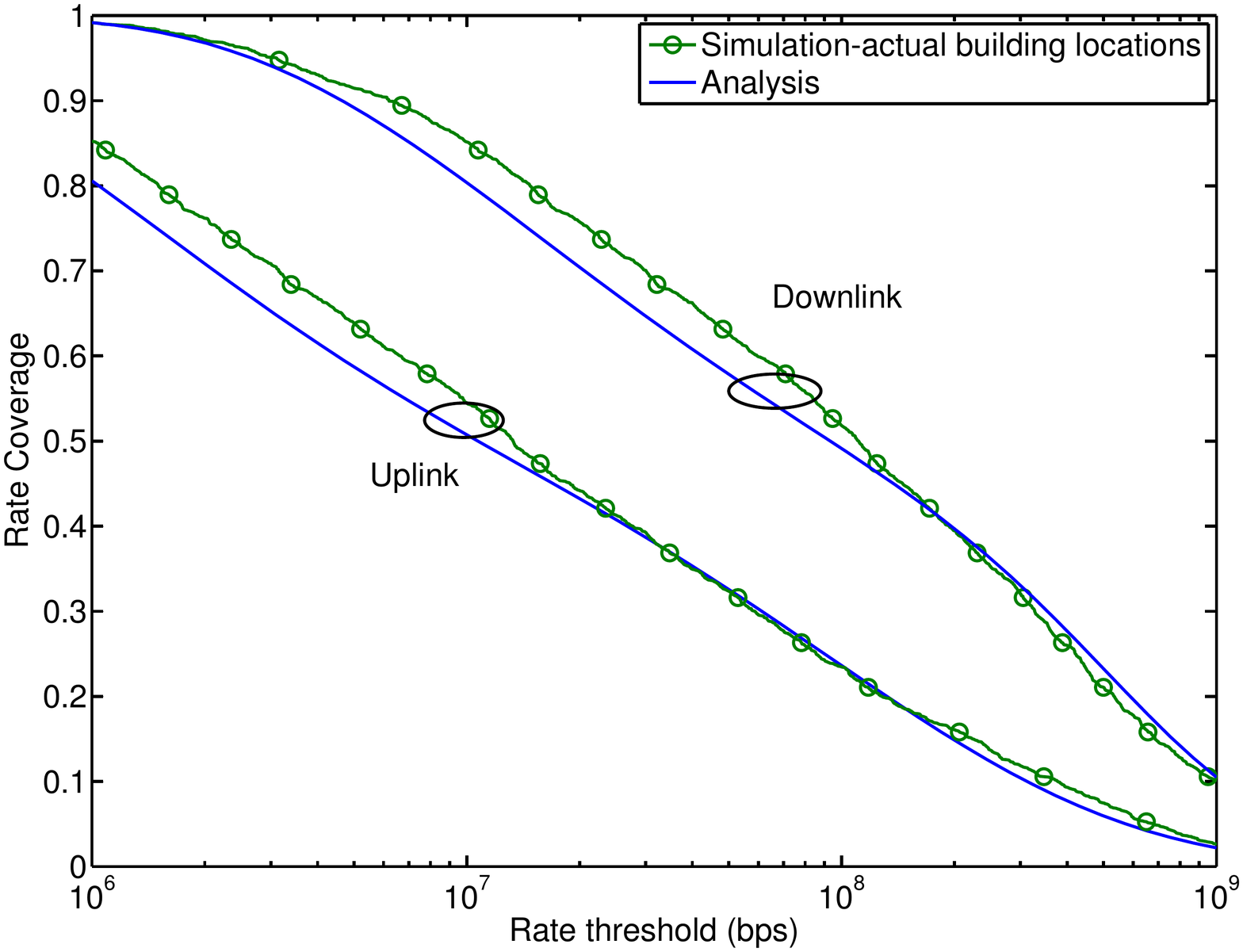}}
\subfloat[Chicago]{\includegraphics[width=\columnwidth]{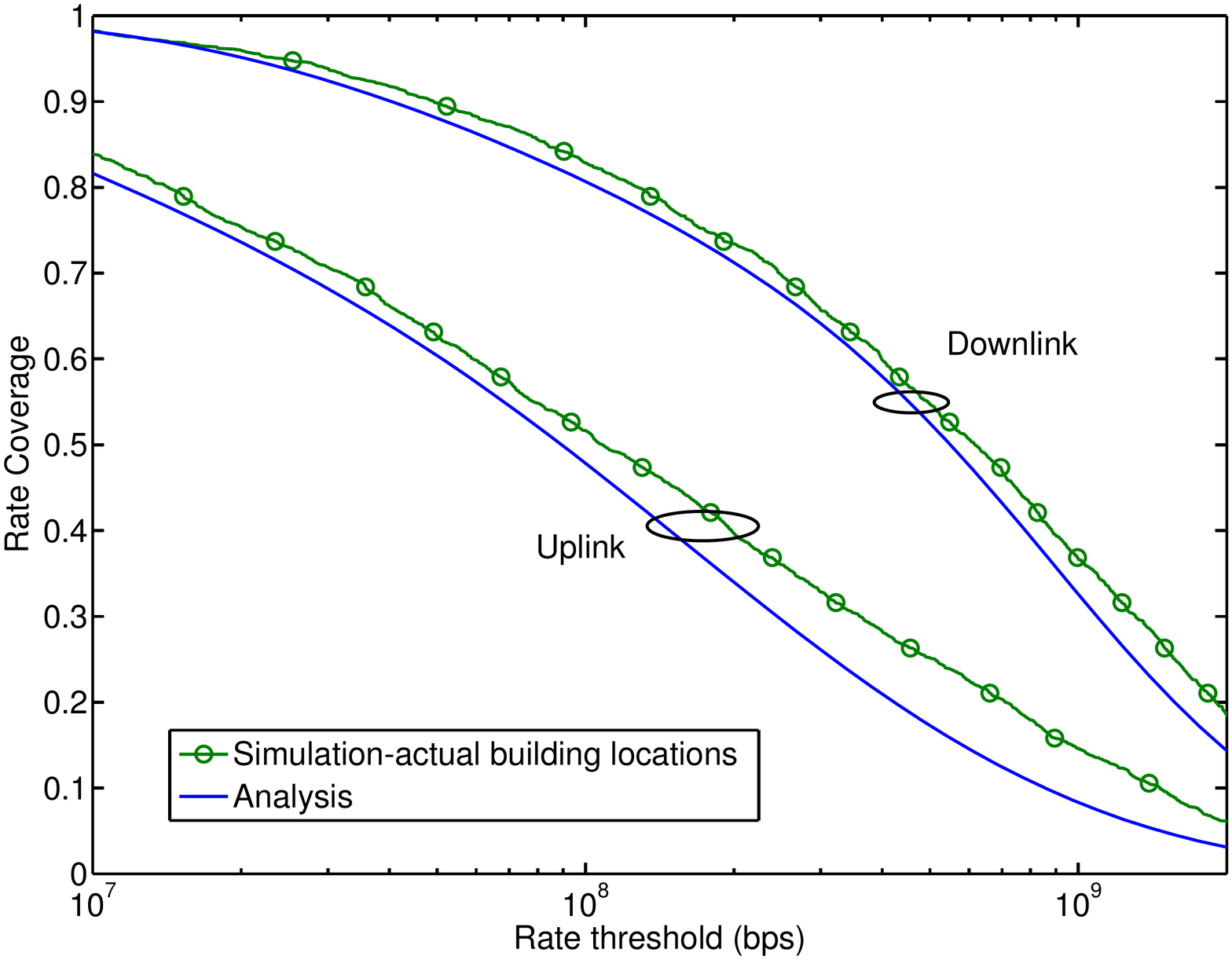}}
\caption{Downlink   rate distribution comparison from  simulation and analysis for   BS density $30$ per sq. km  in Manhattan (a) and $60$ per sq. km in   Chicago (b) with user density of $200$ per sq. km.}
 \label{fig:RateFit}
\end{figure*}
The downlink rate distribution (both uplink and downlink) obtained from  simulations (as per Sec. \ref{sec:vmodel}) and analysis (Lemma \ref{lem:rcov}) is shown in  Fig.~\ref{fig:RateFit} for the two cities with two different different BS densities and user density of $200$ per sq. km.  The parameters ($\plos$, $\dlos$) used in analysis for the specific geography are obtained    using  Fig.~\ref{fig:alos} and are given in  Table~\ref{tab:fitpara}.  The closeness of the analytical results to those  of the simulations validates \begin{inparaenum}[(a)] \item the ability of the proposed simple blockage model to capture the blockage characteristics of   dense urban settings, and \item the load characterization for irregular association cells (Fig.~\ref{fig:ManAssoc}) in a mmWave network. \end{inparaenum} The  closeness  of the match  builds confidence in the model and the  derived design insights.

In the above plots any   ($\plos$, $\dlos$) pair from Fig. \ref{fig:alos} can be used. However, it is observed that the match is better for the ($\plos$, $\dlos$) pair with larger $\dlos$ ($200$-$250$m, see \cite{KulGC14} for robustness analysis). This is due to the fact that   the LOS fractional area ($\plos_{\bar{\dlos}}$, say) beyond distance $\dlos$ is ignored, which is a better approximation for larger $\dlos$. It is straightforward to allow LOS area outside $\dlos$ in the analysis (as shown in Appendix \ref{sec:proofpldist}) but estimating  the same using actual building locations  is quite computationally intensive and tricky, as averaging needs to be done over a considerably larger area.  The fit procedure is simplified,  though not sacrificing the accuracy of the fit much (as seen), by setting $\plos_{\bar{\dlos}}=0$ in the model. 
 
\begin{figure*}
  \centering
\subfloat[Downlink]
{\label{}\includegraphics[width=\columnwidth]{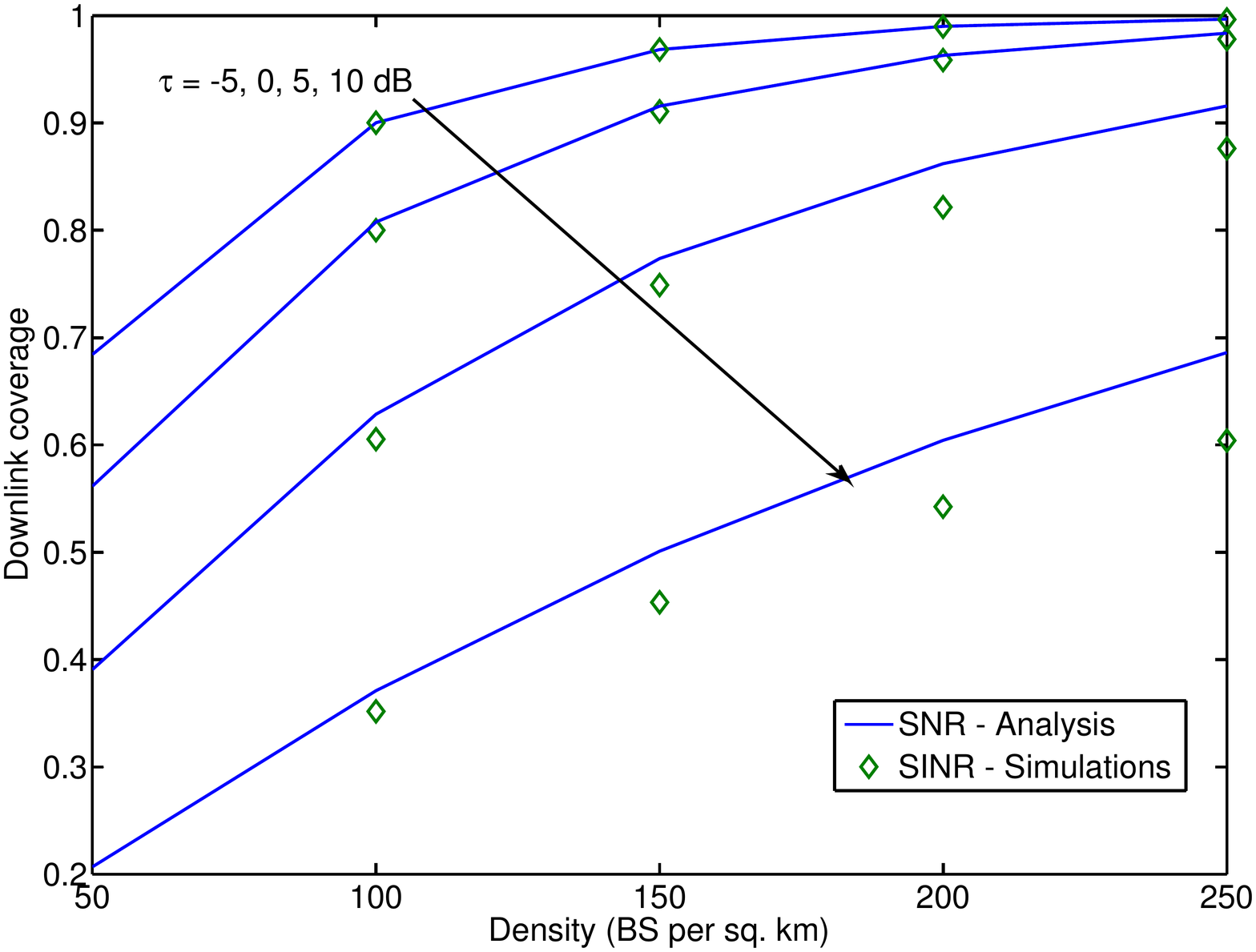}}
\subfloat[ ]{\label{}\includegraphics[width=\columnwidth]{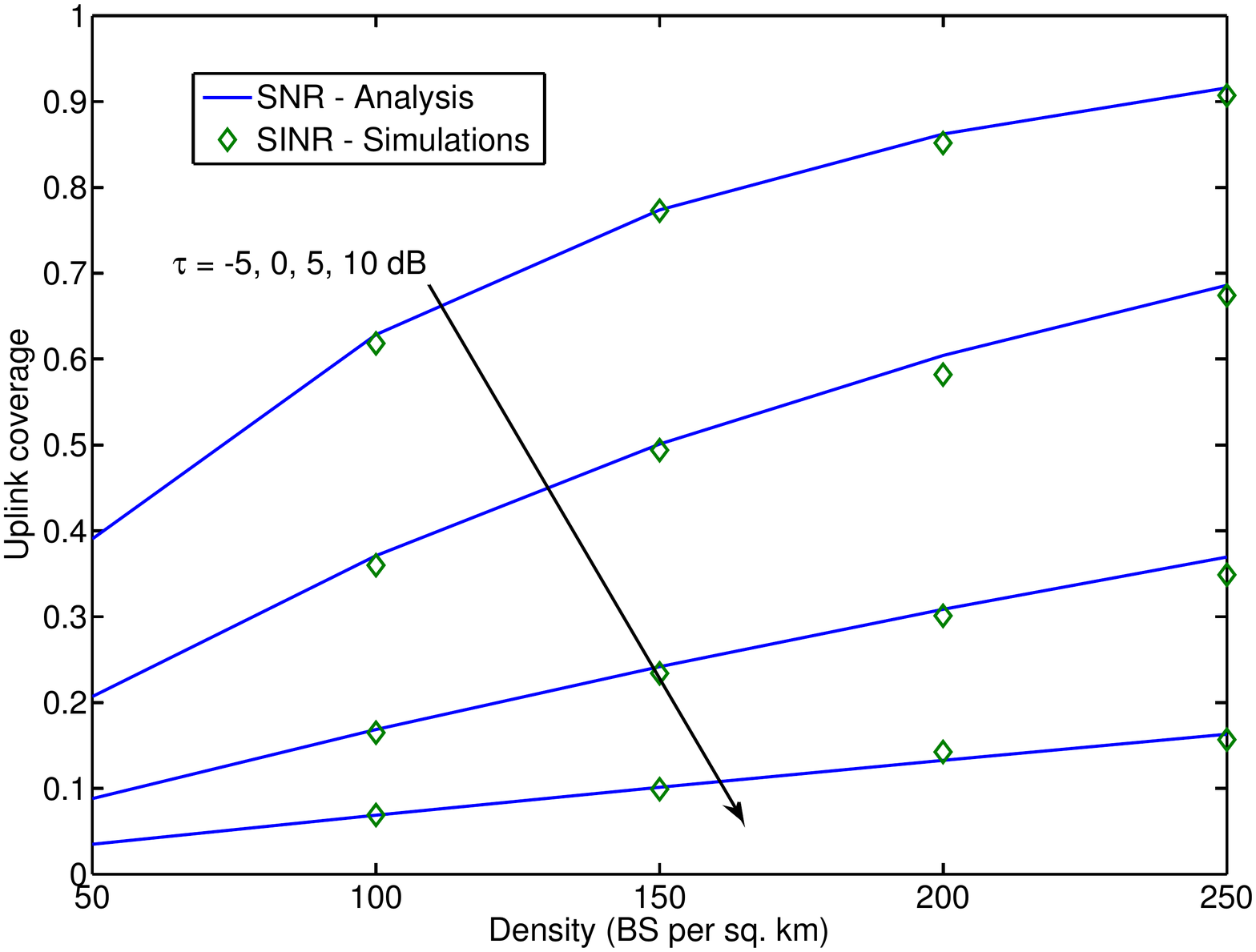}}
\caption{Comparison of  $\SINR$ (analysis) and $\SINR$ (simulation) coverage  with varying  BS density.}
 \label{fig:CovDens}
\end{figure*}

\section{Performance analysis and trends}\label{sec:results}

\subsection{Coverage and density }
The downlink and uplink   coverage for  various thresholds and density of BSs is  shown in Fig. \ref{fig:CovDens}. There are two  major observations:
\begin{itemize}
\item The analytical $\SNR$ tracks the  $\SINR$ obtained from simulation quite well for both downlink and uplink. A small gap ($< 10\%$) is observed for an example downlink case  with larger BS density ($250$ per sq. km)   and a higher threshold of $10$ dB.
\item Increasing the BS density
improves both the downlink and uplink coverage and hence the spectral efficiency -- a trend  in contrast to   conventional interference-limited networks, which are nearly  invariant in $\SINR$ to density.
\end{itemize}
As seen in Sec. \ref{sec:int}, interference is expected to dominate the thermal noise  for very large densities. The trend for downlink $\SINR$ coverage (derived in Appendix \ref{sec:sinrproof} assuming exponential fading power gain) for such  densities is shown in Fig. \ref{fig:SINRDens} for lightly  ($\plos=0.5$) and densely blocked ($\plos=0.11$) scenarios.   All BSs are assumed to be transmitting in Fig. \ref{fig:SINRdensa}, whereas BSs only with a user in the corresponding association cell are assumed to be transmitting in Fig. \ref{fig:SINRdensb}. The coverage for the latter case is obtained by thinning the interference field by   probability $1-\loadpmf(\userdnsty,\dnsty,0)$ (details in Appendix \ref{sec:sinrproof}). As can be seen, ignoring the finite user population, the $\SINR$ coverage saturates, where that saturation is achieved quickly for lightly blocked scenarios--a trend corroborated by the observations of \cite{BaiHea14}.  However, accounting for the  finite user population leads to a  \textit{different} trend, as the increasing density monotonically improves the path loss to the tagged BS, but the interference is (implicitly) capped by the finite user density of  $1000$ per sq. km.

\begin{figure*}
  \centering
\subfloat[All BSs transmit]
{\label{fig:SINRdensa}\includegraphics[width= \columnwidth]{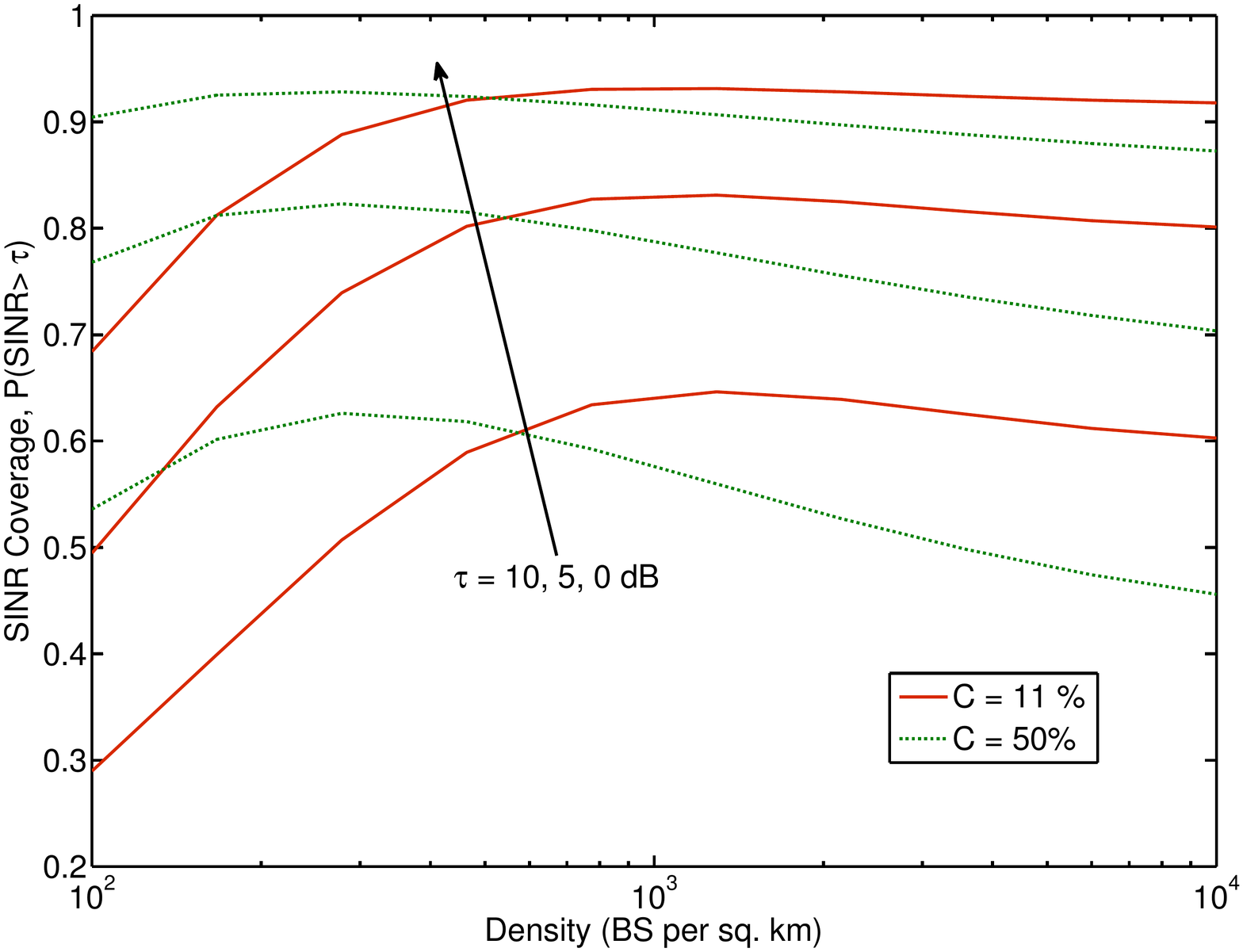}}
\subfloat[BSs with an active user transmit]
{\label{fig:SINRdensb}\includegraphics[width=\columnwidth]{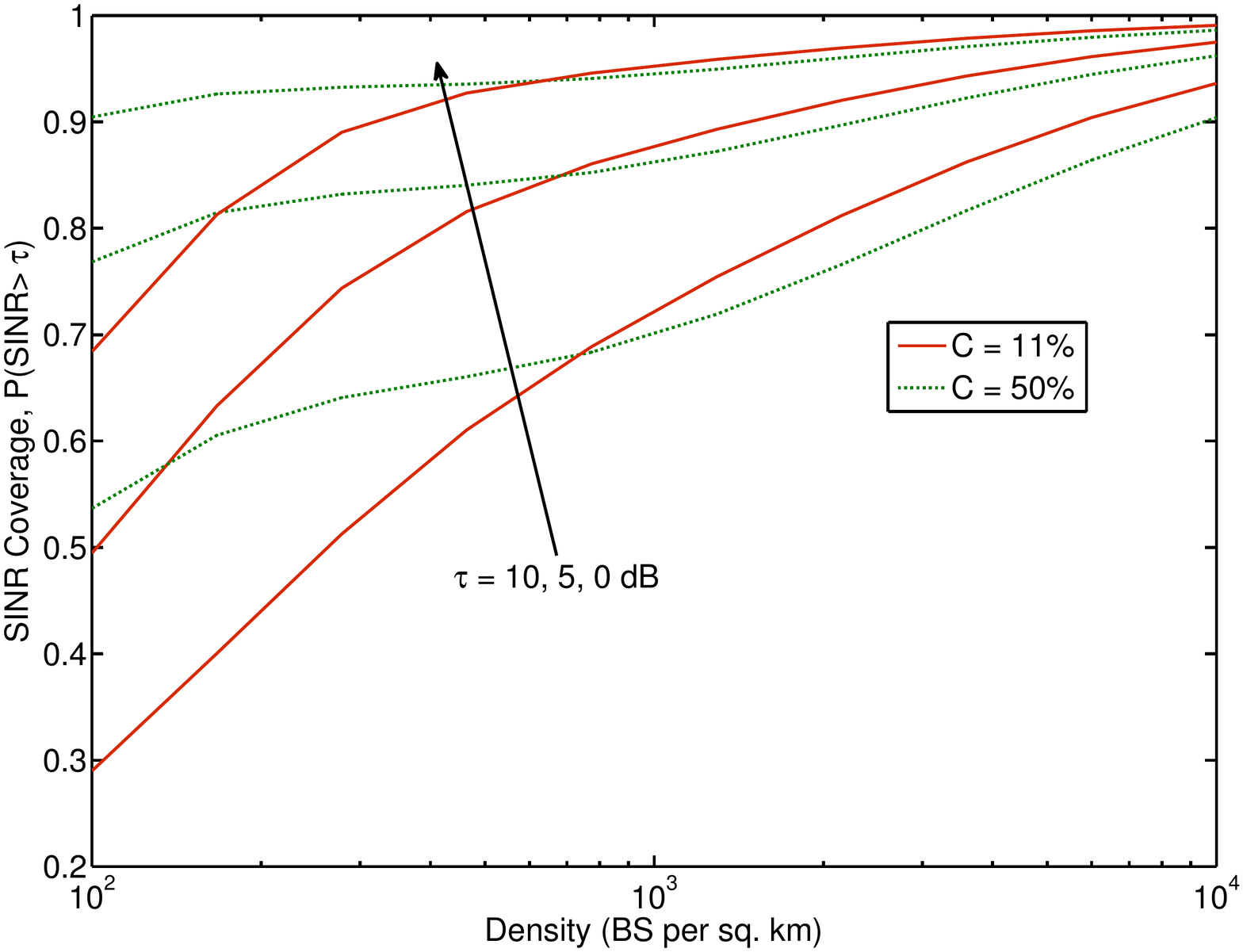}}
\caption{$\SINR$ coverage variation with large densities for different blockage densities.}
 \label{fig:SINRDens}
\end{figure*}

\subsection{Rate coverage}
The variation of downlink and uplink rate distribution with  the  density of infrastructure for a fixed A-BS fraction $\wfrac = 0.5$  is shown in Fig \ref{fig:RateCovDens}. Reducing the cell size by increasing density boosts   the  coverage  and   decreases the load per base station. This dual benefit  improves the overall rate drastically with density as shown in the plot. Further, the good match  of analytical curves to that of simulation also validates the analysis for  uplink and downlink rate coverage.
\begin{figure*}
  \centering
\subfloat[Downlink ]
{\includegraphics[width= \columnwidth]{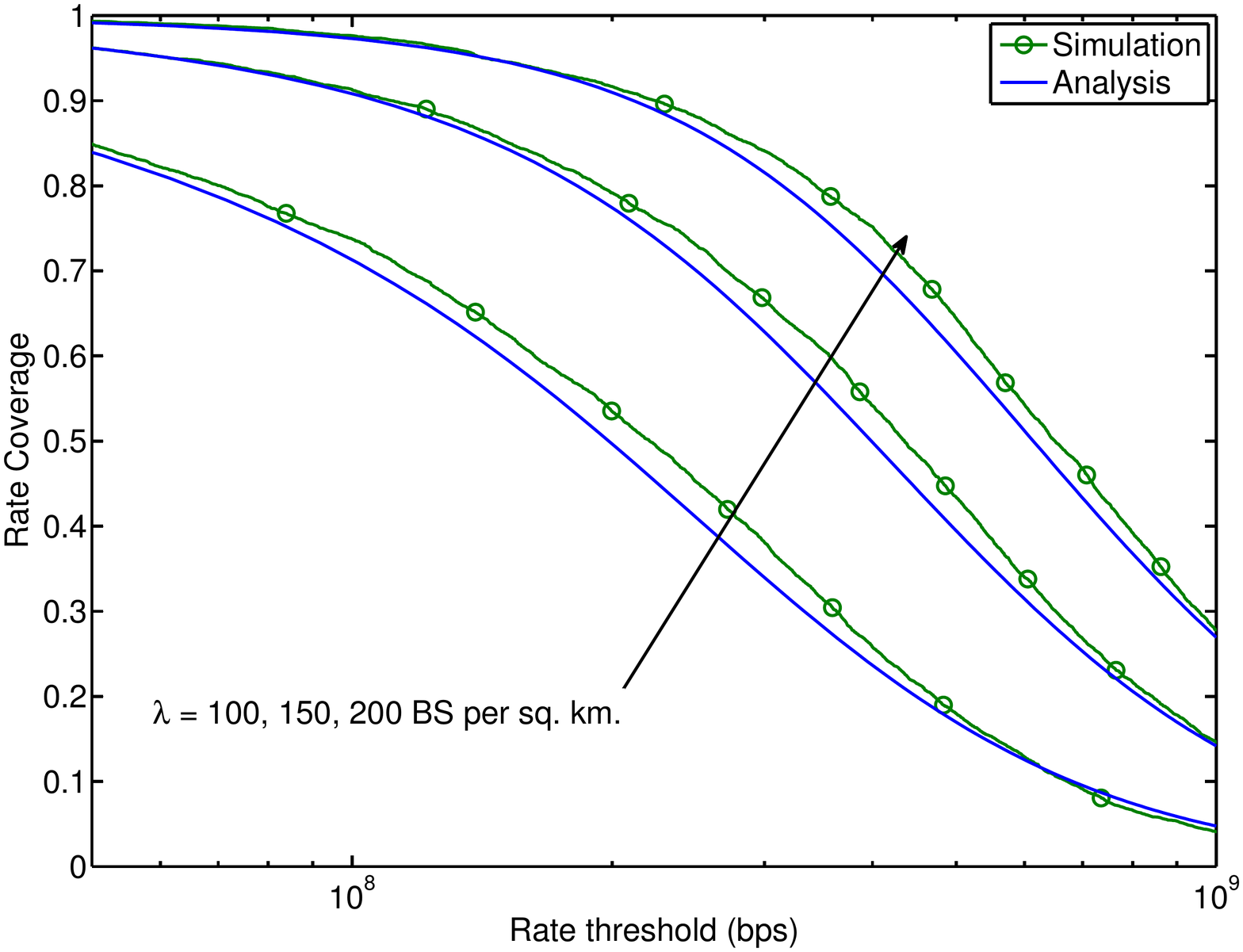}}
\subfloat[Uplink]{\includegraphics[width= \columnwidth]{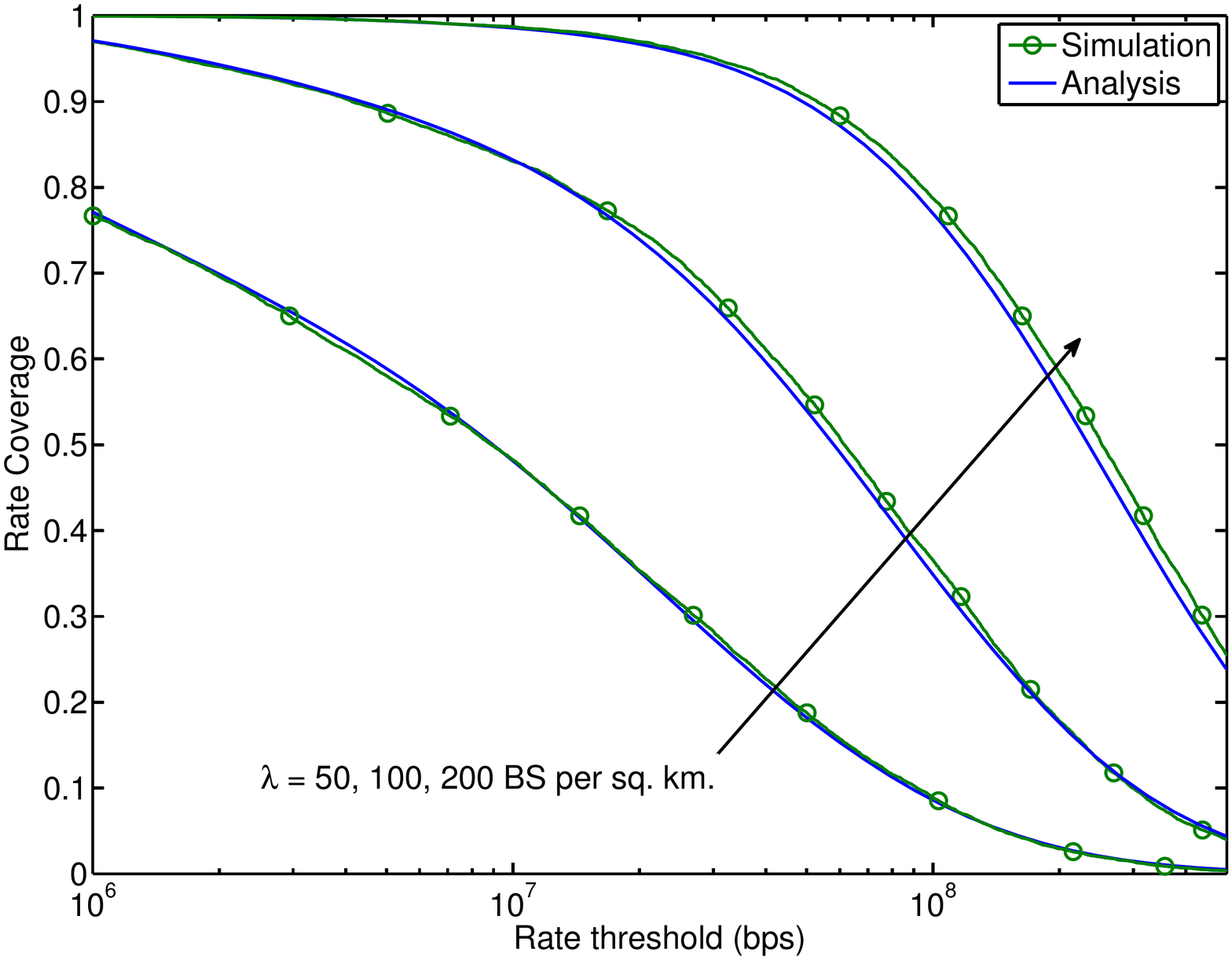}}
\caption{ Downlink and uplink rate coverage for different BS densities and fixed $\wfrac = 0.5$.}
 \label{fig:RateCovDens}
\end{figure*} 

The variation in  rate distribution with bandwidth is shown in Fig. \ref{fig:RateCovBW} for a fixed BS density $\dnsty =100$ BS per sq. km and $\wfrac=1$. Two observations   can be made here: \begin{inparaenum}[1)] \item   median and peak rate increase considerably with the availability of larger bandwidth,   whereas   \item cell edge rates  exhibit  a non-increasing  trend. \end{inparaenum} The latter trend  is due to the low $\SNR$ of the cell edge   users, where the gain from bandwidth is counterbalanced  by the loss in $\SNR$.  Further, if the constraint of $\rate = 0$  for $\SNR< \SINRthresh_0$ is imposed,   cell edge rates would actually decrease as shown in Fig. \ref{fig:RateCovBWMinConstraint} due to the increase in $\pr(\SNR<\SINRthresh_0)$, highlighting the impossibility of  increasing   rates for power-limited  users in  mmWave networks by just increasing the system bandwidth. In fact, it may be counterproductive.
\begin{figure*}
  \centering
\subfloat[$\SINRthresh_0 =0$]
{\includegraphics[width=\columnwidth]{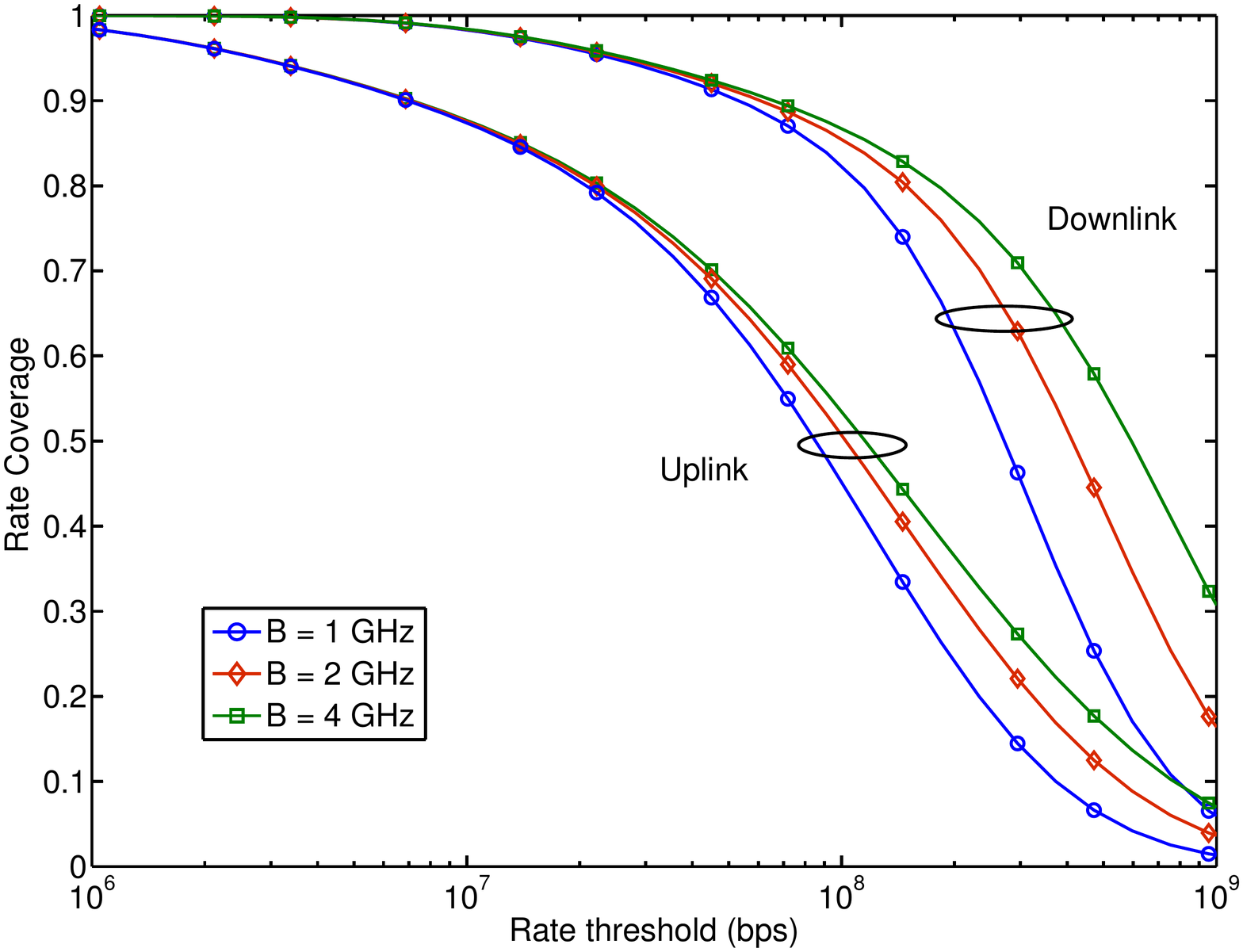}}
\subfloat[$\SINRthresh_0 =0.1$]{\label{fig:RateCovBWMinConstraint}\includegraphics[width=\columnwidth]{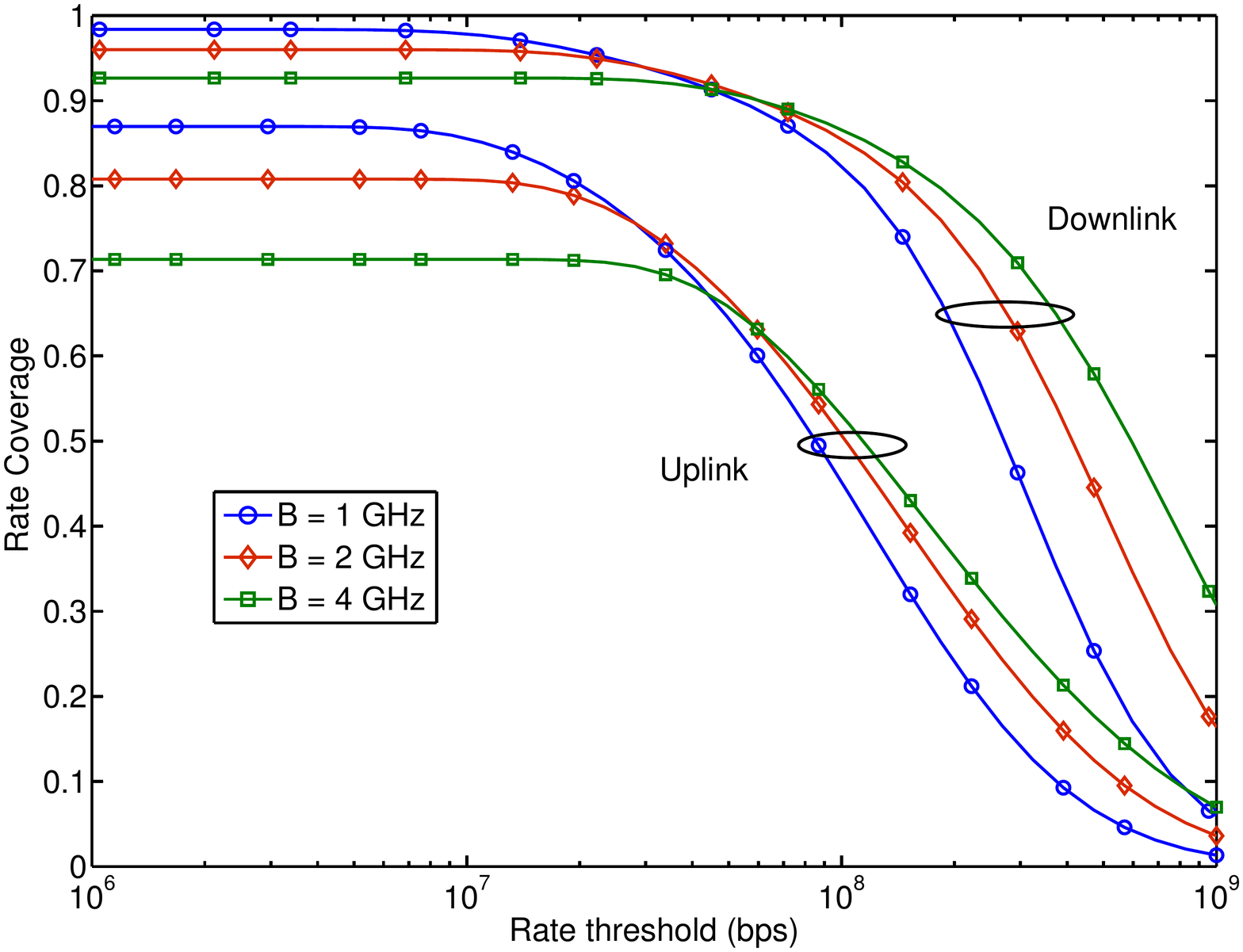}}
\caption{  Effect of bandwidth and min $\SNR$ constraint ($\rate =0$ for $\SNR<\SINRthresh_0$) on rate distribution  for BS density $ 100$   per sq. km.}
 \label{fig:RateCovBW}
\end{figure*} 

\subsection{Impact of co-existence }
The rate distribution of  a  mmWave only network and that of a mmWave-UHF hybrid network is shown in Fig. \ref{fig:HybridDens} for  different mmWave BS  densities and fixed UHF network density of  $\mu = 5$ BS per sq. km. The path loss exponent for the UHF link is assumed to equal $4$ with lognormal shadowing  of $8$ dB standard deviation.  Offloading users from mmWave  to UHF, when the link $\SNR$ drops below $\tmin=-10$ dB improves the rate of edge users significantly,  when the min  $\SNR$ constraint ($\SINRthresh_0=-10$ dB) is imposed. Such  gain from co-existence, however, reduces  with increasing mmWave BS density, as the fraction of ``poor" $\SNR$ users reduces.  Without any such minimum $\SNR$ consideration, i.e., $\SINRthresh_0 =0$, mmWave is   preferred due to the $100$x larger bandwidth.  So the key takeaway here is that users should be offloaded to a co-existing UHF macrocellular network only when reliable communication over the mmWave link is unfeasible.
\begin{figure}
 	\centering
 		\includegraphics[width=\columnwidth]{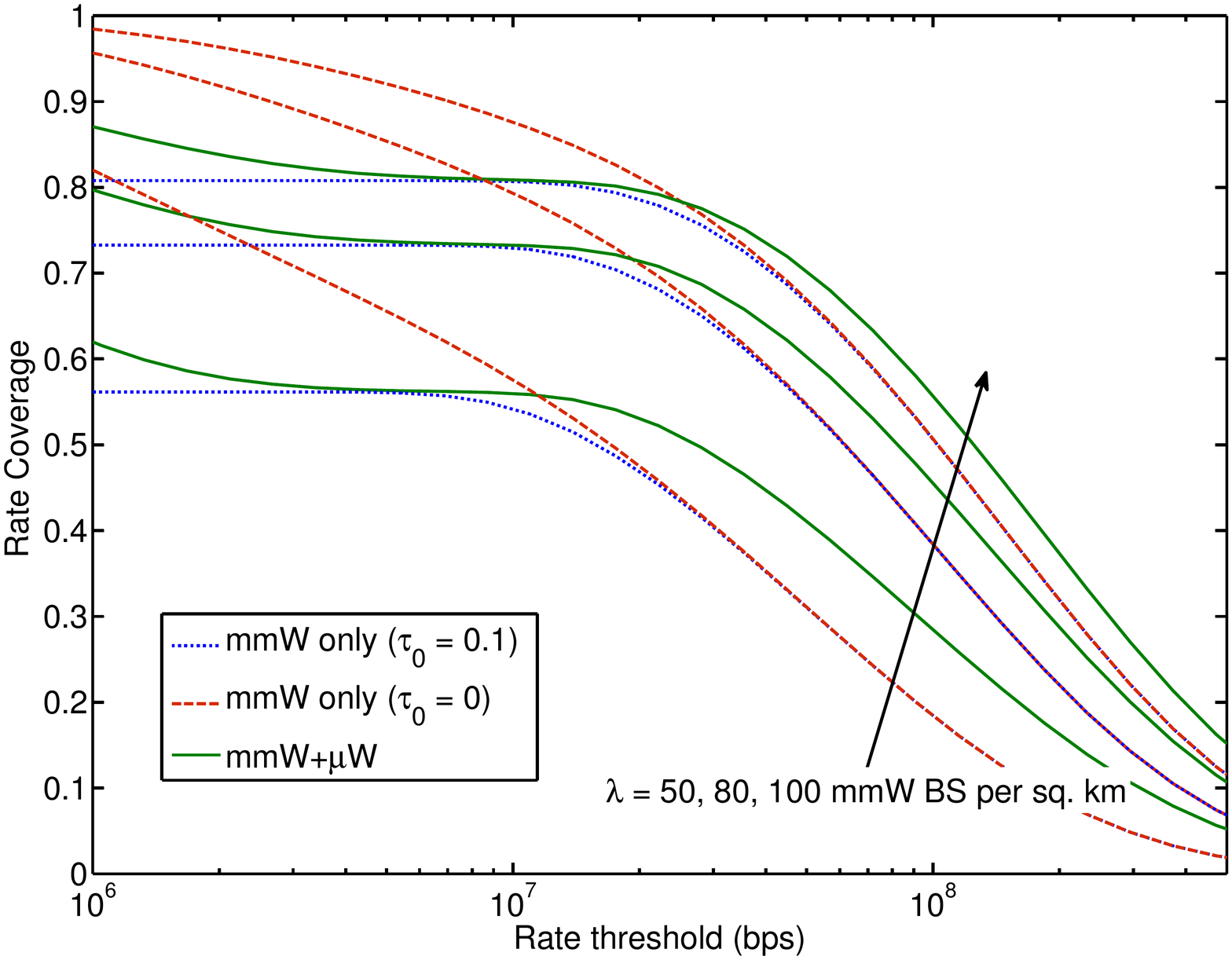}
 		\caption{Downlink rate distribution for mmWave only and hybrid network  for different mmWave BS density and  fixed UHF density of 5 BS per sq. km.}
 	\label{fig:HybridDens}
 \end{figure}

\subsection{ Impact of self-backhauling}\label{sec:sb}
The variation of downlink rate distribution with the fraction of A-BSs $\wfrac$   in the network with  BS density of $100$ and $150$ per sq. km is shown in Fig. \ref{fig:RateCovVarWfrac}. As can be seen,  providing wired backhaul to increasing fraction of  BSs improves the overall rate distribution. However ``diminishing return"   is seen with  increasing $\wfrac$ as the bottleneck shifts from the backhaul to the air interface rate. Further, it can be observed from the plot that different combinations of A-BS fraction and BS density, e.g.  ($\wfrac=0.25$, $\lambda=150$)  and ($\wfrac=0.5$, $\lambda=100$) lead to similar rate distribution.   This is investigated  further using Lemma \ref{lem:rcov} in Fig. \ref{fig:QoSContour}, which characterizes the different contours of ($\wfrac$, $\dnsty$) required to guarantee various median rates $\RATEthresh_{50}$ ($\rcov(\RATEthresh_{50})=0.5$)  in the network. For example, a median rate of $400$ Mbps in the network can be provided by either $\wfrac = 0.9, \dnsty=110$ or $\wfrac = 0.3, \dnsty=200$. Thus, the  key insight from these results is that it is feasible to provide the same QoS (median rate here) in the network by either  providing wired backhaul to a small  fraction of BSs in a dense  network, or by increasing the corresponding fraction  in  a sparser network.
\begin{figure}
 	\centering
 		\includegraphics[width=\columnwidth]{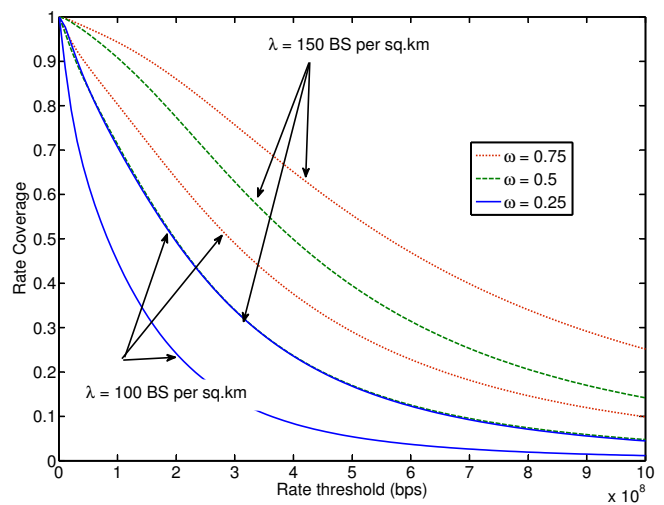}
 		\caption{Rate distribution with variation in $\wfrac$}
 	\label{fig:RateCovVarWfrac}
 \end{figure}
 \begin{figure}
 	\centering
 		\includegraphics[width=\columnwidth]{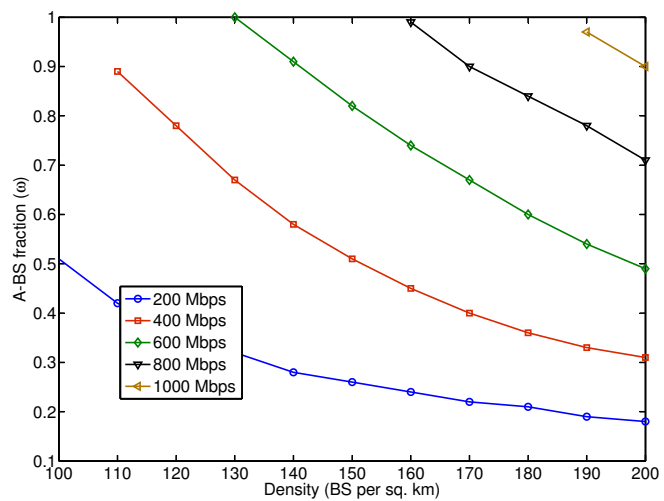}
 		\caption{The required $\wfrac$ for achieving different median rates with varying density}
 	\label{fig:QoSContour}
 \end{figure}
In the above plots,  the actual number of A-BSs in a given area increased with increasing  density for a fixed $\wfrac$, but if  the  density of A-BSs is fixed  ($\gamma$, say) while increasing the density of BSs, i.e., $\wfrac =\frac{\gamma}{\dnsty}$ for some constant $\gamma$, would a similar trend as the earlier plot be seen?  This can be answered by a closer look at Lemma \ref{lem:rcov}.  With increasing $\dnsty$, the rate coverage   of the access link increases shifting  the bottleneck to backhaul link, which in turn is limited by the A-BS density. This notion is formalized in the following proposition.
\begin{prop}
We define the saturation density $\dnsty_{\mathrm{sat}}^\delta(\gamma)$   as the density beyond which only marginal ($\delta\%$ at most) gain in rate coverage can be obtained with A-BS density fixed at $\gamma$, and characterized as 
\small
\begin{align}\label{eq:dsat}
 \arg\inf_{\dnsty}\left\{ \|\scov_d\left(\uRATEthresh\left\{\nRATEthresh 1.28\frac{\userdnsty}{\dnsty} \right\}\right) -1\|\leq \delta/\scov_b\left(\uRATEthresh\left\{\nRATEthresh 1.28^2 \frac{\userdnsty}{\gamma}\right\}\right)\right\}.
\end{align}\normalsize
\end{prop}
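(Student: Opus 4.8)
The plan is to extract $\dnsty_{\mathrm{sat}}^\delta(\gamma)$ from the behaviour of the rate-coverage formula of Lemma~\ref{lem:rcov} when the wired-backhaul fraction is tied to the density by $\wfrac=\gamma/\dnsty$. First I would substitute $\wfrac=\gamma/\dnsty$ into (\ref{eq:rcov}) (equivalently into the mean-load form (\ref{eq:rcovmean}) of Corollary~\ref{cor:meanrcov}) and call the result $\rcov(\RATEthresh;\dnsty)$. The first summand in (\ref{eq:rcov})---the typical user served directly by an A-BS---is multiplied by the association probability $\wfrac=\gamma/\dnsty$, which $\to 0$; hence the density-saturation behaviour is controlled entirely by the second summand, which, because of the $\min$ in (\ref{eq:rate}) together with the assumed independence of the loads and link $\SNR$s, is a product $\scov_b(\cdot)\,\scov_d(\cdot)$ of a backhaul-coverage factor and an access-coverage factor. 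All subsequent manipulations are carried out within the mean-load and PV-area approximations already adopted, i.e. with the loads replaced by the means of Proposition~\ref{prop:loadprop}.

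The heart of the argument is to see how the two factors react to densifying while $\gamma$ stays fixed. For the access factor: as $\dnsty\to\infty$ the tagged-BS user load $\avload{u}=1+1.28\,\userdnsty/\dnsty$ falls to one, the factor $\frac{n+m/\loadfrac}{n+m/\loadfrac-1}$ in the access argument of (\ref{eq:rcov}) tends to one, and $\scov_d(\cdot)$ is itself increasing in $\dnsty$ by Theorem~\ref{thm:pcov}; the residual densification gain on this link is therefore governed by the per-user excess load $1.28\,\userdnsty/\dnsty$, which is what appears as $\scov_d\!\left(\uRATEthresh\{\nRATEthresh\,1.28\,\userdnsty/\dnsty\}\right)$---a term nondecreasing in $\dnsty$ with limit $1$. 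For the backhaul factor: the number of BSs per A-BS, $\avload{b}-1=1.28(1-\wfrac)/\wfrac=1.28(\dnsty-\gamma)/\gamma$, grows linearly in $\dnsty$, but the load the tagged BS's backhaul must actually relay stays bounded because $\loadfrac=\userdnsty/\dnsty$ shrinks at the matching rate; retaining the leading size-bias contribution from both the tagged-BS and the tagged-A-BS load means collapses the backhaul argument to the $\dnsty$-independent constant $1.28^2\,\userdnsty/\gamma$, so this factor converges to $\scov_b\!\left(\uRATEthresh\{\nRATEthresh\,1.28^2\,\userdnsty/\gamma\}\right)$, a function of $\gamma$ only. In the dense regime, then, $\rcov(\RATEthresh;\dnsty)\approx \scov_b\!\left(\uRATEthresh\{\nRATEthresh 1.28^2\userdnsty/\gamma\}\right)\,\scov_d\!\left(\uRATEthresh\{\nRATEthresh 1.28\userdnsty/\dnsty\}\right)$, with the backhaul factor acting as a $\dnsty$-independent ceiling and the access factor climbing monotonically toward $1$.

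Given this, the requirement that densifying further can add at most $\delta$ to the rate coverage becomes: the residual gain, which is $\scov_b(\cdot)$ times $\bigl(1-\scov_d(\cdot)\bigr)$, is at most $\delta$, i.e.
\[
\scov_b\!\left(\uRATEthresh\{\nRATEthresh 1.28^2\userdnsty/\gamma\}\right)\Bigl(1-\scov_d\!\left(\uRATEthresh\{\nRATEthresh 1.28\userdnsty/\dnsty\}\right)\Bigr)\le\delta .
\]
Since $\scov_d\le1$, this rearranges to $\bigl\|\scov_d(\uRATEthresh\{\nRATEthresh 1.28\userdnsty/\dnsty\})-1\bigr\|\le\delta/\scov_b(\uRATEthresh\{\nRATEthresh 1.28^2\userdnsty/\gamma\})$, which is exactly the defining inequality of (\ref{eq:dsat}); and because its left-hand side is continuous, strictly decreasing in $\dnsty$ (the threshold inside $\scov_d$ shrinks and, by Theorem~\ref{thm:pcov}, $\scov_d$ increases with $\dnsty$) and tends to $0$, while the right-hand side is a fixed positive constant, the admissible set is a half-line and its infimum is attained---so $\dnsty_{\mathrm{sat}}^\delta(\gamma)$ is well defined and equals the expression in (\ref{eq:dsat}). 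The proportionality to $\gamma$ claimed in the text is then visible from the fact that $\gamma$ enters only through $1.28^2\userdnsty/\gamma$, which sets the allowable level of $1.28\userdnsty/\dnsty$ on the left-hand side.

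The step I expect to be the main obstacle is making the collapse of the backhaul and access arguments rigorous---i.e. showing that, after inserting the correct size-biased versus typical means of Proposition~\ref{prop:loadprop} (one must track which association cell is area-biased in the branch where the user is served by a BS) and using $\wfrac=\gamma/\dnsty$, $\loadfrac=\userdnsty/\dnsty$, the backhaul argument really does reduce to $1.28^2\userdnsty/\gamma$ and the access argument to $1.28\userdnsty/\dnsty$ up to negligible lower-order terms. Because this reasoning is layered on top of the mean-load and PV-area approximations of Sec.~\ref{sec:analysis}, the resulting $\dnsty_{\mathrm{sat}}^\delta(\gamma)$ is to be read as the saturation density within that approximate model; its accuracy is underwritten by the analysis-versus-simulation match of Sec.~\ref{sec:validation} and by Fig.~\ref{fig:RateCovVarWfrac}.
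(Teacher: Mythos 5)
Your proposal is correct and follows essentially the same route as the paper: starting from the mean-load rate coverage of Corollary~\ref{cor:meanrcov} with $\wfrac=\gamma/\dnsty$, bounding the residual densification gain by $\scov_b(\cdot)\bigl(1-\scov_d(\cdot)\bigr)\le\delta$, and then collapsing the backhaul and access arguments to $1.28^2\userdnsty/\gamma$ and $1.28\,\userdnsty/\dnsty$ via $\dnsty\gg\gamma$ and $\userdnsty\gg\dnsty$, exactly as in the paper's proof. Your added remarks on the vanishing first term and on monotonicity/attainment of the infimum go slightly beyond the paper's terse argument but change nothing essential.
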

\begin{proof}
As the contribution from the access rate coverage can be at most $1$, the saturation density is characterized from Corollary \ref{cor:meanrcov} as 
\footnotesize
\begin{align*}
\dnsty_{\mathrm{sat}}^\delta(\gamma): & \arg\inf_{\dnsty}\Bigg\{ |\scov_d\left(\uRATEthresh\left\{\nRATEthresh \left(1+1.28\frac{\userdnsty}{\dnsty}\right) \frac{2\gamma+1.28(\dnsty-\gamma)}{\gamma+1.28(\dnsty-\gamma)}\right\}\right) -1| \nonumber\\&\leq \delta\scov_b\left(\uRATEthresh\left\{\nRATEthresh\left(1+1.28\frac{\userdnsty}{\dnsty}\right)\left(2+1.28\frac{\dnsty-\gamma}{\gamma}\right)\right\}\right)^{-1}\Bigg\}.
\end{align*}\normalsize
Noticing $\dnsty >> \gamma$ and $\userdnsty >> \dnsty$ leads to  the result.
\end{proof}
From (\ref{eq:dsat}), it is clear that $\dnsty_{\mathrm{sat}}^\delta(\gamma)$ increases with $\gamma$, as RHS decreases. For various values of A-BS density, Fig. \ref{fig:RateCovFixWDens} shows the variation in rate coverage with BS density  for a rate threshold of $100$ Mbps.  As postulated above, the rate coverage saturates with increasing density for each A-BS density.  Also shown is the saturation density obtained from (\ref{eq:dsat}) for a margin $\delta$ of $2\%$. Further, saturation density is seen to be increasing with the A-BS density, as more BSs are   required for access rate to dominate the increasing backhaul rate. 
\begin{figure}
 	\centering
 		\includegraphics[width=\columnwidth]{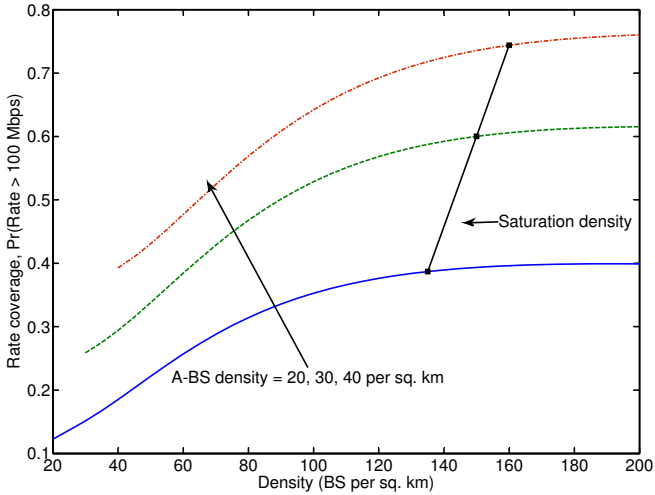}
 		\caption{Rate distribution with variation in BS density but fixed A-BS density. }
 	\label{fig:RateCovFixWDens}
 \end{figure}

\section{Conclusion and future challenges}
A baseline model and analytical framework is presented for characterizing the rate distribution in mmWave cellular networks. To the best of authors' knowledge,  the presented work is the first to integrate self-backhauling among BSs and co-existence with a  conventional  macrocellular  network into  the analysis  of mmWave networks.  We show that bandwidth plays minimal impact on the rate of power and noise-limited cell edge users, whereas increasing the BS density improves the corresponding rates drastically.  This paper also further establishes the noise-limited nature of  large bandwidth narrow beam-width mmWave networks. With self-backhauling, the rate saturates with increasing  BS density for fixed A-BS density, where the corresponding saturation density is directly proportional to the A-BS density.   The explicit characterization of the rate distribution as a function of key system parameters, which we provide, should help advance further  the understanding of such networks and benchmark their performance.

The presented work can be extended in a number of directions. Offloading of  indoor users, which  may not even   receive the signal from outdoor mmWave BSs, to   more stable networks like $4$G or WiFi  could be further investigated.     Allowing multihop backhaul in sparser deployment of A-BSs could also be investigated in future work. The developed analytical framework   also provides tools to analyze other network architectures like device-to-device (D2D) and ad hoc mmWave networks.

\section*{Acknowledgment}
The authors appreciate feedback from  Xinchen Zhang. 
\appendices
\section{}\label{sec:proofpldist}
\begin{IEEEproof}[Derivation of path loss distribution]
We drop the subscript `a' for access in this proof. The propagation process    $\PropPP\coloneqq \pl(X)= \sha(X)^{-1}\|X\|^{\rple(X)}$ on $\real{}$ for $X \in \PPP$, where $\sha \triangleq 10^{-(\plrnd+\fsldb)/10}$,  has the intensity measure  \[
\dnstyr((0,t])  = \int_{\real{2}} \pr(\pl(X) <t) \mathrm{d}X = 2\pi\dnsty \int_{\real{+}} \pr \left( \frac{r^{\rple(r)}}{\sha(r)} < t\right) r\mathrm{d}r.\]
Denote a link  to be of type $j$, where $j =l$ (LOS) and $j=n$ (NLOS) with probability $\plos_{j,\dlos}$ for link length less than $\dlos$ and $\plos_{j,\bar{\dlos}}$ otherwise. Note by construction $\plos_{l,\dlos} + \plos_{n,\dlos}=1$ and $\plos_{l,\bar{\dlos}} + \plos_{n,\bar{\dlos}}=1$. The intensity measure is then
\small
\begin{align*}
&\dnstyr((0,t])    = 2\pi\dnsty \sum_{j \in \{l,n\}} \plos_{j,\dlos}\int_{\real{+}} \pr \left( \frac{r^{\ple_j}}{\sha_j} < t\right)\indic(r < \dlos)r\mathrm{d}r \\&+  \plos_{j,\bar{\dlos}}\int_{\real{+}} \pr \left( \frac{r^{\ple_j}}{\sha_j} < t\right)\indic(r > \dlos)r\mathrm{d}r\\
& = 2\pi\dnsty\,\mathbb{E}\Bigg[ \sum_{j \in \{l,n\}} (\plos_{j,\dlos}-\plos_{j,\bar{\dlos}})\frac{\dlos^2}{2}\indic(\sha_j> \dlos^{\ple_j}/t) \\&+ \plos_{j,\dlos}\frac{(t\sha_j)^{2/\ple_j}}{2}\indic(\sha_j <\dlos^{\ple_j}/t) +  \plos_{j,\bar{\dlos}}\frac{(t\sha_j)^{2/\ple_j}}{2}\indic(\sha_j >\dlos^{\ple_j}/t)\Bigg]\\ &
 =\dnsty\pi \sum_{j \in \{l,n\}}( \plos_{j,\dlos}- \plos_{j,\bar{\dlos}} )\dlos^2 \bar{F}_{\sha_j}(\dlos^{\ple_{j}}/t)  \\&+ t^{2/{\ple_j}}  \left(\plos_{j,\dlos}{\utmoment}_{\sha_j,2/{\ple_j}}(\dlos^{\ple_j}/t) + \plos_{j,\bar{\dlos}} {\ltmoment}_{\sha_j,2/\ple_j}(\dlos^{\ple_j}/t)\right),
\end{align*}
\normalsize
where  $\bar{F}_{\sha} $ denotes the CCDF of $\sha$, and ${\utmoment}_{\sha,n}(x), {\ltmoment}_{\sha,n}(x)$ denote the truncated $\uth{n}$ moment of $\sha$ given by 
$ \utmoment_{\sha,n}(x) \triangleq \int_{0}^{x} s^n f_{S}(s) \mathrm{d}s $ and $ \ltmoment_{\sha,n}(x) \triangleq \int_{x}^{\infty} s^n f_{S}(s) \mathrm{d}s $.  
Since  $\sha$  is a Lognormal random variable $\sim \mathrm{ln}\mathcal{N}(m,\sigma^2)$, where $m  = -0.1\fsldb\ln10$ and $\sigma =0.1\sdpl \ln10$. The intensity measure in Lemma \ref{lem:pldist} is then obtained by using
\begin{align*}
\bar{F}_{\sha}(x) & = \Q\left(\frac{\ln x-m}{\sigma}\right), \\
\utmoment_{\sha,n}(x)  &  =  {\exp(\sigma^2 n^2/2+m n)}\Q\left(\frac{\sigma^2 n - \ln x+m}{\sigma}\right) \\
\ltmoment_{\sha,n}(x) & = {\exp(\sigma^2 n^2/2+m n)}\Q\left(-\frac{\sigma^2 n - \ln x+m}{\sigma}\right).
\end{align*}
Now, since $\PropPP$ is a PPP, the  distribution of path loss to the tagged BS  is then $\pr(\inf_{X\in \PPP} \pl(X)> t) = \exp(-\dnstyr((0,t]))$.
\end{IEEEproof}

\section{}\label{sec:proofulsinr}
\begin{IEEEproof}[Uplink $\SNR$ with fractional power control]
With fractional power control, a user transmits with a power $\power_u = \power_0\pl_a^\pcf$ that partially compensates for path  loss $\pl$,  where $0\leq \pcf \leq 1$  is the power control fraction (PCF) and $\power_0$ is the open loop power parameter. 
In this case, the uplink $\SNR$ CCDF is 
\begin{align*}
\pr(\SNR_u > \SINRthresh) &  = \pr \left(\frac{ \power_0 \antgainmax  \pl_a(\sbs)^{\pcf-1}}{\noisepower} > \SINRthresh\right)\\&
 = 1 - \exp\left(-\dnsty \ndnstyr_a\left(\left(\frac{ \power_0\antgainmax}{\SINRthresh\noisepower}\right)^{1/(1-\pcf)}\right)\right).
\end{align*}

\end{IEEEproof}

\section{}\label{sec:sinrproof}
\begin{IEEEproof}[$\SINR$ distribution]
Having derived the intensity measure of $\PropPP$ in Lemma \ref{lem:pldist}, the distribution of $\SINR$ can be characterized on the same lines as 
\cite{BlaKarKee12}. The key steps are highlighted below for completeness.
\begin{align*}
\pr(\SINR > \SINRthresh )  = &\pr\left(\frac{\power_b  \antgainmax  \pl(\sbs )^{-1}}{\sum_{X \in \BSP \setminus\{\sbs\}}\power_b\again_X\pl(X)^{-1} +\noisepower} > \SINRthresh\right) 
\\ = & \pr\left(\IF +\frac{\noisepower\pl(\sbs )}{\power_b   \antgainmax }< \frac{1}{\SINRthresh}\right)\\
 = \int_{l>0}&\pr\left(\IF + \frac{\noisepower l}{\power_b   \antgainmax } < \frac{1}{\SINRthresh}|\pl(\sbs )=l\right)f_{\pl(\sbs)}(l)\mathrm{d} l
\end{align*}
where $\IF = \frac{\pl(\sbs)}{\antgainmax}\sum_{X \in \BSP \setminus\{\sbs\}}\again_X\pl(X)^{-1}$ and the distribution  of $\pl(\sbs)$  is derived as 
\begin{equation}\label{eq:pdfpl}
f_{\pl(\sbs)}(l) = -\frac{\mathrm{d}}{\mathrm{d}l}\pr(\pl(\sbs)>l) =\dnsty \exp(-\dnsty\ndnstyr(l))\ndnstyr^{'}(l).
\end{equation}
 The conditional CDF required  for the above computation is derived from the the conditional Laplace transform given below using the Euler's characterization \cite{AbaWar95}
\begin{align*}
\mgf_{\IF,l}(z)&=\expect{\exp(-z\IF)|\pl(\sbs)=l)}\\& = \exp\left(-\cexpect{\int_{u>l}(1-\exp(-z l \again/u))\dnstyr(\mathrm{d}u)}{\again}\right),
\end{align*}
where $\dnstyr(\mathrm{d}u)$ is given by (\ref{eq:mdt}).  

The inverse Laplace transform  calculation  required in the above  derivation could get  computationally intensive in certain cases and may render the analysis intractable. However,  introducing Rayleigh small scale fading $\chanl\sim \exp(1)$, on each link improves the tractability of the analysis as shown below. 
Coverage with fading is 
\footnotesize{
\begin{align}
&\pr\left(\frac{\power_b  \antgainmax\chanl_{\sbs} \pl(\sbs)^{-1}}{\sum_{X \in \BSP \setminus\{\sbs\}}\power_b\again_X\chanl_X\pl(X)^{-1} +\noisepower} > \SINRthresh\right) \nonumber\\
= &\expect{\exp\left(-\frac{\SINRthresh\noisepower}{\power_b  \antgainmax}\pl(\sbs)-\SINRthresh\pl_{\sbs }\sum_{X \in \BSP \setminus\{\sbs\}} \frac{\again_X}{\antgainmax} \chanl_X\pl(X)^{-1}\right)}\nonumber\\
 \overset{(a)}{=} & \int_{l>0} \exp\left(-\frac{\SINRthresh\noisepower}{\power_b  \antgainmax}l -\dnsty\cexpect{\int_{u>l}\frac{\ndnstyr^{'}(u)\mathrm{d}u}{u(zl)^{-1}+1}}{z}\right)f_{\pl_{\sbs}}(l)\mathrm{d}l\nonumber\\
 \overset{(b)}{=} & \dnsty\int_{l>0} \exp\left(-\frac{\SINRthresh\noisepower}{\power_b  \antgainmax}l-\dnsty\ndnstyr(l)\cexpect{\frac{1}{1 +z}}{\again} \right)\\&\times \left( -\dnsty\cexpect{\int_{0}^{\frac{z}{z+1}}\ndnstyr\left\{zl\left(\frac{1}{u}-1\right)\right\}\mathrm{d}u}{\again}\right)\ndnstyr^{'}(l)\mathrm{d}l\label{eq:sinrfading},
\end{align}
}\normalsize
where $z=\frac{\SINRthresh \again}{\antgainmax}$, (a) follows using the Laplace functional of point process $\PropPP$, (b) follows using integration by parts along with (\ref{eq:pdfpl}). 

The above derivation assumed all BSs to be transmitting, but since user population is finite, certain BSs may not have a user to serve with probability $1-\loadpmf(\userdnsty,\dnsty,0)$. This is incorporated in the analysis by modifying $\dnsty \to \dnsty (1- \loadpmf(\userdnsty,\dnsty,0))$ in (a) above. 
\end{IEEEproof}

\vspace*{-2\baselineskip}
\begin{IEEEbiography}[{\includegraphics[width=1in,height=1.25in,clip,keepaspectratio] {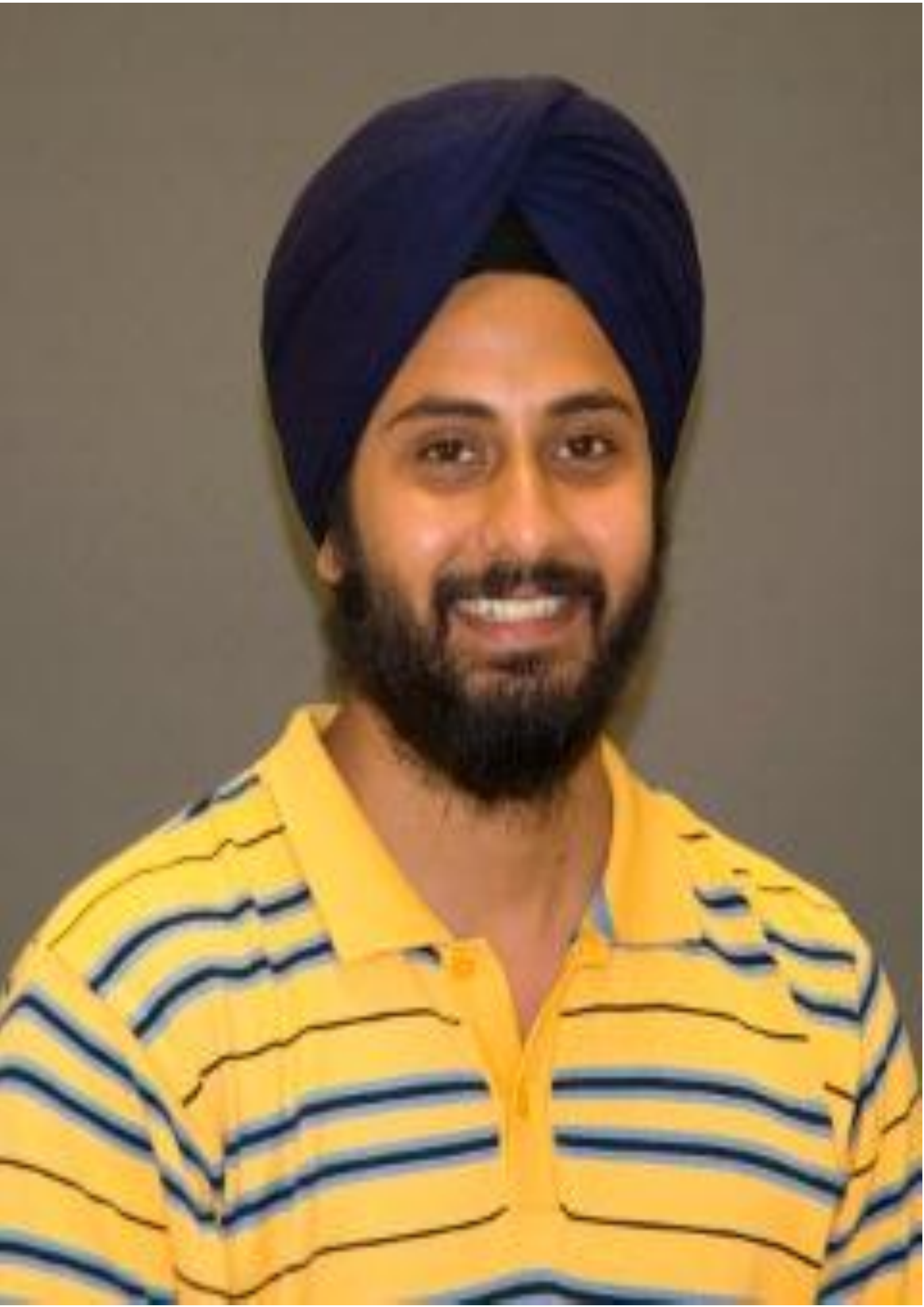}}]{Sarabjot Singh}
(S'09, M'15)  received the B. Tech.  in Electronics and Communication Engineering from Indian Institute of Technology Guwahati, India, in 2010 and  the M.S.E and Ph.D. in Electrical Engineering from University of Texas at Austin.  He is currently a Senior Researcher at Nokia Technologies, Berkeley, USA, where he is involved in the research and standardization of the next generation Wi-Fi systems. He has held industrial internships at Alcatel-Lucent Bell Labs in Crawford Hill, NJ; Intel Corp. in Santa Clara, CA; and Qualcomm Inc. in San Diego, CA. Dr. Singh was the recipient of the  President of India Gold Medal  in 2010 and the ICC Best Paper Award in 2013.  
\end{IEEEbiography}

\begin{IEEEbiography}[{\includegraphics[width=1in,height=1.25in,clip,keepaspectratio] {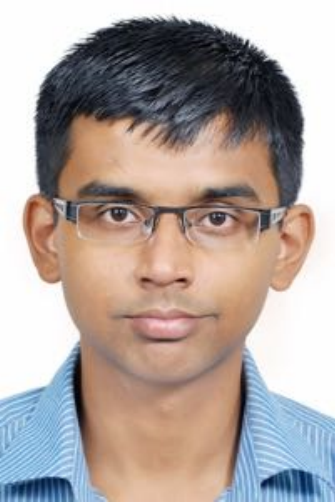}}]{Mandar Kulkarni}
(S'13) received his B.Tech degree from the Indian Institute of Technology Guwahati in 2013 and was awarded the president of India gold medal for best academic performance. He is currently working towards his PhD in Electrical Engineering at the University of Texas at Austin. His research interests are broadly in the field of wireless communication, with current focus on modeling and analysis of cellular networks operating at millimeter wave frequencies. He has held internship positions at Nokia Networks, Arlington Heights, IL, U.S.A.; Technical University, Berlin, Germany and Indian Institute of Science, Bangalore, India in the years 2014, 2012 and 2011, respectively.
\end{IEEEbiography}

\begin{IEEEbiography}[{\includegraphics[width=1in,height=1.25in,clip,keepaspectratio] {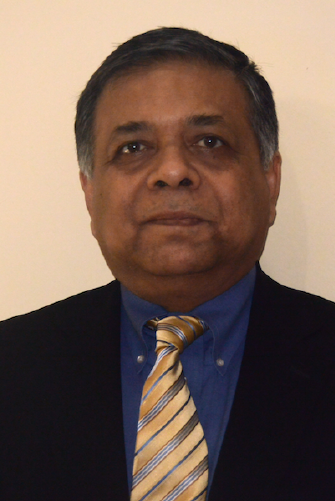}}]{Amitabha  Ghosh}
 joined Motorola in 1990 after receiving his Ph.D in Electrical Engineering from Southern Methodist University, Dallas.  Since joining Motorola he worked on multiple wireless technologies starting from IS-95, cdma-2000, 1xEV-DV/1XTREME, 1xEV-DO, UMTS, HSPA, 802.16e/WiMAX/802.16m, Enhanced EDGE and 3GPP LTE. Dr. Ghosh has 60 issued patents and numerous external and internal technical papers.  Currently, he is Head, North America Radio Systems Research within the Technology and Innovation office of Nokia Networks. He is currently working on 3GPP LTE-Advanced and 5G technologies. His research interests are in the area of digital communications, signal processing and wireless communications. He is a Fellow of IEEE and co-author of the book titled “Essentials of LTE and LTE-A”. 
\end{IEEEbiography}

\begin{IEEEbiography}[{\includegraphics[width=1in,height=1.25in,clip,keepaspectratio]{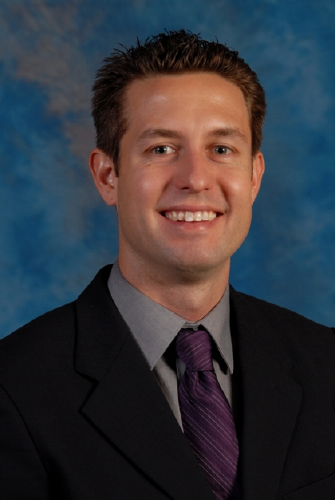}}]{Jeffrey Andrews}
(S'98, M'02, SM'06, F'13)   received the B.S. in Engineering with High Distinction from Harvey Mudd College, and the M.S. and Ph.D. in Electrical Engineering from Stanford University.  He is the Cullen Trust Endowed Professor ($\#$1) of ECE at the University of Texas at Austin, Editor-in-Chief of the IEEE Transactions on Wireless Communications, and Technical Program Co-Chair of IEEE Globecom 2014. He developed Code Division Multiple Access systems at Qualcomm from 1995-97, and has consulted for entities including Verizon, the WiMAX Forum, Intel, Microsoft, Apple, Samsung, Clearwire, Sprint, and NASA.  He is a member of the Technical Advisory Boards of Accelera and Fastback Networks, and co-author of the books Fundamentals of WiMAX (Prentice-Hall, 2007) and Fundamentals of LTE (Prentice-Hall, 2010).  

Dr. Andrews received the National Science Foundation CAREER award in 2007 and has been co-author of ten best paper award recipients: ICC 2013, Globecom 2006 and 2009, Asilomar 2008, European Wireless 2014, the 2010 IEEE Communications Society Best Tutorial Paper Award, the 2011 IEEE Heinrich Hertz Prize, the 2014 EURASIP Best Paper Award, the 2014 IEEE Stephen O. Rice Prize, and the 2014 IEEE Leonard G. Abraham Prize.  He is an IEEE Fellow and an elected member of the Board of Governors of the IEEE Information Theory Society. 

\end{IEEEbiography}
\vfill

\end{document}